\documentclass{article}

\usepackage{url}		

\usepackage{amssymb}
\usepackage{amsthm}
\usepackage{amsmath}

\usepackage{algorithm2e}

\usepackage{float}

\usepackage{mathtools}

\usepackage{stmaryrd}

\usepackage{amsmath,pict2e}

\makeatletter
\newcommand{\bigcomp}{%
  \DOTSB
  \mathop{\vphantom{\sum}\mathpalette\bigcomp@\relax}%
  \slimits@
}
\newcommand{\bigcomp@}[2]{%
  \begingroup\m@th
  \sbox\z@{$#1\sum$}%
  \setlength{\unitlength}{0.9\dimexpr\ht\z@+\dp\z@}%
  \vcenter{\hbox{%
    \begin{picture}(1,1)
    \bigcomp@linethickness{#1}
    \put(0.5,0.5){\circle{1}}
    \end{picture}%
  }}%
  \endgroup
}
\newcommand{\bigcomp@linethickness}[1]{%
  \linethickness{%
      \ifx#1\displaystyle 2\fontdimen8\textfont\else
      \ifx#1\textstyle 1.65\fontdimen8\textfont\else
      \ifx#1\scriptstyle 1.65\fontdimen8\scriptfont\else
      1.65\fontdimen8\scriptscriptfont\fi\fi\fi 3
  }%
}

\makeatother

\usepackage{gensymb}

\usepackage{tikz}
\usetikzlibrary{automata,positioning}

\usepackage{multicol}

\usepackage{graphicx}
\usepackage{dsfont}

\def\Z{\mathbb Z} 
\def\N{\mathbb N} 
\def\R{\mathbb R} 

\def\01{0\!\!\!1}

\def\dy{\displaystyle}

\def\Flid{\text{\rm Flid}}

\def\Init{\text{\rm Init}}
\def\Last{\text{\rm Last}}

\def\suc{\text{\rm succ}}
\def\pre{\text{\rm pred}}

\def\Godot{\underline{\odot}}
\def\Fact{\text{\rm Fact}}
\def\mA{\mathcal{A}}
\def\mB{\mathcal{B}}

\def\mF{\mathcal{F}}
\def\mG{\mathcal{G}}
\def\mH{\mathcal{H}}
\def\mI{\mathcal{I}}

\def\mK{\mathcal{K}}

\def\mM{\mathcal{M}}

\def\mP{\mathcal{P}}

\def\mV{\mathcal{V}}
\def\mW{\mathcal{W}}
\def\mX{\mathcal{X}}
\def\mY{\mathcal{Y}}
\def\mZ{\mathcal{Z}}

\usepackage{multirow,bigdelim}

\theoremstyle{plain}
\newtheorem{theorem}{Theorem}[subsection]
\newtheorem{proposition}{Proposition}[subsection]
\newtheorem{corollary}{Corollary}[subsection]

\theoremstyle{definition}
\newtheorem{definition}{Definition}[subsection]

\theoremstyle{remark}
\newtheorem*{remark}{Remark}

\usepackage{arydshln}
\usepackage{xcolor}

\usepackage[margin=1cm]{caption}
\usepackage[font=scriptsize]{caption} 

\title{Reflective Numeration Systems I: a Global Standpoint}

\author{Beno\^{\i}t Rittaud\footnote{Universit\'e Sorbonne Paris Nord, LAGA, CNRS, UMR 7539, F-93430 Villetaneuse, France.
\texttt{rittaud@math.univ-paris13.fr}}}

\begin{document}

\maketitle

\begin{abstract} We present a framework to generalize the standard $b$-ary Gray code to get the $k$-bonacci ones obtained in \cite{Bernini2013} as well as many others by using theoretical tools that allow to make calculations on lists. We introduce the notion of {\em $\mZ$-Gray product}, from which we deduce sequences of lists of finite words avoiding a predefinite list $\mZ$ of factors and which satisfy a power-associativity property as well a generalizations of the classical flipping digit property.
\end{abstract}

\noindent MSC2020: primary: 68R15; 94B25; secondary: 05A05; 11B39.

\noindent Keywords: Gray code; numeration system; $b$-ary expansion; Zeckendorf-Fibonacci numeration system; $k$-bonacci sequences; linear recurring sequence.\\

\section{Introduction}\label{Introduction}

\subsection{Informal presentations of standard $b$-ary and $k$-bonacci Gray codes}\label{IntroGrayCodesConnus}

In 1953, Franck Gray \cite{Gray} presented the first algorithm designed to produce the infinite list of all finite binary words, each appearing exactly once, in such an order that each element differs from its successor by a single digit. Such a list of words is said to {\em satisfy the flipping digit property} (or, when the notion is extended later on in this paper, the {\em $\{1\}$-flipping digit property}). Historically, the concept appeared indirectly more half of a century before in the {\em baguenaudier}, a puzzle investigated by \'Edouard Lucas in the context of recreational mathematics \cite{Hinz}.

Gray's list can be presented inductively as follows: start with the list made of the single empty word, then, the $2^{\ell}$ first elements of the list being written for some $\ell\in\N$, write them a second time in reversed order, left-concatenate a $1$ to each of them, then left-concatenate a $0$ to each element of the first half (Table \ref{GrayTable}).
When $\ell$ goes to infinity, this construction leads to an infinite list $\mG$ that contains all finite binary words, each exactly once (assuming the equivalence between a word and the same word with a $0$ left-appended to it). Hence, the natural bijection that maps $\N$ onto $\mG$ is an alternative one-to-one correspondence between natural integers and finite binary words, and from Gray's construction it is easy to prove that $\mG$ satisfies the flipping digit property.

\begin{table}[h]
\centering
\begin{tabular}{cccc}
0&0&0&0\\
\cdashline{4-4}[2pt/2pt]
0&0&0&1\\
\cdashline{3-3}[2pt/2pt]\cline{4-4}
0&0&1&1\\
0&0&1&0\\
\cdashline{2-2}[2pt/2pt]\cline{3-4}
0&1&1&0\\
0&1&1&1\\
0&1&0&1\\
0&1&0&0\\
\cdashline{2-2}[2pt/2pt]\cline{3-4}
1&1&0&0\\
1&1&0&1\\
1&1&1&1\\
1&1&1&0\\
1&0&1&0\\
1&0&1&1\\
1&0&0&1\\
1&0&0&0\\
\end{tabular}
\caption{The first iterations for the standard Gray code.}
\label{GrayTable}
\end{table}

This {\em standard binary Gray code} is now regarded as a particular case of a more general combinatorial framework: a metric denumerable set being given, a  {\em Gray code} is an ordering of this set such that the distance between any two consecutive elements is ``small'' in some predefined sense, possibly with additional constraints on the ordering. To stick to finite binary words, let us mention the one of defining a Gray code such that the flipping digit index is as evenly distributed as possible (whereas, in the standard Gray code, the rightmost digits change more often than the others), with a first result obtained by Tibor Bakos \cite{Adam} and a general solution given by Donald Knuth in \cite[Sec 7.2.1.1]{Knuth}. The reader will find in the survey by Torsten M\"utze \cite{Mutze} a lot of other possible constraints investigated by various authors: {\em weight-monotonicity} (forcing the number of $1$s in the successive words to be essentially increasing), {\em $t$-antipodal} constraint (forcing the distance between a binary string and its complement to be exactly $t$), or generalizations to strings on more general alphabets.

In the present paper, the denomination of {\em Gray code} will concern only lists of finite words on a given finite alphabet, equipped with the Hamming distance.

The first generalization of the standard binary Gray code was introduced by Flores \cite{Flores} as the reflected number systems, here referred as the {\em standard $b$-ary Gray code}. It extends Gray's original idea in a very natural way to words on the alphabet $\llbracket 0,b\llbracket$, where $b\geqslant 2$ is a given integer. It starts with an initial list containing only the empty word $\varepsilon$, then iterates the following process: the current list being given, it is written a second time in reverse order, then a third time in the initial order, etc., alternating between increasing and initial order until $b$ copies are written down. Then, to the words of each copy of the list is left-concatenated a letter, the same for each copy, from $0$ to $b-1$ (Table \ref{BaryTable}).

\begin{table}[h]
\centering
\begin{tabular}{ccc}
0&0&0\\
\cdashline{3-3}[2pt/2pt]
0&0&1\\
0&0&2\\
\cdashline{2-2}[2pt/2pt]\cline{3-3}
0&1&2\\
0&1&1\\
0&1&0\\
0&2&0\\
0&2&1\\
0&2&2\\
\cdashline{1-1}[2pt/2pt]\cline{2-3}
1&2&2\\
1&2&1\\
1&2&0\\
1&1&0\\
1&1&1\\
1&1&2\\
1&0&2\\
1&0&1\\
1&0&0\\
2&0&0\\
2&0&1\\
2&0&2\\
2&1&2\\
2&1&1\\
2&1&0\\
2&2&0\\
2&2&1\\
2&2&2
\end{tabular}
\caption{The first iterations for the standard ternary Gray code.}
\label{BaryTable}
\end{table}

Further generalizations were made to provide lists of words still satisfying the flipping digit property but containing only words avoiding some given factor. The idea is to replicate in a reflection context the properties of the list of codage of numbers given by numeration systems like the Zeckendorf one, or more general ones defined by linear recurring sequences of positive integers. Bernini {\em et al.} \cite{Bernini2013} presented such a Gray code that avoid the factor $100$ (Table \ref{FiboTable}), in which the mirroring process does not apply to the full list but only to the sublist following the penultimate mirror. 

\begin{table}[h]
\begin{center}
\begin{tabular}{ccccc}
 0 & 0 & 0 & 0 & 0 \\
\cdashline{5-5}[2pt/2pt]
 0 & 0 & 0 & 0 & 1 \\
\cdashline{4-4}[2pt/2pt]\cline{5-5}
 0 & 0 & 0 & 1 & 1 \\
 0 & 0 & 0 & 1 & 0 \\
\cdashline{3-3}[2pt/2pt]\cline{4-5}
 0 & 0 & 1 & 1 & 0 \\
 0 & 0 & 1 & 1 & 1 \\
 0 & 0 & 1 & 0 & 1 \\
\cdashline{2-2}[2pt/2pt]\cline{3-5}
 0 & 1 & 1 & 0 & 1 \\
 0 & 1 & 1 & 1 & 1 \\
 0 & 1 & 1 & 1 & 0 \\
 0 & 1 & 0 & 1 & 0 \\
 0 & 1 & 0 & 1 & 1\\
\cdashline{1-1}[2pt/2pt]\cline{2-5}
 1&1&0&1&1\\
 1&1&0&1&0\\
 1&1&1&1&0\\
 1&1&1&1&1\\
 1&1&1&0&1\\
 1&0&1&0&1\\
 1&0&1&1&1\\
 1&0&1&1&0
\end{tabular}
\end{center}
\caption{The beginning of the Fibonacci Gray code of Bernini {\em et al.} \cite{Bernini2013}. Each new mirroring only applies to the lower sublist limited by the penultimate mirror.}
\label{FiboTable}
\end{table}

Other results were obtained by various authors, going back at least to Squire \cite{Squire}, who investigated the possibility of providing a Gray code for lists of words on $\llbracket 0,b\llbracket$ of length exactly $n$ and avoiding some given factor, showing that the possibility of such a Gray code depends on the parity of $b$ and is linked to the autorrelation of the given forbidden factor. More recent approachs are those of Vajnovszki \cite{Vajnovszki}, and Bernini {\em et al.} \cite{Bernini2013,Bernini2015}. The last one focuses on the notion of forbideen factor {\em inducing zero periodicity}, proving that this property is a sufficient hypothesis for such a forbidden factor to allow the existence of a Gray code. In Barcucci {\em et al.} \cite{Barcucci2113,BarcucciWords2021,Barcucci2022} are investigated Gray codes obtained by the factors defined by some linear recurring sequences.

\subsection{Lists building}\label{IntroOperations}

Since the standard $b$-ary and $k$-bonacci cases, in the usual numeration system as well as in the Gray code contexts, are important cases for the settings of the present paper, here is a short presentation of them that complements the previous explanations, with some additional notations. All of this will be made more complete and precise in Section \ref{Basic}.

For any integer $b\geqslant 2$, the {\em standard $b$-ary numeration system} can be seen as the set $\mA^*$ of finite words on the alphabet $\mA\coloneqq\llbracket 0,b\llbracket$, made a list by writing its elements in increasing order for the radix order. To obtain it in a recursive way, consider the sequence of lists of words $(\mB_\ell)_{\ell\in\N}$ defined by $\mB_0=\{\varepsilon\}$ (where $\varepsilon$ is the empty word) and, for any $\ell\in\N^*$, $\mB_\ell$ is the list increasing for the radix order and made of all possible words obtained by left-appending a letter of $\mA$ to a word of $\mB_\ell$. Later on (Section \ref{Lists}), this will be written $\mB_{\ell+1}\coloneqq\sum_{\alpha\in\mA}\alpha\mB_{\ell}$ (the sum being noncommutative here, corresponding to a union of lists made by increasing values of $\alpha$), or, in a factorized way, as $\mB_{\ell+1}=\mA\boldsymbol{\cdot}\mB_\ell$ (see Section \ref{ProductLists}). The sequence $(\mB_\ell)_{\ell\in\N}$ is an increasing sequence of lists, meaning that the list $\mB_\ell$ is the beginning of $\mB_{\ell+1}$ for any $\ell\in\N$, where words like $w$ and $0w$ are identified. Hence, this sequence converges in a natural sense to an infinite list $\mB=(b_n)_{n\in\N}$. Eventually, it is easy to prove that, for any $n\in\N$, $b_n$ is the usual $b$-ary expansion of $n$. Note that this approach to define the $b$-ary numeration system for integers can be regarded as {\em global}: lists of words are built up by some process from which the flipping digit property follows immediately as well as the fact that all desired words are obtained. The more classical {\em local} approach uses numeration systems instead of lists, designing an algorithm to map each integer to a suitable word, thus building the list of words element by element. This local point of view, which is also worth considering in the context of Gray codes, is developped in a separate article \cite{Rittaud2}, made independent of the present one which deals only with the global point of view.

A similar reasoning can be applied to get the standard $b$-ary Gray code. A finite list $\mX$ being given, write $\overleftarrow{\mX}$ for the same list written in reverse order. Also, write $\mX^{\leftrightarrow i}$ for the list $\mX$ to which this reversing operator is applied $i$ times (hence $\mX^{\leftrightarrow 2j}=\mX$ and $\mX^{\leftrightarrow 2j+1}=\overleftarrow{\mX}$). The standard $b$-ary Gray code can then be regarded as the limit sequence $\mG$ of lists $\mG_\ell$ defined inductively as $\mG_0=\{\varepsilon\}$ and $\mG_{\ell+1}=\sum_{\alpha\in\mA}\alpha{\mG_\ell}^{\leftrightarrow \alpha}$. The fact that $\mG$ satisfies the flipping digit property is an easy consequence of this formula, which will be factorized as $\mG_{\ell+1}=\mA\odot\mG_\ell$ in Section \ref{sec:GrayProd}, where $\odot$ will stand for the {\em Gray product}.

Recall that the Zeckendorf-Fibonacci numeration system \cite{Zeckendorf} consists in coding any $n\in\N$ by a binary sequence $(w_i)_{i\in\N}$ (up to the equivalence relation $\sim$ such that $w\sim 0w$) such that $n=\sum_{i\in\N}\omega_iz_i$, where $z_0=1$, $z_1=2$ and $z_i=z_{i-1}+z_{i-2}$ for any $i\geqslant 2$. (Here, the sequence $(z_i)_{i\in\N}$ will be called the {\em Zeckendorf sequence}; it is of course a simple shift of the standard Fibonacci sequence.) The possible sequences $(\omega_i)_{i\in\N}$ satisfying the previous equality are the {\em legal representations} of $n$. For any $\mZ\subset\mA^*$, write $\Fact_\mA(\mZ)$ for the set of elements of $\mA^*_\sim$ containing some element of $\mZ$ as a factor. 
The biggest legal representation for the radix order is the {\em greedy} one, it is characterized by not containing any element of $\Fact_{\{0,1\}}(\{011\})$. Similarly, the {\em lazy} expansion is the smallest one for the radix order, and it is characterized by not containing any element of $\Fact_{\{0,1\}}(\{100\})$. Conversely, any finite binary sequence out of $\Fact_{\{0,1\}}(\{011\})$ (resp. $\Fact_{\{0,1\}}(\{100\})$) is the greedy (resp. lazy) representation of some integer.

Consider the list $\mF\coloneqq(f_n)_{n\in\N}$ such that $f_n$ is the greedy (resp. lazy) Zeckendorf representation of $n$. It is therefore the list of greedy (resp. lazy) words written in increasing radix order. In the greedy case, the list $\mF$ is the limit of the sequence $(\mF_\ell)_{\ell\in\N}$ where $\mF_0=\{\varepsilon\}$, $\mF_1=\{1\}$ and $\mF_\ell=0\mF_{\ell-1}+10\mF_{\ell-2}$ for $\ell\geqslant 2$. As it happens, the lazy case is more significant in the framework of Gray codes (as will be more apparent in the forthcoming paper \cite{Rittaud2}). To get the list $\mF$ of lazy Zeckendorf expansion of integers, we can first write $\mF=\sum_{\ell\in\N}\mM_\ell$, where $(\mM_\ell)_{\ell\in\N}$ is defined by:
\[\mM_0\coloneqq\{\varepsilon\},\ \mM_1\coloneqq\{1\}\text{ and }\mM_\ell\coloneqq 10\mM_{\ell-2}+1\mM_{\ell-1}\text{ for $\ell\geqslant 2$.}\]


A second way, closer to the ideas of the present paper (see Section \ref{sec:ZProduct}), is to obtain $\mF$ as the limit of the sequence $(\mF_\ell)_{\ell\in\N}$ of lists defined by $\mF_0\coloneqq\{\varepsilon\}$ and, for any $\ell\in\N^*$, $\mF_{\ell}\coloneqq{\underline{\boldsymbol{\cdot}}}(0\mF_{\ell-1}+1\mF_{\ell-1})$, where the ${\underline{\boldsymbol{\cdot}}}$ operator, applied to a list, removes all its elements in $\Fact_{\{0,1\}}(\{100\})$.

The {\em Fibonacci Gray code} can be defined in a similar way (see \cite{Bernini2013}), either by defining  $\mF$ as $\sum_{\ell\in\N}\mM_\ell$ with
\[\mathcal{M}_{0}\coloneqq\{\varepsilon\},\ \mathcal{M}_1\coloneqq\{1\} \text{ and } \mathcal{M}_{\ell}\coloneqq 1\overleftarrow{{\mathcal{M}_{\ell-1}}}+10\overleftarrow{\mathcal{M}_{\ell-2}}\ \text{for $\ell\geqslant 2$,}\]
\noindent or, as will be proved in Section \ref{sec:ZProduct} (Proposition \ref{ZFibo}) by $\mF=\lim_\ell(\mF_\ell)$, with $\mF_0\coloneqq\{\varepsilon\}$ and $\mF_\ell\coloneqq{\underline{\boldsymbol{\cdot}}}\left(0\mF_{\ell-1}+1\overleftarrow{\mF_{\ell-1}}\right)$ for $\ell\in\N^*$. As for the standard $
b$-ary Gray code, the flipping digit property comes as a reasonably easy consequence of any of these definitions.

For an integer $k\geqslant 2$, the {\em lazy $k$-bonacci numeration system} is defined similarly as the Zeckendorf-Fibonacci one (the latter corresponding to $k=2$), the sequence $(z_i)_{i\in\N}$ being now defined by $z_i=2^i$ for $i<k$ and $z_i=\sum_{i-k\leqslant j<i}z_{j}$ for $i\geqslant k$ (or, in a more concise way, putting $z_{\text{-}1}\coloneqq 1$, by $\displaystyle z_i=\sum_{\max(\text{-}1,i-k)\leqslant j<i}z_j$ for any $i\in\N$), and codages of integers being the binary words out of $\Fact_{\{0,1\}}(\{10^k\})$. The formulas to define the list $\mF$ of codages of integers are then to be replaced by the following ones, the operator $\underline{\boldsymbol{\cdot}}$ on lists of words being now the one that suppresses all words in $\Fact_{\{0,1\}}(\{10^k\})$:

\begin{itemize} 

\item for the lazy $k$-bonacci numeration system:

\begin{itemize}
\item either $\displaystyle\mF\coloneqq\sum_{\ell\in\N}\mM_\ell$ with $\mM_0\coloneqq\{\varepsilon\}$ and $\displaystyle\mM_{\ell+1}\coloneqq\sum_{i\in\rrbracket \min(k,\ell+1),0\rrbracket}10^{i}\mM_{\ell-i}$ for $\ell\in\N$

\item or $\lim_\ell(\mF_\ell)$, where $\mF_0=\{\varepsilon\}$ and $\displaystyle\mF_{\ell+1}\coloneqq{\underline{\boldsymbol{\cdot}}}\left(0\mF_{\ell}+1\mF_\ell\right)$ for $\ell\in\N$;
\end{itemize}

\item for the $k$-bonacci Gray code (see again \cite{Bernini2013}):

\begin{itemize}

\item either $\displaystyle\mF\coloneqq\sum_{\ell\in\N}\mM_\ell$ with $\mM_0\coloneqq\{\varepsilon\}$ and $\displaystyle\mM_{\ell+1}\coloneqq\sum_{i\in\llbracket 0,\min(k,\ell+1)\llbracket}10^i\overleftarrow{\mM_{\ell-i}}$ for $\ell\in\N$

\item or (as will be proved in Proposition \ref{ZFibo}) $\lim_\ell(\mF_\ell)$, where $\mF_0=\{\varepsilon\}$ and $\displaystyle\mF_{\ell+1}\coloneqq{\underline{\boldsymbol{\cdot}}}\left(0\mF_{\ell}+1\overleftarrow{\mF_{\ell}}\right)$ for $\ell\in\N$.

\end{itemize}

\end{itemize}

(Note that since $\min(k,\ell)\geqslant 0$, the notation $\llbracket \min(k,\ell+1),0\llbracket$ stands for the decreasing lists of integers from $\min(k,\ell+1)$ (included) to $0$ (excluded).)

The links between these two last equivalent definitions will be investigated in the end of Section \ref{ZGeometricSequences}.

\subsection{Organization of the article}\label{sec:OrgArticle}

Section \ref{Basic} set up some notations and definitions, and recall some simple facts on words and lists. Section \ref{GrayProd} adapts the tools of Section \ref{Basic} to the context of Gray codes, allowing us to make algebraic computation on lists. In this section is introduced the {\em Gray product} of two lists (Definition \ref{DefGrayProduct}), which is obtained by a modification of the standard product of lists by incorporating Gray's mirroring process. This gives rise to notions naturally related to it, like the Gray order on words (Definition \ref{DefGrayOrder}) originally introduced in \cite{Vajnovszki} or {\em Gray cartesian product} of lists (Definition \ref{GrayDirectProduct}). The {\em $\mZ$-Gray product} of two lists is then defined as their Gray product in which are removed every word in $\Fact_{\mA}(\mZ)$. 
An important difference between the usual product of lists and the Gray product is that the $\mZ$-counterpart of the latter is not associative in general, contrarily to the usual product. In particular, the infinite list built as a limit of $\mZ$-gray product of a well-chosen initial list is not always power-associative, even if it is the case in the important case of the standard $b$-ary Gray code (Theorem \ref{ExpbAryGray}). Theorem \ref{ClassExp}, one of the two main results of the present paper, provides a quite general class of lists for which power associativity is still satisfied. (This opens the door to the right-concatenation on words as a quite general way to produce Gray codes. This alternative point of view, which can be said {\em mirrorless}, i.e. which does not rely on any mirroring process, will be investigated in the forthcoming paper \cite{Rittaud3}.) Eventually, Theorem \ref{FlidGen}, the other main result of the present paper, extending to many other cases the known examples of standard $b$-ary and $k$-bonacci Gray codes for the flipping digit property.

\section{General tools}\label{Basic}

In this section, we provide precise definitions and statements for the tools we use, paying particular attention to notations to fix them once and for all and in a way that appears the most convenient for our purpose: numbers and words are written in lower latin letters, the latter being indexed starting preferentially from $0$; letters of a word are written in greek lowercase letters (typically: the word $w=(\omega_i)_{i\in\N}$); lists are also indexed starting preferentially from $0$ and written (as well as sets) in calligraphic fonts (an alphabet $\mA$, a list $\mathcal{W}\coloneqq(w_n)_{n\in\N}$ of words\ldots); functions are written in uppercase latin letters (as for the Hamming distance $H$). The notation $\N$, $\N^*$, $\N\cup\{+\infty\}$, etc. stands for increasing lists (not sets). For any $p$, $q\in\R\cup\{+\infty\}$, $\llbracket p,q\llbracket$ stands for the ordered set of integers between $p$ and $q$ (including or not $p$ and/or $q$ depending of the orientation of the corresponding bracket), in increasing order if $p< q$ and in decreasing order if $p> q$.

\subsection{Words and classes of words}\label{Words}

Here, an {\em alphabet} $\mA$ will be an ordered subinterval of $\N$, containing $0$ and $1$. Hence, it can be seen as a list of the form $\llbracket 0,b\llbracket$ for some $b\in\llbracket 2,+\infty\rrbracket$. Elements of $\mA$ are the {\em letters}, or {\em digits} (depending on whether numerical operations are to be done on them or not).

A {\em word} $w$ (on a given alphabet $\mA$) is a sequence in $\mA$. In general it will be written $(\omega_i)_{i\in\llbracket 0,\ell\llbracket}$, where $\ell\in\N\cup\{+\infty\}$ is its {\em length}, also written $|w|$. When $\ell=0$, $w$ is the {\em empty word}, denoted by $\varepsilon$. When $\ell\notin\{0,+\infty\}$ we also write $w$ on the form $\omega_{\ell-1}\cdots\omega_0$ (to keep the usual way of writing numbers, in which the biggest order of magnitude are to the left), $\omega_{\ell-1}$ (resp. $\omega_0$) being the {\em leftmost} (resp. {\em rightmost}) letter of $w$. 

For any $\llbracket p,q\llbracket\ \subset\llbracket 0,\ell\llbracket$, the subsequence $(\omega_i)_{i\in\llbracket p,q\llbracket}$ is a {\em factor} of the word $(\omega_i)_{i\in\llbracket 0,\ell\llbracket}$. For $p=0$ we talk of {\em right factor}, also written $[w]_q$. For $q=\ell$ we talk of a {\em left factor}, also written $[w]^p$.

For two words $w=(\omega_i)_{i\in\llbracket 0,\ell\llbracket}$ and $w'=(\omega_{i+\ell})_{i\in\llbracket 0,\ell'\llbracket}$ with $\ell<+\infty$, the {\em left-concatenation of $w'$ to $w$} or, equivalently, the {\em right-concatenation of $w$ to $w'$}, is the word $w'w\coloneqq(\omega_i)_{i\in\llbracket 0,\ell+\ell'\llbracket}$. In particular, we have $w=[w]^i[w]_i$ for any $i\in\llbracket 0,\ell\llbracket$. Of course, the word $x$ is a factor of the word $w$ iff there exists two words $w'$ and $w''$ such that $w=w'xw''$.

Since the concatenation is associative, we can write $w^n$ for the concatenation of $n\in\N\cup\{+\infty\}$ copies of a finite word $w$ and get $w^{m+n}=w^mw^n$ for any $m$, $n\in\N$, with the convention $w^0=\varepsilon$.

\begin{definition} The equivalence relation $\sim$ on finite words is defined by $0w\sim w$ for any $w\in\mA^*$ and the reflexivity, symmetry and transitivity properties. Put differently, $w\sim w'$ iff there exists $k\in\N$ such that $w=0^kw'$ or $w'=0^{k}w$. 
The set of classes is written $\mA^*_\sim$. The {\em standard representative} of a class is its element of shortest length (or, equivalently, the one with no {\em leading zeroes}). The {\em magnitude} of the finite word $w$, denoted by $\|w\|$, is the length of the standard representative of its class. We therefore have $\|w\|\leqslant |w|$ for any finite word $w$.

\end{definition}

\begin{remark} The previous definition would apply to infinite words $w$ as well (for which the equivalence class is simply reduced to the singleton $\{w\}$), but we will not need it here.
\end{remark}

\begin{definition}
The {\em radix order} is the total order on $\mA^*_\sim$ denoted by $\leqslant $ (the same way as the order in $\mA$) and such that, for $(\omega_i)_{i\in\N}\eqqcolon w\neq w'\coloneqq(\omega'_i)_{i\in\N}$:
\[w< w'\Longleftrightarrow \text{for $j\coloneqq\max\big(i\in\N :\ \omega_i\neq\omega'_i\big)$ we have $\omega_j<\omega'_j$}.\]

Equivalently, for $\ell\coloneqq\max(i\in\N\ :\ \omega_i+\omega'_i>0)$ we have
\[w< w'\Longleftrightarrow \left\{\begin{array}{l}\omega_\ell<\omega'_\ell \\ \text{or}\\ \omega_\ell=\omega'_\ell\text{ and }[w]_{\ell-1}< [w']_{\ell-1}\end{array}\right..\]

Also, equivalently, we have the following characterization of the radix order, closer to the usual way to see it: 
\[w<w'\Longleftrightarrow\left\{\begin{array}{l}\|w\|<\|w'\|\\ \text{or} \\ \|w\|=\|w'\|\text{ and $w$ is smaller than $w'$ for the lexicographical order}\end{array}\right..\]

\end{definition}

\begin{remark} There is a natural bijection between $\{0,1\}^*_\sim$ and finite subsets of $\N$, given by the one-to-one map $(\omega_i)_{i\in\N}\longmapsto\{i\in\N\ :\ \omega_i=1\}$. Confusing two distinct elements $w$ and $w'$ of $\{0,1\}^*_\sim$ with their image by this bijection, the biggest of the two for the radix order is therefore the one containing $\max(w\Delta w')$, where $\Delta$ stands for the symmetric difference between sets.

To extend this remark to $\llbracket 0,b\llbracket^*_\sim$, we can define $w\Delta w'$ as $(1-\delta_{\omega_i,\omega'_i})_{i\in\N}$ (Kronecker's $\delta$) and identifying it with a finite subset of $\N$. Then, we have $w<w'$ iff, for $i\coloneqq\max(w\Delta w')$, we have $\omega_i<\omega'_i$ (where $w=(\omega_i)_{i\in\N}$ and $w'=(\omega'_i)_{i\in\N}$).

\end{remark}

\subsection{Lists}\label{Lists}

We deal first with lists in general, then focus on lists of elements of $\mA^*_\sim$. Most of the time (but not always), our lists do not contain the same element twice, i.e. are in fact ordered sets. 

A {\em list} is a sequence that can be written on the form $(a_n)_{n\in\llbracket 0, p\llbracket}$ where $p\in\N\cup\{+\infty\}$, and the $a_n$ are elements of some predefined set. It is said to be finite or infinite depending on $p$. 
A {\em sublist} of $(a_n)_{n\in\llbracket 0, p\llbracket}$ is a subsequence of $(a_n)_{n\in\llbracket 0, p\llbracket}$ preserving its order. It can be written with the inclusion sign $\subset$. A sublist of the form $(a_n)_{n\in\llbracket q,q'\llbracket}$ with $\llbracket q,q'\llbracket\ \subset\llbracket 0,p\llbracket$ is an {\em interval} of the initial list.

For any nonempty list $\mW$, $\Init(\mW)$ (resp. $\Last(\mW)$) stands for the initial (resp. last) element of the list $\mW$. For any $w\in\mW$, we write $\suc_\mW(w)$ and $\pre_\mW(w)$ for the successor and predecessor of $w$ in $\mW$ (if any).

For any finite lists $\mathcal{W}\coloneqq(w_n)_{n\in\llbracket 0,p\llbracket}$ and $\mathcal{W}'\coloneqq(w_{n+p})_{n\in\llbracket 0,p'\llbracket}$ and for any integer $d$ we define
\[\begin{array}{rcl}
\overleftarrow{\mathcal{W}}&\coloneqq&(w_{p-n})_{n\in\rrbracket 0,p\rrbracket}=(w_n)_{n\in\rrbracket p,0\rrbracket},\\
\mathcal{W}^{\leftrightarrow d}&\coloneqq&\begin{cases}\mW&\text{if $d$ is even;}\\ \overleftarrow{\mW}&\text{if $d$ is odd.}\end{cases}\\
\mathcal{W}+\mathcal{W'}&\coloneqq&(w_n)_{n\in\llbracket0,p+p'\llbracket}.
\end{array}\]

Since the sum of lists is an associative binary operation with neutral element $\varnothing$, we can define univoqually the (noncommutative) sum 
$\dy\sum_{i\in\llbracket 0,m\llbracket}\mathcal{W}_{i}$ of any sequence $(\mathcal{W}_{i})_{i\in\llbracket 0,m\llbracket}$ of finite lists (with $m\in\N\cup\{+\infty\}$) in a natural way, the empty sum being equal to the empty list $\varnothing$. 
We then define the {\em usual cartesian product} of $\mW$ and $\mW'$ as:
\[\mW\times\mW'\coloneqq\sum_{n\in\llbracket 0,p\llbracket}\sum_{n'\in\llbracket 0,p'\llbracket}(w_n,w'_{n'}).\]


Now, consider lists in $\mA^*_\sim$. The following property will be satisfied by most of these considered in this paper.

\begin{definition}\label{MagnitudeIncreasing} A list $\mathcal{W}=(w_n)_{n\in\llbracket 0,p\llbracket}$ in $\mA^*_\sim$ for which the sequence $(\|w_n\|)_{n\in\llbracket 0,p\llbracket}$ is increasing is said to be {\em magnitude increasing}.
\end{definition}

The list $\mathcal{W}$ being given, we generically write $\mathcal{M}_\ell$ for its sublist made of its elements of magnitude exactly $\ell$. Hence, $\mW$ is magnitude increasing iff $\mW=\sum_{\ell\in\N}\mM_\ell$.


The notation $[{\mathcal{W}}]_i$ will have two possible meanings (the one to be considered being explicited in each context): its {\em cardinality meaning}, for which $[{\mathcal{W}}]_i\coloneqq(w_n)_{n\in\llbracket 0,i\llbracket}$ with $i\in\llbracket 0,p\llbracket$ (a notation that can be used also for a list made of other elements than words), and its {\em length meaning}, for which $[{\mathcal{W}}]_i\coloneqq([w_n]_i)_{n\in\llbracket 0,p\llbracket}$ with $i\in\N$.

For lists, two different meanings of the notion of {\em increase} are to be distinguished. A given list in $\mA^*_\sim$ is {\em increasing} whenever the sequence it defines is increasing under some specified order relation; a sequence $(\mathcal{W}_\ell)_{\ell\in\N}$ of finite lists in $\mA^*_\sim$ is {\em increasing}  whenever, for any $\ell\in\N$,  there exists a list $\mX_{\ell+1}$ such that $\mathcal{W}_{\ell+1}=\mathcal{W}_{\ell}+\mX_{\ell+1}$. The {\em limit} of such an increasing sequence is then defined by $\lim_\ell(\mathcal{W}_\ell)=\mW_0+\sum_{\ell\in\N^*}\mX_\ell$.

\subsection{Ordinary product of lists of words}\label{ProductLists}

Let $\mathcal{W}\coloneqq(w_n)_{n\in\llbracket 0,p\llbracket}$ be a list in $\mA^*_\sim$, and put $\|\mathcal{W}\|\coloneqq\max_{n\in\llbracket 0,p\llbracket}(\|w_n\|)$, with $\|\varnothing\|\coloneqq 0$. If $\|\mW\|$ is finite, for any $w\in\mA^*_\sim$ we define $w\mathcal{W}\coloneqq (w[w_n]_{\|\mathcal{W}\|})_{n\in\llbracket 0,p\llbracket}$, the bracket notation having its length meaning.


\begin{definition}\label{ProductOfLists} Let $\mathcal{W}\coloneqq(w_n)_{n\in\llbracket 0,p\llbracket}$ and $\mathcal{W}'\coloneqq(w'_{n'})_{n'\in\llbracket 0,p'\llbracket}$ be two lists in $\mA^*_\sim$ with $\mathcal{W}'$ finite. Their (noncommutative) {\em ordinary product} $\mathcal{W}\mathcal{W'}$ (or $\mW\boldsymbol{\cdot}\mW'$) is the list:
\[\mW\boldsymbol{\cdot}\mW'=\mathcal{W}\mathcal{W'}\coloneqq\sum_{n\in\llbracket 0,p\llbracket}w_n\mathcal{W}'=\sum_{(n,n')\in\llbracket 0,p\llbracket\times \llbracket 0,p'\llbracket}w_n[w'_{n'}]_{\|\mathcal{W}'\|}.\]
The list $\{\varepsilon\}$ is the neutral element for this product, and $\varnothing$ its absorbing element.
\end{definition}

\begin{remark} In practice, the normalization of the lengths of representatives choosed for the elements of $\mW'$ in Definition \ref{ProductOfLists} makes it handy, for reading as well as computer programming, to consider $\mW'$ as a list of words of equal length, appending leading zeroes to those of magnitude less than $\|\mW'\|$. For ease of reading we will thus be led to write, for example, $0\mW+1\mW$ instead of $\mW+1\mW$ for the result of the product $\{0,1\}\boldsymbol{\cdot}\mW$.
\end{remark}

\begin{proposition}\label{ProductListAssociative} The ordinary product of lists is associative.
\end{proposition}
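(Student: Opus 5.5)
Let $\mW=(w_n)_{n\in\llbracket 0,p\llbracket}$, $\mW'=(w'_{n'})_{n'\in\llbracket 0,p'\llbracket}$ and $\mW''=(w''_{n''})_{n''\in\llbracket 0,p''\llbracket}$, with $\mW'$ and $\mW''$ finite so that both products are defined. The plan is to compute both $(\mW\mW')\mW''$ and $\mW(\mW'\mW'')$ explicitly as lists indexed by $\llbracket 0,p\llbracket\times\llbracket 0,p'\llbracket\times\llbracket 0,p''\llbracket$ in lexicographic order, and to check that the entries agree termwise. Write $m'=\|\mW'\|$ and $m''=\|\mW''\|$.

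\textbf{Step 1: the magnitude of an ordinary product.} For finite $\mW$, I claim
\[\|\mW\mW'\|=\|\mW\|+m' \quad\text{if $\mW$ contains a nonzero word},\qquad \|\mW\mW'\|=\|\mW'\|\quad\text{otherwise}.\]
Indeed, $w_n[w'_{n'}]_{m'}$ has magnitude $\|w_n\|+m'$ when $w_n\not\sim\varepsilon$, and magnitude $\|w'_{n'}\|$ otherwise. Taking maxima gives the claim. It does not matter which formula applies, because of the next step.

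\textbf{Step 2: the padding lemma.} For any finite word $u$ and any $k\geqslant\|u\|$, the representative $[u]_k$ is just $u$ padded with leading zeroes to length $k$. Concatenation is compatible with this padding in the following sense: for any $w$ and any $j\leqslant m''$,
\[[\,w[w'_{n'}]_{m'}\,]_{m'+j}\ \text{followed by the $j$-letter block}\ [w''_{n''}]_{m''}\]
is unambiguous up to $\sim$. Concretely, the entry of $(\mW\mW')\mW''$ at index $((n,n'),n'')$ is
\[\big(w_n[w'_{n'}]_{m'}\big)\big[w''_{n''}\big]_{m''}.\]
This is because the left factor is inserted as a class in $\mA^*_\sim$, and the right factor is padded to exactly $m''$ letters.

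\textbf{Step 3: the other bracketing.} The entry of $\mW(\mW'\mW'')$ at index $(n,(n',n''))$ is
\[w_n\big[w'_{n'}[w''_{n''}]_{m''}\big]_{\|\mW'\mW''\|}.\]
By Step 1 applied to $\mW'\mW''$, we have $\|\mW'\mW''\|\leqslant m'+m''$. The word $w'_{n'}[w''_{n''}]_{m''}$, padded to length $\|\mW'\mW''\|$, is therefore $\sim$-equivalent to $[w'_{n'}]_{m'}[w''_{n''}]_{m''}$, which has length exactly $m'+m''$.

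The \textbf{main obstacle} is the degenerate case $\|\mW'\mW''\|<m'+m''$. This occurs when every $w'_{n'}$ is $\sim\varepsilon$ but $m'>0$, which cannot happen, since $m'$ is the maximum of the magnitudes. Hence it occurs only when $m'=0$, and then the padding lengths coincide, $\|\mW'\mW''\|=m''=m'+m''$. So in all cases $\|\mW'\mW''\|=m'+m''$ exactly, with the caveat that if $\mW'$ is empty both sides are $\varnothing$. Step 3 then gives the entry $w_n[w'_{n'}]_{m'}[w''_{n''}]_{m''}$, which equals the entry from Step 2 by associativity of concatenation of words.

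\textbf{Conclusion.} Both lists enumerate the triples $(n,n',n'')$ in lexicographic order. This uses the associativity of the list sum together with the identification of $\sum_n\sum_{n'}\sum_{n''}$ under either grouping. Since the entries agree termwise, the two products are equal. The neutral and absorbing claims follow directly from the definition.
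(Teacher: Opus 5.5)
Your proof is correct and follows the same route as the paper. Both of you expand the two bracketings as lists indexed by $(n,n',n'')$ in lexicographic order and compare entries termwise, with the crux being $\|\mW'\mW''\|=\|\mW'\|+\|\mW''\|$; the paper only asserts this identity, whereas you justify it, including the $m'=0$ and empty-list cases. The general ``padding lemma'' sentence in Step 2 is garbled: the block $[w''_{n''}]_{m''}$ has $m''$ letters, not $j$, and truncating $w[w'_{n'}]_{m'}$ to length $m'+j$ would cut into $w$. It is never used, though, since the entry formula you then write down follows directly from the definition of $u\mW''$.
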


\begin{proof} For $\mathcal{W}''=(w''_{n''})_{n''\in\llbracket 0,p''\llbracket}$, we have
\begin{eqnarray*}
(\mathcal{W}\mathcal{W}')\mathcal{W}''&=&\sum_{(n,n')\in\llbracket 0,p\llbracket\times\llbracket 0,p'\llbracket}w_n[w'_{n'}]_{\|\mathcal{W}'\|}\mathcal{W}'' \\
&=& \sum_{(n,n',n'')\in\llbracket 0,p\llbracket\times\llbracket 0,p'\llbracket\times \llbracket 0,p''\llbracket} w_n[w'_{n'}]_{\|\mathcal{W}'\|} [w''_{n''}]_{\|\mathcal{W}''\|}
\end{eqnarray*}
\noindent and
\begin{eqnarray*}
\mathcal{W}(\mathcal{W}'\mathcal{W}'')&=&\sum_{n\in\llbracket 0,p\llbracket}w_n(\mathcal{W}'\mathcal{W}'')\\
& =& \sum_{n\in\llbracket 0,p\llbracket} w_n \left[\sum_{(n',n'')\in\llbracket 0,p'\llbracket\times \llbracket 0,p''\llbracket} w'_{n'}[w''_{n''}]_{\|\mathcal{W}''\|}\right]_{\|\mathcal{W}'\mathcal{W}''\|}\\
& =& \sum_{n\in\llbracket 0,p\llbracket} w_n \sum_{(n',n'')\in\llbracket 0,p'\llbracket\times \llbracket 0,p''\llbracket} \left[w'_{n'}[w''_{n''}]_{\|\mathcal{W}''\|}\right]_{\|\mathcal{W}'\mathcal{W}''\|},
\end{eqnarray*}
\noindent from which the result follows since $\|\mathcal{W}'\mathcal{W}''\|=\|\mathcal{W}'\|+\|\mathcal{W}''\|$.
\end{proof}

The set of lists of words on a given alphabet is therefore equipped with two binary operators, sum and ordinary product of lists, which are both associative (and noncommutative). The ordinary product of lists being associative, it is also {\em power associative}, i.e. for any finite list of words $\mathcal{W}$ we have $\mathcal{W}^{\boldsymbol{\cdot}(r+s)}=\mathcal{W}^{\boldsymbol{\cdot} r}\mathcal{W}^{\boldsymbol{\cdot} s}$ for any natural integers $r$ and $s$, where $\mathcal{W}^{\boldsymbol{\cdot}(m+1)}\coloneqq\mathcal{W}\mathcal{W}^{\boldsymbol{\cdot}m}$ for any $m\in\N$ and $\mathcal{W}^{\boldsymbol{\cdot}0}\coloneqq\{\varepsilon\}$.

As an example, the codage of the standard $b$-ary numeration system can be seen as the increasing (thus convergent) sequence of lists $(\mathcal{B}_\ell)_{\ell\in\N}$ in $\mA^*_\sim$ (with $\mA=\llbracket 0,b\llbracket$) by:
\[\mathcal{B}_0\coloneqq\{\varepsilon\}\qquad\mathcal{B}_{\ell+1}\coloneqq\mA\boldsymbol{\cdot}\mathcal{B}_{\ell}\text{ for $\ell\in\N$},\]
\noindent so $\mB_\ell=\mA^{\boldsymbol{\cdot}\ell}$ for any $\ell\in\N$. Hence, for any integers $\ell$ and $\ell'$ we have $\mB_{\ell+\ell'}=\mB_\ell\mB_{\ell'}$. An algorithm of exponentiation by squaring can thus be applied to reach the full list of $b$-ary codages as the limit of the subsequence $(\mathcal{B}'_m)_{m\in\N^*}$ defined by $\mathcal{B}'_1\coloneqq\mA$ and $\mathcal{B}'_{m+1}\coloneqq\mathcal{B}'_m\mathcal{B}'_m$ for any $m\in\N^*$.

\subsection{The $\mZ$-ordinary product}\label{sec:ZProduct}

\begin{definition}\label{ZProduct} Let $\mA$ be an alphabet and let $\mZ\subset\mA^*$. We define $\underline{\boldsymbol{\cdot}}\mW$ as the sublist of $\mW$ in which is removed every element $w$ of $\mW$ such that $[w]_{\|\mW\|}\in\Fact_\mA(\mZ)$. For $\mW'$ be another list, the {\em $\mZ$-ordinary product} of $\mW$ and $\mW'$ is the list $\underline{\boldsymbol{\cdot}}(\mW\mW')$, also written $\mW\underline{\boldsymbol{\cdot}}\mW'$.\end{definition}

\begin{remark} We can thus write $\underline{\boldsymbol{\cdot}}\mW=\{\varepsilon\}\underline{\boldsymbol{\cdot}}\mW$.
\end{remark}

\begin{remark} It is only when $\mZ$ does not contain any word with $0$ as a leftmost letter that $\underline{\boldsymbol{\cdot}}\mW$ is the sublist of $\mW$ made of its elements not in $\Fact_\mA(\mZ)$.
\end{remark}

\begin{proposition}\label{ZProductAssociatif} The $\mZ$-ordinary product is associative.
\end{proposition}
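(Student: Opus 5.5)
We will show $(\mW\underline{\boldsymbol{\cdot}}\mW')\underline{\boldsymbol{\cdot}}\mW''=\mW\underline{\boldsymbol{\cdot}}(\mW'\underline{\boldsymbol{\cdot}}\mW'')$ by reducing both sides to the single list $\underline{\boldsymbol{\cdot}}(\mW\mW'\mW'')$, the $\mZ$-reduction of the ordinary triple product, which is well defined by Proposition \ref{ProductListAssociative}. Throughout, words of $\mW'$ and $\mW''$ are written with the normalized lengths $\|\mW'\|$ and $\|\mW''\|$, as in the Remark following Definition \ref{ProductOfLists}. We also set $L\coloneqq\|\mW\|+\|\mW'\|+\|\mW''\|$, the normalized length of the triple-product words, assuming $\|\mW\|<+\infty$ (otherwise replace $L$ by the magnitude of each element). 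The key structural observation is that $\underline{\boldsymbol{\cdot}}$ only deletes elements: it keeps the order of the remaining ones and never changes a surviving word.

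\textbf{Left-hand side.} Here $\mW\underline{\boldsymbol{\cdot}}\mW'$ is the sublist of $\mW\mW'$ of the words $w_n[w'_{n'}]_{\|\mW'\|}$ whose normalized form avoids $\Fact_\mA(\mZ)$. Multiplying on the right by $\mW''$ and reducing again gives the sublist of $(\mW\mW')\mW''$, in the order of the ordinary product, of the triples $(n,n',n'')$ satisfying two conditions. First, the full word $w_n[w'_{n'}]_{\|\mW'\|}[w''_{n''}]_{\|\mW''\|}$ avoids $\mZ$. Second, its left factor $w_n[w'_{n'}]_{\|\mW'\|}$ avoids $\mZ$. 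Any factor of that left factor is a factor of the full word, so the second condition follows from the first. The left-hand side is therefore exactly $\underline{\boldsymbol{\cdot}}(\mW\mW'\mW'')$.

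\textbf{Right-hand side.} Here $\mW'\underline{\boldsymbol{\cdot}}\mW''$ keeps the pairs $(n',n'')$ for which $[w'_{n'}]_{\|\mW'\|}[w''_{n''}]_{\|\mW''\|}$ avoids $\mZ$. This word is a right factor, hence a factor, of the full triple word. So if the full word avoids $\mZ$, the pair survives, and the outer reduction keeps exactly the triples whose full word avoids $\mZ$. The ordering agrees with that of $\mW(\mW'\mW'')$, as computed in the proof of Proposition \ref{ProductListAssociative}. Hence the right-hand side is also $\underline{\boldsymbol{\cdot}}(\mW\mW'\mW'')$.

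\textbf{Main obstacle: normalization.} Definition \ref{ZProduct} tests the normalized word $[w]_{\|\mW\|}$, not its standard representative, and the normalizing length can shrink after a reduction. For instance, $\|\mW'\underline{\boldsymbol{\cdot}}\mW''\|$ may be smaller than $\|\mW'\|+\|\mW''\|$ if every word of maximal magnitude is deleted. Two consequences must be handled.

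\emph{Effect on the test.} The words could then be tested with fewer leading zeros. I would first check that padding with leading zeros only adds leading zeros to the tested word. A deleted word contains a $\mZ$-factor in some padded form. The same factor remains inside any longer padding, since it sits in the same positions. Conversely, a factor that uses leading zeros present in the longer padding but absent from the shorter one would need the magnitude bookkeeping to match exactly.

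\emph{Effect on the result.} The products $w_n\cdot(\ldots)$ concatenate at the normalized length, so the concatenation position could shift. To avoid this, I will treat $\underline{\boldsymbol{\cdot}}$ as keeping the ambient length $\|\mW'\|+\|\mW''\|$, that is, as a filter on the index set of the ordinary product. I would check that this is consistent with Definition \ref{ZProduct}: since $\|\cdot\|$ of a list is defined via $\sim$-classes, padding differences only introduce leading zeros, which the inner and outer tests see identically once $\|\mW\|$ is fixed.
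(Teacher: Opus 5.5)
Your core argument is the paper's. The paper asserts $\mW\underline{\boldsymbol{\cdot}}(\mW'\underline{\boldsymbol{\cdot}}\mW'')=\mW\underline{\boldsymbol{\cdot}}(\mW'\mW'')$ and $(\mW\underline{\boldsymbol{\cdot}}\mW')\underline{\boldsymbol{\cdot}}\mW''=(\mW\mW')\underline{\boldsymbol{\cdot}}\mW''$ without comment and then invokes associativity of the ordinary product. Your left-factor/right-factor closure argument is exactly the missing justification of those two identities, and that part is fine.

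Your final section, however, contains a false step. You are right that the inner reduction can lower the magnitude. But when it does, the concatenation point moves, and this inserts or deletes zeros \emph{inside} the word, between the part coming from $\mW$ and the rest. These are not leading zeros. So treating $\underline{\boldsymbol{\cdot}}$ as keeping the ambient length is \emph{not} consistent with the literal Definition~\ref{ZProduct}, and under the literal definition the statement is false.

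Take $\mZ=\{11\}$, $\mW=(1)$, $\mW'=(0,1)$, $\mW''=(1)$.
\begin{itemize}
\item Left side: $\mW\mW'=(10,11)$ reduces to $(10)$. Then $(10)\mW''=(101)$ survives, so $(\mW\underline{\boldsymbol{\cdot}}\mW')\underline{\boldsymbol{\cdot}}\mW''=(101)$.
\item Right side: $\mW'\mW''=(01,11)$ reduces to $(01)$, which has magnitude $1$. Hence $\mW(01)=(1[01]_1)=(11)$, which is deleted, so $\mW\underline{\boldsymbol{\cdot}}(\mW'\underline{\boldsymbol{\cdot}}\mW'')=\varnothing$.
\end{itemize}
The paper's first identity fails here too, since $\mW\underline{\boldsymbol{\cdot}}(\mW'\mW'')=(101)$. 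So the paper's proof tacitly uses the fixed-length reading as well.

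The fix is to \emph{adopt} that reading explicitly, as the Remark after Definition~\ref{ProductOfLists} suggests, rather than claim it is equivalent to the literal definition:
\begin{itemize}
\item each list carries a nominal length, additive under products and preserved by $\underline{\boldsymbol{\cdot}}$;
\item the filter tests each word padded to that nominal length.
\end{itemize}
Once you do this, your unresolved ``effect on the test'' paragraph, including the ``conversely\ldots'' sentence, becomes unnecessary. With $L=\|\mW\|+\|\mW'\|+\|\mW''\|$, the inner test words are literally the left factor of length $\|\mW\|+\|\mW'\|$ and the right factor of length $\|\mW'\|+\|\mW''\|$ of the padded length-$L$ word. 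Your factor-closure argument then completes the proof.
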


\begin{proof} Observe that, for any lists $\mW$, $\mW'$ and $\mW''$ we have $\mW\underline{\boldsymbol{\cdot}}(\mW'\underline{\boldsymbol{\cdot}}\mW'')=\mW\underline{\boldsymbol{\cdot}}(\mW'\boldsymbol{\cdot}\mW'')$ as well as $(\mW\underline{\boldsymbol{\cdot}}\mW')\underline{\boldsymbol{\cdot}}\mW''=(\mW{\boldsymbol{\cdot}}\mW')\underline{\boldsymbol{\cdot}}\mW''$. We can thus write, by the associativity of the ordinary product of lists:
\begin{eqnarray*}
\mW\underline{\boldsymbol{\cdot}}(\mW'\underline{\boldsymbol{\cdot}}\mW'')&=&\mW\underline{\boldsymbol{\cdot}}(\mW'{\boldsymbol{\cdot}}\mW'')\\
&=&\underline{\boldsymbol{\cdot}}\Big(\mW\boldsymbol{\cdot}(\mW'\boldsymbol{\cdot}\mW'')\Big)\\
&=&\underline{\boldsymbol{\cdot}}\Big((\mW\boldsymbol{\cdot}\mW')\boldsymbol{\cdot}\mW''\Big)\\
&=&(\mW\boldsymbol{\cdot}\mW')\underline{\boldsymbol{\cdot}}\mW''\\
&=&(\mW\underline{\boldsymbol{\cdot}}\mW')\underline{\boldsymbol{\cdot}}\mW''.
\end{eqnarray*}
\end{proof}

In a similar way to the ordinary product of lists, the associativity of the $\mZ$-ordinary product makes it power-associative, so for any list $\mW$ and any $r$, $s\in\N$ we have that $\mW^{\underline{\boldsymbol{\cdot}}(r+s)}=\mW^{\underline{\boldsymbol{\cdot}} r}\mW^{\underline{\boldsymbol{\cdot}}s}$, where $\mW^{\underline{\boldsymbol{\cdot}}(r+1)}\coloneqq\mW\underline{\boldsymbol{\cdot}}\mW^{\underline{\boldsymbol{\cdot}}r}$ and $\mW^{\underline{\boldsymbol{\cdot}}0}\coloneqq\{\varepsilon\}$.

The lazy $k$-bonacci numeration system is a natural example of utilization of the $\mZ$-ordinary product. Put $\mZ\coloneqq\{10^k\}$ and define the sequence $(\mF_\ell)_{\ell\in\N}$ as $\mF_\ell\coloneqq\{0,1\}^{\underline{\boldsymbol{\cdot}}\ell}$. It is easy to check that the list $\mF\coloneqq(f_n)_{n\in\N}$ defined as the limit of $(\mF_\ell)_{\ell\in\N}$ is such that, for any $n\in\N$, the word $f_n$ is the lazy $k$-bonacci expansion of $n$. Also, for any integers $\ell$ and $\ell'$ we have $\mF_\ell\underline{\boldsymbol{\cdot}}\mF_{\ell'}=\mF_{\ell+\ell'}$.

\begin{remark} Take $k=2$, so $\#(\mF_\ell)=z_\ell$ for all $\ell\in\N$, where $(z_\ell)_{\ell}$ is the Zeckendorf sequence ($z_0=1$, $z_1=2$ and $z_{\ell}=z_{\ell-1}+z_{\ell-2}$ for all $\l\geqslant 2$, see Section \ref{IntroOperations}). For $\ell$ and $\ell'$ at least equal to $2$, the elements of $\mF_\ell {\boldsymbol{\cdot}} \mF_{\ell'}\cap\Fact_{\{0,1\}}(\{11\})$ are those of the form $f0110f'$, where $f01\in\mF_\ell$ and $10f'\in\mF'_{\ell'}$. Hence, we have $\#(\mF_\ell\underline{\boldsymbol{\cdot}}\mF_{\ell'})=\#(\mF_\ell{\boldsymbol{\cdot}}\mF_{\ell'})-\#(\mF_{\ell-2})\#(\mF_{\ell'-2})$, from which we obtain the well-known equality $z_{\ell+\ell'}=z_{\ell}z_{\ell'}-z_{\ell-2}z_{\ell'-2}$ (more oftenly written for the standard Fibonacci sequence, that differs from the Zeckendorf sequence by a double shift of indexation), that remains true for lower values of $\ell$ and $\ell'$ by putting $z_{\text{-}1}\coloneqq 1$ and $z_{\text{-}2}\coloneqq 0$. Of course, this remark can be extended to get similar formulas from $k$-bonacci numeration systems.
\end{remark}

More generally, we can use the $\mZ$-ordinary product to get a large part of sequences of words defined by some combinatorial constraint, in particular numeration systems defined by linear recurring sequences.

\section{Gray and $\mZ$-Gray products}\label{GrayProd}

\subsection{The Gray product and $\Flid(\mI)$}\label{sec:GrayProd}

\begin{definition}\label{Hamming} Let $w\coloneqq(\omega_i)_{i\in\N}$ and $w'\coloneqq(\omega'_i)_{i\in\N}$ be two elements of $\mA^*_\sim$. Their {\em Hamming distance} is the value $H(w,w')\coloneqq\#\{i\in\N\ :\ \omega_i\neq\omega'_i\}$.
\end{definition}

As recalled in Section \ref{IntroGrayCodesConnus}, Gray's initial goal was to list all finite binary words such that two consecutive ones in the list differ by a single digit. We consider the following generalization of this notion, Gray's initial one corresponding to $\mI=\{1\}$:

\begin{definition}\label{FlippingDigitProperty}
Let $\mathcal{W}$ be a list in $\mA^*_\sim$. We define, with the cardinality meaning, $H(\mW)\coloneqq\{H(w,\suc_\mW(w))\ :\ w\in[\mW]_{\#(W)-1}\}$. For any $\mI\subset\N$, we write $\Flid(\mI)$ for the set of lists $\mW$ for which $H(\mW)\subset\mI$ (i.e.  {\em satisfies the $\mI$-flipping digit property}).

If $\mW$ is finite with at least two elements and satisfies that $H(\Init(\mW),\Last(\mW))\in H(\mW)$, then $\mathcal{W}$ is {\em cyclical}.

\end{definition}

Extending Gray's initial construction quite naturally leads to the following definition:

\begin{definition}\label{DefGrayProduct} The {\em Gray product} of two finite lists $\mathcal{W}\coloneqq(w_n)_{n\in\llbracket 0,p\llbracket}$ and $\mathcal{W}'$ in $\mA^*_\sim$ is the list:
\[\mathcal{W}\odot\mathcal{W}'\coloneqq\sum_{n\in\llbracket 0,p\llbracket}w_n\mathcal{W}'^{\leftrightarrow n}.\]
\end{definition}

The previous definition also makes sense for $\mW$ infinite, but we will not need it here.


\begin{proposition}\label{GrayProdFlid} For any finite lists $\mW$ and $\mW'$ we have $H(\mW\odot\mW')= H(\mW)\cup H(\mW')$. If $\mW$ is cyclical and $\#(\mW')\in 2\N$, then $\mW\odot\mW'$ is cyclical.



\end{proposition}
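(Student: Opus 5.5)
The plan is to read off the consecutive pairs of $\mW\odot\mW'$ directly from Definition \ref{DefGrayProduct}. Write $\mW=(w_n)_{n\in\llbracket 0,p\llbracket}$ and $\mW'=(w'_{n'})_{n'\in\llbracket 0,p'\llbracket}$. Take both lists nonempty, since otherwise the product is $\varnothing$ and the claim is degenerate.

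The first tool is a splitting of the Hamming distance. Every element of the block $w_n\mW'^{\leftrightarrow n}$ has the form $w_n[x]_{\|\mW'\|}$ with $x\in\mW'$. The suffix $[x]_{\|\mW'\|}$ occupies exactly the positions $\llbracket 0,\|\mW'\|\llbracket$, and $w_n$ occupies the positions from $\|\mW'\|$ upward. Hence
\[H\big(w_n[x]_{\|\mW'\|},w_m[y]_{\|\mW'\|}\big)=H(w_n,w_m)+H(x,y).\]
This holds because padding by leading zeroes does not change Hamming distances in $\mA^*_\sim$.

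There are two kinds of consecutive pairs in the product.

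\emph{Inside a block $n$.} The pair is $(w_n x,w_n y)$, where $(x,y)$ or $(y,x)$ is a consecutive pair of $\mW'$; reversing a list swaps each consecutive pair but keeps the set of consecutive pairs. The splitting gives distance $H(x,y)\in H(\mW')$.

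\emph{Across the boundary between blocks $n$ and $n+1$.} The last element of $\mW'^{\leftrightarrow n}$ equals the first element of $\mW'^{\leftrightarrow (n+1)}$. Indeed, both are $\Last(\mW')$ when $n$ is even and both are $\Init(\mW')$ when $n$ is odd. The pair is therefore $(w_nx,w_{n+1}x)$, whose distance is $H(w_n,w_{n+1})$.

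This gives $H(\mW\odot\mW')\subset H(\mW)\cup H(\mW')$. For the reverse inclusion, block $0$ is $w_0\mW'$ and so realizes every element of $H(\mW')$. The $p-1$ boundaries realize every $H(w_n,w_{n+1})$, that is, all of $H(\mW)$.

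For cyclicity, I compute the two ends of the product. We have $\Init(\mW\odot\mW')=w_0\Init(\mW')$ and $\Last(\mW\odot\mW')=w_{p-1}\Last(\mW'^{\leftrightarrow(p-1)})$. The splitting then gives
\[H\big(\Init,\Last\big)=H(w_0,w_{p-1})+H\big(\Init(\mW'),\Last(\mW'^{\leftrightarrow(p-1)})\big).\]
The plan is a case split on the parity of $p$.

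\emph{$p$ even.} The second term vanishes, so the distance equals $H(\Init(\mW),\Last(\mW))$. This lies in $H(\mW)$ by cyclicity of $\mW$, hence in $H(\mW\odot\mW')$ by the first part.

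\emph{$p$ odd.} The distance becomes $H(\Init(\mW),\Last(\mW))+H(\Init(\mW'),\Last(\mW'))$. This is the step where the hypothesis $\#(\mW')\in2\N$ has to enter, and I expect it to be the main obstacle. I would first try to use the parity of $p'$ to control how $\Init(\mW')$ and $\Last(\mW')$ compare, hoping the second term vanishes or can be absorbed into an element of $H(\mW')$.

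I must flag that this case may not go through as stated. The example $\mW=\{0,1,2\}$ with $\mA=\llbracket 0,3\llbracket$ is cyclical with $H(\mW)=\{1\}$, since $H(0,2)=1$. Take $\mW'=\{0,1\}$, so $\#(\mW')=2$. The product is $00,01,11,10,20,21$, its consecutive distances are all $1$, and yet $H(00,21)=2$. If this check is right, only the even-$p$ argument survives and the statement needs $\#(\mW)$ even, or an extra hypothesis on $\mW'$, for the conclusion to hold.
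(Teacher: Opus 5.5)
Your computation is correct, and so is your counterexample. With $\mA=\llbracket 0,3\llbracket$, the list $\mW=(0,1,2)$ is cyclical, since $H(0,2)=1\in H(\mW)=\{1\}$. Taking $\mW'=(0,1)$, so that $\#(\mW')=2$, gives $\mW\odot\mW'=(00,01,11,10,20,21)$. This list has $H(\mW\odot\mW')=\{1\}$ but $H(00,21)=2$.

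So the second sentence of the proposition is false as printed. The parity hypothesis belongs on the left factor. The true statement is: if $\mW$ is cyclical, $\#(\mW)\in 2\N$ and $\mW'\neq\varnothing$, then $\mW\odot\mW'$ is cyclical. Your even-$p$ computation is a complete proof of it.

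The paper gives no argument here (``The proof is trivial''), and your block/boundary splitting is the natural one. The only place the cyclicity claim is used is Corollary \ref{bAryGrayFlid1}, with $\mW=\llbracket 0,b\llbracket$ and $\mW'=\mG_\ell$. That use is unaffected, because $b$ and $b^\ell$ have the same parity for $\ell\geqslant 1$.

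You should therefore drop the hedging and the plan to exploit the parity of $p'$ in the odd case: your own example shows no such argument can exist. Present the result as part one proved, and part two refuted and corrected.

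One minor point on the empty cases: they are not merely degenerate, they are genuine exceptions. For instance $\mW=\varnothing$ and $\mW'=(0,1)$ give $H(\mW\odot\mW')=\varnothing$ while $H(\mW)\cup H(\mW')=\{1\}$. So nonemptiness of both factors is a real hypothesis of the first identity. Likewise, the printed condition $\#(\mW')\in 2\N$ allows $\mW'=\varnothing$, for which the product is empty and hence not cyclical.
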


The proof is trivial.

Applying Proposition \ref{GrayProdFlid} to $\mW=\llbracket 0,b\llbracket$ and $\mW'=\mG_\ell$ for some $\ell\in\N$ gives by a simple induction the following well-known

\begin{corollary}\label{bAryGrayFlid1} For any $b\geqslant 2$ we have $H(\mG)=\{1\}$. Moreover, $\mG$ is cyclical iff $b$ is even.
\end{corollary}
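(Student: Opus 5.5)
We argue by induction on $\ell$, using Proposition \ref{GrayProdFlid} with $\mW=\mA=\llbracket 0,b\llbracket$ and $\mW'=\mG_\ell$. The base facts are these. $H(\mA)=\{1\}$, since consecutive letters $\alpha,\alpha+1$ differ in one position. $H(\mG_0)=\varnothing$ because $\mG_0=\{\varepsilon\}$ has a single element. And $\mG_{\ell+1}=\mA\odot\mG_\ell$. Proposition \ref{GrayProdFlid} gives $H(\mG_{\ell+1})=H(\mA)\cup H(\mG_\ell)$, so by induction $H(\mG_\ell)=\{1\}$ for every $\ell\geqslant1$. The sequence $(\mG_\ell)$ is increasing, because the first block of $\mG_{\ell+1}$ is $0\mG_\ell\sim\mG_\ell$. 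Hence every pair of consecutive elements of $\mG$ already lies in some $\mG_\ell$, and therefore $H(\mG)=\{1\}$.

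For cyclicity, the statement has to be read as a statement about the finite lists $\mG_\ell$ with $\ell\geqslant1$, since the infinite list $\mG$ has no last element. In that reading the claim is that $\mG_\ell$ is cyclical iff $b$ is even. Note that $\#(\mG_\ell)=b^\ell$.

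Suppose $b$ is even. Then $\mA$ is cyclical only when $H(0,b-1)=1\in H(\mA)$, which holds. Also $\#(\mG_\ell)=b^\ell$ is even for $\ell\geqslant1$, so the second part of Proposition \ref{GrayProdFlid} gives that $\mG_{\ell+1}$ is cyclical for $\ell\geqslant1$. The case $\mG_1=\mA$ is direct.

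Suppose $b$ is odd. I would compute $\Init(\mG_\ell)=0^\ell$ directly, and find $\Last(\mG_\ell)$ by induction. The last block of $\mG_{\ell+1}$ is $(b-1)\mG_\ell^{\leftrightarrow (b-1)}=(b-1)\mG_\ell$, since $b-1$ is even. So $\Last(\mG_{\ell+1})=(b-1)\Last(\mG_\ell)$, which gives $\Last(\mG_\ell)=(b-1)^\ell$. Thus $H(\Init,\Last)=\ell$. This equals $1$ only for $\ell=1$, where $\mA$ is cyclical anyway. So for $\ell\geqslant2$ the list $\mG_\ell$ is not cyclical.

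The main obstacle is this interpretive point. Taken literally, the odd case fails at $\ell=1$, so the intended meaning must be "for all large $\ell$" (or $\ell\geqslant2$). I would state that convention explicitly, and then the computation of $\Last(\mG_\ell)$ above settles the odd case.
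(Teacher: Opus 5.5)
Your proof is correct and follows the paper's route, namely induction on $\ell$ via Proposition~\ref{GrayProdFlid} with $\mW=\mA$ and $\mW'=\mG_\ell$. Your explicit computation $\Init(\mG_\ell)=0^\ell$, $\Last(\mG_\ell)=(b-1)^\ell$ for odd $b$ is a genuine supplement. Proposition~\ref{GrayProdFlid} only gives a sufficient condition for cyclicity, so the paper's ``simple induction'' does not by itself cover the odd case. You are also right that the cyclicity claim only makes sense for the finite lists $\mG_\ell$, and that $\mG_1=\mA$ is cyclical for every $b$. With the quantifier placed outside, ``$\mG_\ell$ is cyclical for all $\ell\geqslant 1$ iff $b$ is even'' holds with no exceptional index.

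One caution about the even case. The second part of Proposition~\ref{GrayProdFlid}, as printed, puts the parity hypothesis on the wrong factor. The orientation of the last block $w_{p-1}\mW'^{\leftrightarrow (p-1)}$ of $\mW\odot\mW'$ depends on $p=\#(\mW)$, not on $\#(\mW')$. For instance, $\{0,1,2\}$ is cyclical and $\#(\{0,1\})=2$, yet $\{0,1,2\}\odot\{0,1\}=(00,01,11,10,20,21)$ has $H(00,21)=2$, so it is not cyclical. The correct hypothesis is $\#(\mW)\in 2\N$.

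In your application both $\#(\mA)=b$ and $\#(\mG_\ell)=b^\ell$ are even, so your conclusion survives. The sound justification, however, is the same last-block computation you did for odd $b$. When $b$ is even, the last block of $\mG_{\ell+1}$ is $(b-1)\overleftarrow{\mG_\ell}$. Hence $\Last(\mG_{\ell+1})=(b-1)0^\ell$, which is at Hamming distance $1$ from $\Init(\mG_{\ell+1})=0^{\ell+1}$.
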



Of course, applications of Proposition \ref{GrayProdFlid} go beyond Corollary \ref{bAryGrayFlid1}. For example, consider two lists $\mathcal{W}$ and $\mathcal{W}'$ in $\Flid(\mI)$, put $\mX_0\coloneqq\{\varepsilon\}$ and, for any $\ell\in\N^*$, define $\mathcal{X}_{\ell}$ as either $\mW\odot\mathcal{X}_{\ell-1}$ or $\mW'\odot\mathcal{X}_{\ell-1}$, the choice being made arbitrarily for each $\ell$. The sequence $(\mX_\ell)_{\ell\in\N}$ is thus made of terms in $\Flid(\mI)$. If we want, moreover, $(\mathcal{X}_\ell)_{\ell\in\N}$ to converge to an infinite list, it is sufficient that $\mW$ and $\mW'$ have at least two elements each, the initial one being $\varepsilon$.

To go further, we first need more properties of the Gray product. The ones given in the next proposition are routinely proved.

\begin{proposition}\label{DistribGrayProd} For any finite lists of words $\mathcal{W}$, $\mathcal{W}'$ and $\mathcal{W}''$ we have
\[(\mathcal{W}+\mathcal{W}')\odot \mathcal{W}''=\mathcal{W}\odot\mathcal{W}'' + \mathcal{W}'\odot\mathcal{W}''^{\leftrightarrow\#(\mathcal{W})},\]
\[\overleftarrow{\mathcal{W}\odot\mathcal{W}'}=\overleftarrow{\mathcal{W}}\odot\mathcal{W}'^{\leftrightarrow\#(\mathcal{W})}.\]
\end{proposition}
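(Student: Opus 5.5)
Both identities will be checked directly from Definition \ref{DefGrayProduct}, by reindexing the sums of lists. Write $\mW=(w_n)_{n\in\llbracket 0,p\llbracket}$, $\mW'=(w_{n+p})_{n\in\llbracket 0,p'\llbracket}$, and $\mW''$ arbitrary.

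For the first identity, the sum $\mW+\mW'$ is the list $(w_n)_{n\in\llbracket 0,p+p'\llbracket}$. By definition,
\[(\mW+\mW')\odot\mW''=\sum_{n\in\llbracket 0,p+p'\llbracket}w_n\mW''^{\leftrightarrow n}.\]
We split the index range as $\llbracket 0,p\llbracket$ followed by $\llbracket p,p+p'\llbracket$, using associativity of the sum of lists. The first block is exactly $\mW\odot\mW''$. In the second block we substitute $n=p+m$ with $m\in\llbracket 0,p'\llbracket$. Since $\mW''^{\leftrightarrow(p+m)}=(\mW''^{\leftrightarrow p})^{\leftrightarrow m}$ (only the parity of the exponent matters), the second block is $\mW'\odot\mW''^{\leftrightarrow p}$, with $p=\#(\mW)$.

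One point needs care: $w\mX$ is defined through the normalization length $\|\mX\|$. We therefore observe that $\|\overleftarrow{\mW''}\|=\|\mW''\|$, so $w\,\overleftarrow{\mX}=\overleftarrow{w\mX}$ and the prefixes are glued in the same way in every block.

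For the second identity, reversing a sum of lists reverses the order of the summands and reverses each summand. Also, $\overleftarrow{w\mX}=w\overleftarrow{\mX}$, as just noted. Hence
\[\overleftarrow{\mW\odot\mW'}=\sum_{n\in\rrbracket p,0\rrbracket}w_{n}\mW'^{\leftrightarrow(n+1)}.\]
The list $\overleftarrow{\mW}$ has its $m$-th element equal to $w_{p-1-m}$. So
\[\overleftarrow{\mW}\odot\mW'^{\leftrightarrow p}=\sum_{m\in\llbracket 0,p\llbracket}w_{p-1-m}\mW'^{\leftrightarrow(p+m)}.\]
Setting $n=p-1-m$ turns the exponent $p+m$ into $2p-1-n$, which has the same parity as $n+1$. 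The two sums therefore agree term by term, in the same order.

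The only real obstacle is bookkeeping: matching the reversed index order with the reversed order of summation, and checking the parity of the reflection exponents. The length normalization in $w\mX$ is harmless because reversal preserves $\|\cdot\|$.
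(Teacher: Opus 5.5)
Your proof is correct and is exactly the routine verification the paper has in mind: the paper writes out no proof, saying only that these identities are "routinely proved." Your explicit reindexing, together with the remark that reversal preserves $\|\cdot\|$ (so that $\overleftarrow{w\mX}=w\overleftarrow{\mX}$), supplies precisely the details that statement leaves implicit.
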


\begin{remark} For the purpose of the second proof of Proposition \ref{GrayProdAssociatif}, it is useful to write the following ``generalization'' of the latter equality to any $k\in\N$ (which, of course, reduces to the case $k=0$):
\[(\mathcal{W}\odot\mathcal{W}')^{\leftrightarrow k}=\mathcal{W}^{\leftrightarrow k}\odot\mathcal{W}'^{\leftrightarrow k\#(\mathcal{W})}.\]
\end{remark}

The following result has important consequences.

\begin{proposition}\label{GrayProdAssociatif} The Gray product is associative.
\end{proposition}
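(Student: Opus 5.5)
The approach is to prove $(\mW\odot\mW')\odot\mW''=\mW\odot(\mW'\odot\mW'')$ by expanding both sides as indexed sums over triples $(n,n',n'')$ and comparing them term by term, exactly as in the proof of Proposition \ref{ProductListAssociative}. Write $\mW=(w_n)_{n\in\llbracket 0,p\llbracket}$, $\mW'=(w'_{n'})_{n'\in\llbracket 0,p'\llbracket}$, $\mW''$ with $\#(\mW'')=p''$. Two facts are used throughout. First, prefixing by a word commutes with reversal: $w\overleftarrow{\mX}=\overleftarrow{w\mX}$. Second, $\|\mW'\odot\mW''\|=\|\mW'\|+\|\mW''\|$, since the Gray product has the same elements as the ordinary product and differs only in order. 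With these, the length normalizations $[\cdot]_{\|\cdot\|}$ agree on both sides just as in Proposition \ref{ProductListAssociative}, so only the ordering has to be checked.

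The left-hand side is computed first. As a list, $\mW\odot\mW'$ has elements $w_nw'_{m}$, where $m=n'$ if $n$ is even and $m=p'-1-n'$ if $n$ is odd. The element in global position $N=np'+n'$ receives the factor $\mW''^{\leftrightarrow N}$. Since $N\equiv np'+n'\pmod 2$, we get
\[(\mW\odot\mW')\odot\mW''=\sum_{n}\sum_{n'} w_n\,w'_{\sigma_n(n')}\,\mW''^{\leftrightarrow (np'+n')},\]
where $\sigma_n$ is the identity or the reversal of $\llbracket 0,p'\llbracket$ according to the parity of $n$.

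The right-hand side is computed next: $\mW\odot(\mW'\odot\mW'')=\sum_n w_n(\mW'\odot\mW'')^{\leftrightarrow n}$. By the second identity of Proposition \ref{DistribGrayProd}, in the generalized form of the subsequent remark,
\[(\mW'\odot\mW'')^{\leftrightarrow n}=\mW'^{\leftrightarrow n}\odot\mW''^{\leftrightarrow np'}=\sum_{n'} w'_{\sigma_n(n')}\,\mW''^{\leftrightarrow(np'+n')}.\]
Prefixing this by $w_n$ and summing over $n$ gives exactly the expression obtained for the left-hand side, which proves the proposition.

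The only delicate step is the parity bookkeeping in the reversal identity. The exponent of $\mW''$ at position $n'$ of $\mW'^{\leftrightarrow n}\odot\mW''^{\leftrightarrow np'}$ is $np'+n'$, because Definition \ref{DefGrayProduct} uses the position index $n'$ within the reversed list, not the original index $\sigma_n(n')$. This is precisely what matches the global position count on the left-hand side. Alternatively, one can argue by induction on $p$ using the first distributivity identity of Proposition \ref{DistribGrayProd}, which handles the same parity shift via $\#(\mW\odot\mW')=p\,p'$.
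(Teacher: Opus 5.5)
Your proof is correct and is essentially the paper's: both compare the two sides block by block using the generalized reversal identity $(\mW'\odot\mW'')^{\leftrightarrow n}=\mW'^{\leftrightarrow n}\odot\mW''^{\leftrightarrow np'}$. The only difference is packaging: the paper inducts on $\#(\mW)$, peeling off the last element and using the distributivity identity to handle the parity shift $pp'$, whereas you unroll this into a direct indexed expansion. Your closing alternative is exactly the paper's argument.
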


\begin{proof} We prove the equality $(\mW\odot\mW')\odot\mW''=\mW\odot(\mW'\odot\mW'')$ by induction on $p\coloneqq\#(\mW)$. If $p=0$ then $\mW=\varnothing$ which is the absorbing element, so the desired equality is satisfied whatever $\mW'$ and $\mW''$ are. Now, let assume that the equality hold for any $\mW$, $\mW'$ and $\mW''$ with $\#(\mW)=p$, and consider a triple $(\mW,\mW',\mW'')$ with $\mW=p+1$. Writing $\mathcal{W}\eqqcolon\mathcal{X}+\{w\}$ (assuming $\|\mathcal{X}\|=\|w\|$, without real loss of generality by the remark following \ref{ProductOfLists}), we have $\#(\mX)=p$ so, by the induction hypothesis, we can write $(\mathcal{X}\odot\mathcal{W}')\odot\mathcal{W}''=\mathcal{X}\odot(\mathcal{W}'\odot\mathcal{W}'')$. Put $p'\coloneqq\#(\mathcal{W}')$. With the help of the remark after Proposition \ref{DistribGrayProd} we have
\begin{eqnarray*}
(\mathcal{W}\odot\mathcal{W}')\odot\mathcal{W}''&=&\big((\mathcal{X}+\{w\})\odot\mathcal{W}'\big)\odot\mathcal{W}''\\
&=&\big(\mathcal{X}\odot\mathcal{W}'+w\mathcal{W}'^{\leftrightarrow p}\big)\odot\mathcal{W}''\\
&=&(\mathcal{X}\odot\mathcal{W}')\odot\mathcal{W}''+w\mathcal{W}'^{\leftrightarrow p}\odot\mathcal{W}''^{\leftrightarrow pp'}\\
&=&\mathcal{X}\odot(\mathcal{W}'\odot\mathcal{W}'')+w(\mathcal{W}'\odot\mathcal{W}'')^{\leftrightarrow p}\\
&=&(\mathcal{X}+\{w\})\odot(\mathcal{W}'\odot\mathcal{W}'')\\
&=&\mathcal{W}\odot(\mathcal{W}'\odot\mathcal{W}''),
\end{eqnarray*}
\noindent so we are done.
\end{proof}

We can therefore define a notion of $\odot$-exponentiation:

\begin{corollary}\label{ExponentiationW} Let $\mathcal{W}$ be a finite list in $\mA^*_\sim$. For any $\ell\in\N$, define $\mathcal{W}^{\odot\ell}$ as $\mathcal{W}^{\odot 0}\coloneqq\{\varepsilon\}$ and, for $\ell\in\N^*$, $\mathcal{W}^{\odot\ell}\coloneqq\mathcal{W}\odot\mathcal{W}^{\odot(\ell-1)}$. For any $\ell$, $\ell'\in\N$ we have $\mathcal{W}^{\odot(\ell+\ell')}=\mathcal{W}^{\odot\ell}\odot\mathcal{W}^{\odot\ell'}$.
\end{corollary}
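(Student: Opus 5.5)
The plan is to deduce the identity from the associativity of the Gray product (Proposition \ref{GrayProdAssociatif}) by induction on $\ell$, with $\ell'$ fixed and arbitrary. This is the same pattern as the passage from Proposition \ref{ProductListAssociative} to power-associativity of the ordinary product.

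For the base case $\ell=0$, the left-hand side is $\mathcal{W}^{\odot\ell'}$. The right-hand side is $\{\varepsilon\}\odot\mathcal{W}^{\odot\ell'}$, so I must check that $\{\varepsilon\}$ is a left neutral element for $\odot$. By Definition \ref{DefGrayProduct}, with $p=1$ and $w_0=\varepsilon$, we get $\{\varepsilon\}\odot\mW'=\varepsilon\,\mW'^{\leftrightarrow 0}=\varepsilon\mW'$. Left-concatenating $\varepsilon$ to each element of $\mW'$, normalized to length $\|\mW'\|$, leaves every class in $\mA^*_\sim$ unchanged. Hence $\{\varepsilon\}\odot\mW'=\mW'$, and the base case holds.

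For the inductive step, assume $\mathcal{W}^{\odot(\ell+\ell')}=\mathcal{W}^{\odot\ell}\odot\mathcal{W}^{\odot\ell'}$. By definition, $\mathcal{W}^{\odot(\ell+1+\ell')}=\mathcal{W}\odot\mathcal{W}^{\odot(\ell+\ell')}$. By the induction hypothesis this equals $\mathcal{W}\odot(\mathcal{W}^{\odot\ell}\odot\mathcal{W}^{\odot\ell'})$. Proposition \ref{GrayProdAssociatif} then turns it into $(\mathcal{W}\odot\mathcal{W}^{\odot\ell})\odot\mathcal{W}^{\odot\ell'}$, which is $\mathcal{W}^{\odot(\ell+1)}\odot\mathcal{W}^{\odot\ell'}$ by definition. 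All lists involved are finite, since a Gray product of finite lists is finite, so Proposition \ref{GrayProdAssociatif} applies.

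There is no real obstacle. The only point needing care is the neutrality of $\{\varepsilon\}$, which was not stated explicitly for $\odot$. It is settled by the one-line computation above together with the length normalization convention of Definition \ref{ProductOfLists}.
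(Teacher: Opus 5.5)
Your proof is correct and follows the paper's approach. The paper states this corollary as an immediate consequence of the associativity of $\odot$ (Proposition \ref{GrayProdAssociatif}). Your induction on $\ell$, together with the one-line check that $\{\varepsilon\}$ is a left neutral element for $\odot$, is the standard argument the paper leaves implicit.
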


In the important case of $\mW$ containing $\varepsilon$ and not containing the same element twice, a recodage allows to rewrite it as an alphabet, so in this case the previous corollary is no more than the following crucial one, an application of which will be given in Section \ref{GrayOrder} (Theorem \ref{ExpbAryGray} and consequences):

\begin{corollary}\label{Exponentiation} Let the integer $b\geqslant 2$ be given, defining the standard $b$-ary Gray code $(\mG_\ell)_{\ell\in\N}$. For any $\ell$, $\ell'\in\N$ we have ${\mG}^{\odot(\ell+\ell')}={\mG}^{\odot\ell}\odot{\mG}^{\odot\ell'}$.

\end{corollary}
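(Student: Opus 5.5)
The plan is to reduce the statement to Corollary \ref{ExponentiationW} applied with $\mW\coloneqq\mA=\llbracket 0,b\llbracket$, seen as a finite list of words of magnitude at most $1$ (the letter $0$ being identified with $\varepsilon$ through $\sim$). I read the notation $\mG^{\odot\ell}$ in the statement as the $\ell$-th term $\mG_\ell$ of the standard $b$-ary Gray code sequence, that is, the identity to prove is $\mG_{\ell+\ell'}=\mG_\ell\odot\mG_{\ell'}$.

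The first step is to check that $\mG_\ell=\mA^{\odot\ell}$ for every $\ell\in\N$, by induction on $\ell$. For $\ell=0$ both sides equal $\{\varepsilon\}$. For the induction step, recall from Section \ref{IntroOperations} that $\mG_{\ell+1}=\sum_{\alpha\in\mA}\alpha\,\mG_\ell^{\leftrightarrow\alpha}$. Since the $\alpha$-th element of the list $\mA$ (indexing from $0$) is the letter $\alpha$ itself, Definition \ref{DefGrayProduct} gives exactly $\mA\odot\mG_\ell=\sum_{n\in\llbracket 0,b\llbracket}n\,\mG_\ell^{\leftrightarrow n}$. Hence $\mG_{\ell+1}=\mA\odot\mG_\ell=\mA\odot\mA^{\odot\ell}=\mA^{\odot(\ell+1)}$.

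The one point needing care is the length normalisation in the product $w\mW'$. Left-concatenating $\alpha$ to $\mG_\ell$ must be done on representatives of length $\|\mG_\ell\|$, as in Definition \ref{ProductOfLists}. This value equals $\ell$, since $\mG_\ell$ contains $(b-1)0^{\ell-1}$ or similar words of full magnitude for $\ell\geqslant 1$, which is a trivial induction. So the informal description from the introduction agrees with the formal product. I do not expect any real obstacle here; it is only bookkeeping.

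Once this identification is established, Corollary \ref{ExponentiationW}, which rests on the associativity of $\odot$ (Proposition \ref{GrayProdAssociatif}), yields $\mA^{\odot(\ell+\ell')}=\mA^{\odot\ell}\odot\mA^{\odot\ell'}$. Translating back gives $\mG_{\ell+\ell'}=\mG_\ell\odot\mG_{\ell'}$, which is the claim. As remarked before the statement, this is just the general corollary in the special case where the list contains $\varepsilon$ and has distinct elements, so that it can be recoded as an alphabet, here literally $\mA$.
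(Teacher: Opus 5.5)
Your proposal is correct and matches the paper's route: the paper presents this corollary as the specialization of Corollary \ref{ExponentiationW} to $\mW=\mA=\llbracket 0,b\llbracket$, using $\mG_{\ell+1}=\mA\odot\mG_\ell$ (so $\mG_\ell=\mA^{\odot\ell}$) together with the associativity of $\odot$ from Proposition \ref{GrayProdAssociatif}. Your reading of $\mG^{\odot\ell}$ as $\mG_\ell$ is also how the paper uses the result later, e.g.\ $\mG_{\ell+1}=\mG_\ell\odot\mA$ in the proof of Theorem \ref{bAryRight}.
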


As an application of Corollary \ref{Exponentiation}, we have the following alternative definition of the standard $b$-ary Gray code by right-appending letters, generalizations of which will be investigated in the forthcoming paper \cite{Rittaud3}:

\begin{theorem}\label{bAryRight} For $b\geqslant 2$ be a given integer, the sequence $(\mG_\ell)_{\ell\in\N}$ defining the standard $b$-ary Gray code can be defined in the following way. Start with $\mG_0=\{\varepsilon\}$ and assume $\mG_{\ell}$ is built for some $\ell\in\N$. Write each element of $\mG_{\ell}$ $b$ times in a row, then right-append one letter to each element of this new list, going from $0$ to $b-1$, then from $b-1$ to $0$, then again from $0$ to $b-1$, etc. The list thus obtained is equal to $\mG_{\ell+1}$.
\end{theorem}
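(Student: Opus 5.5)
The plan is to read the right-appending description as a Gray product with the alphabet on the right, and then to invoke Corollary \ref{Exponentiation}. Write $\mA\coloneqq\llbracket 0,b\llbracket$, viewed as a list of one-letter words, so that $\mG_1=\mA\odot\{\varepsilon\}=\mA$ and, by induction, $\mG_\ell=\mA^{\odot\ell}$. The intermediate claim is that the list produced from $\mG_\ell$ by the procedure in the statement is exactly $\mG_\ell\odot\mA$. Once this is established, Corollary \ref{Exponentiation} (equivalently Corollary \ref{ExponentiationW} applied to $\mW=\mA$) gives
\[
\mG_\ell\odot\mA=\mA^{\odot\ell}\odot\mA^{\odot 1}=\mA^{\odot(\ell+1)}=\mG_{\ell+1},
\]
which is the theorem.

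To prove the claim I will unfold Definition \ref{DefGrayProduct}. Write $\mG_\ell=(g_n)_{n\in\llbracket 0,b^\ell\llbracket}$. Then
\[
\mG_\ell\odot\mA=\sum_{n\in\llbracket 0,b^\ell\llbracket}g_n\mA^{\leftrightarrow n}.
\]
Here $\|\mA\|=1$, so $g_n\mA$ is simply the list obtained by writing $g_n$ $b$ times in a row and right-appending $0,1,\dots,b-1$ in turn. The term $g_n\overleftarrow{\mA}$ is the same list with the letters $b-1,\dots,0$ appended instead. Since the exponent $\leftrightarrow n$ alternates with the parity of $n$, the appended letters run $0\to b-1$ for $g_0$, then $b-1\to 0$ for $g_1$, and so on. This is precisely the procedure described in the statement.

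The only point requiring care is the bookkeeping of representatives under $\sim$. In the product $g_n\mA$, the word $g_n$ is placed to the left of a single letter, and the length normalisation $[\cdot]_{\|\mA\|}$ is trivial here. So "right-appending a letter to $g_n$" agrees with Definition \ref{ProductOfLists}, provided $g_n$ is taken with a fixed representative length. A leading $0$ on $g_n$ does not change its class, so this causes no problem.

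I do not expect a real obstacle. The substance is the associativity already proved in Proposition \ref{GrayProdAssociatif}, which is what allows the $\odot$-powers to be split as $\mA^{\odot\ell}\odot\mA$.
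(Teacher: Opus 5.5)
Your proposal is correct and matches the paper's proof: the paper also identifies the right-appending procedure with $\mG_\ell\odot\mA$ and concludes by Corollary \ref{Exponentiation} with $\ell'=1$. Your extra unfolding of Definition \ref{DefGrayProduct} and your check of representatives under $\sim$ make explicit what the paper leaves implicit.
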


\begin{proof} Simple application of Corollary \ref{Exponentiation} with $\ell'=1$, i.e. $\mG_{\ell+1}=\mG_{\ell}\odot\mA$.
\end{proof}

Also, we can set up for the standard $b$-ary Gray code an algorithm of exponentiation by squares to get $\lim_\ell(\mG_\ell)$ as the limit of the subsequence $(\mG_{2^m})_{m\in\N}$ by computing $\mG_{2^m}=\mG_{2^{m-1}}\odot\mG_{2^{m-1}}$ for $m\in\N^*$. The sequences $(\mathcal{X}_\ell)_{\ell\in\N}$ introduced after Proposition \ref{GrayProdFlid} is another example of application of Corollary \ref{Exponentiation}.

To obtain by this kind of process the $k$-bonacci Gray codes presented in Section \ref{IntroOperations} as well as other generalizations, the coming section will adapt the Gray product in a similar way the ordinary product of lists was adapted to obtain the lazy $k$-bonacci numeration system.

\subsection{The Gray order}\label{GrayOrder}

Until now, we used the notation $\mG$ and $\mG_\ell$ for the standard $b$-ary Gray code, since $b$ was fixed one and for all. In the beginning of the present section, we sometimes use $\mG^{(b)}$ and $\mG^{(b)}_\ell$ as well, when it is necessary to distinguish between different values of $b$ possibly involved.

\begin{proposition}\label{InclusionDesGb} The sequence $(\mG^{(b)})_{b\in\llbracket 2,+\infty\llbracket}$ is increasing for the inclusion of lists.
\end{proposition}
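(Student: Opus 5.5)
The plan is to prove a sharper statement by induction on $\ell$: for every $b\geqslant 2$ and every $\ell\in\N$, the sublist of $\mG^{(b+1)}_\ell$ made of its elements on the alphabet $\llbracket 0,b\llbracket$ is exactly $\mG^{(b)}_\ell$. Write $\pi_b(\mX)$ for the sublist of a list $\mX$ obtained by keeping only the words all of whose letters lie in $\llbracket 0,b\llbracket$. This filtering is well defined on $\mA^*_\sim$, since adding or removing leading zeroes does not change whether the letters lie in $\llbracket 0,b\llbracket$. It commutes with the elementary list operations: $\pi_b(\mX+\mX')=\pi_b(\mX)+\pi_b(\mX')$ and $\pi_b(\mX^{\leftrightarrow d})=\pi_b(\mX)^{\leftrightarrow d}$, because a sublist keeps the relative order. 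Moreover, for a letter $\alpha$, we have $\pi_b(\alpha\mX)=\alpha\,\pi_b(\mX)$ if $\alpha<b$ and $\pi_b(\alpha\mX)=\varnothing$ if $\alpha=b$.

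For the induction, the case $\ell=0$ is trivial, since both lists are $\{\varepsilon\}$. For the inductive step, use $\mG^{(b+1)}_{\ell+1}=\sum_{\alpha\in\llbracket 0,b\rrbracket}\alpha\big(\mG^{(b+1)}_\ell\big)^{\leftrightarrow\alpha}$ and apply $\pi_b$ termwise. The term $\alpha=b$ disappears. Each remaining term becomes $\alpha\big(\pi_b(\mG^{(b+1)}_\ell)\big)^{\leftrightarrow\alpha}=\alpha\big(\mG^{(b)}_\ell\big)^{\leftrightarrow\alpha}$ by the induction hypothesis. Summing over $\alpha\in\llbracket 0,b\llbracket$ gives $\mG^{(b)}_{\ell+1}$. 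One should check here that the length normalization in $\alpha\mX$ (padding to length $\|\mX\|$) is compatible on both sides. This holds because $\|\mG^{(b)}_\ell\|=\|\mG^{(b+1)}_\ell\|=\ell$ for $\ell\geqslant 1$, as $\mG^{(b)}_\ell$ contains $1\,0^{\ell-1}$.

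Finally, pass to the limit. Write $\mG^{(b+1)}_{\ell+1}=\mG^{(b+1)}_\ell+\mX_{\ell+1}$. Then $\pi_b(\mG^{(b+1)}_{\ell+1})=\pi_b(\mG^{(b+1)}_\ell)+\pi_b(\mX_{\ell+1})$, which is consistent with the increments of the sequence $(\mG^{(b)}_\ell)_\ell$ by the identity just proved. Since $\pi_b$ commutes with infinite sums of lists, we get
\[\pi_b\big(\mG^{(b+1)}\big)=\lim_\ell \pi_b\big(\mG^{(b+1)}_\ell\big)=\mG^{(b)}.\]
Hence $\mG^{(b)}$ is a sublist of $\mG^{(b+1)}$, which is the claimed increase for the inclusion of lists.

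There is no real obstacle. The only delicate point is bookkeeping: making sure that the identification $w\sim 0w$ and the padded representatives used in $\alpha\mX$ do not interfere with the filtering, which the magnitude remark above settles.
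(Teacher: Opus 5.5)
Your proof is correct and follows the paper's route: the paper's proof is just the assertion that $\mG^{(b)}_\ell\subset\mG^{(b+1)}_\ell$ holds by induction on $\ell$, and your argument is a detailed version of that same induction, sharpened to the identity $\pi_b(\mG^{(b+1)}_\ell)=\mG^{(b)}_\ell$. You also make explicit two points the paper leaves implicit: the padding check via $\|\mG^{(b)}_\ell\|=\|\mG^{(b+1)}_\ell\|=\ell$, which is what makes $\pi_b(\alpha\mX)=\alpha\,\pi_b(\mX)$ valid here (it fails for arbitrary $\mX$ when $\|\pi_b(\mX)\|<\|\mX\|$), and the passage to the limit.
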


\begin{proof} By induction, we have $\mG^{(b)}_\ell\subset\mG^{(b+1)}_\ell$ for any $b$ and any $\ell$.
\end{proof}

The sequence of lists $(\mG^{(b)})_{b\in\llbracket 2,+\infty\llbracket}$ is not increasing in the sense given in the end of Section \ref{Lists}, but the fact that it is increasing for the order defined by inclusion of lists still allows to consider its set limit, which is simply $\N^*_\sim$ (where, of course, $\N^*$ stands for the set of finite words on $\N$ and not $\N\backslash\{0\}$). Moreover, since the inclusion in Proposition \ref{InclusionDesGb} is order-preserving, the following definition, close to the one introduced in \cite{Vajnovszki}, makes sense:

\begin{definition}\label{DefGrayOrder}  The {\em Gray order}, denoted by $\preccurlyeq$ (or $\prec$ when the inequality is strict), is the total order on $\N^*_\sim$ for which, for any $w$, $w'\in\N^*_\sim$, we have $w\preccurlyeq w'$ iff, for some $b\in\llbracket 2,+\infty\llbracket$ big enough so that $\mG^{(b)}$ contains both $w$ and $w'$, $w$ is before $w'$ in $\mG^{(b)}$.
\end{definition}

In particular, for a list $\mW$ monotonic for the Gray order we can consider {\em Gray intervals} on it: if $x$ and $x'$ are two elements of $\N^*_\sim$ with $x\preccurlyeq x'$, we set $\llbracket x, x'\rrbracket_\mW\coloneqq\{w\in\mW\ :\ x\preccurlyeq w\preccurlyeq x'\}$, and $\llbracket x',x\rrbracket_\mW\coloneqq\overleftarrow{\llbracket x, x'\rrbracket_\mW}$. The following notion shares links with what is called {\em parity problems} in \cite{Squire}:

\begin{definition}\label{IndexParityProperty} Define $\Sigma$\ :\ $\N^*_\sim\longrightarrow\Z/2\Z$ by $\Sigma((\omega_i)_{i\in\N})\coloneqq\left(\sum_{i\in\N}\omega_i\right)\bmod 2$.  A list $\mW=(w_n)_{n\in\llbracket 0,p\llbracket}$ in $\N^*_\sim$ satisfies the {\em index parity property} if, for all $n\in\llbracket 0,p\llbracket$, we have $\Sigma(w_n)=n\bmod 2$.
\end{definition}

\begin{definition}\label{GrayDirectProduct} The {\em Gray cartesian product} of the lists $\mW\coloneqq(w_n)_{n\in\llbracket 0,p\llbracket}$ and $\mW'\coloneqq(w'_n)_{n\in\llbracket 0,p'\llbracket}$ is the following list of pairs:
\[\mW\otimes\mW'\coloneqq\sum_{n\in\llbracket 0,p\llbracket}\sum_{n'\in\llbracket 0,p'\llbracket^{\leftrightarrow n}}(w_n,w'_n).\]

In particular, we have

\[\mW\odot\mW'=\sum_{(i,i')\in\llbracket 0,p\llbracket\otimes\llbracket 0,p'\llbracket}w_i[w'_{i'}]_{\|\mW'\|}.\]

\end{definition}

\begin{proposition}\label{IndexParityPreservationOdot} If $\mW$ and $\mW'$ are two finite lists in $\N^*_\sim$ that both satisfy the index parity property, then so does $\mW\odot\mW'$.
\end{proposition}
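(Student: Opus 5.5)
We compute the parity of the word sitting at each position of $\mW\odot\mW'$ directly from Definition \ref{DefGrayProduct}. Write $\mW=(w_n)_{n\in\llbracket 0,p\llbracket}$ and $\mW'=(w'_{n'})_{n'\in\llbracket 0,p'\llbracket}$. By definition, $\mW\odot\mW'$ is the concatenation over $n\in\llbracket 0,p\llbracket$ of the blocks $w_n\mW'^{\leftrightarrow n}$, and each block has $p'$ elements. So the element at global index $N=np'+j$, with $j\in\llbracket 0,p'\llbracket$, is $w_n[w'_{n'}]_{\|\mW'\|}$. Here $n'=j$ if $n$ is even and $n'=p'-1-j$ if $n$ is odd. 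Padding with leading zeroes does not change the digit sum, and the digit sum of a concatenation is the sum of the digit sums. Hence $\Sigma$ of this element is $\Sigma(w_n)+\Sigma(w'_{n'})$, which by the index parity property of both lists equals $n+n' \bmod 2$.

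We must show that $n+n'\equiv np'+j \pmod 2$ in both cases.

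If $n$ is even, then $n'=j$, so $n+n'\equiv j$. Also $np'\equiv 0$ because $n$ is even, so the two sides agree.

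If $n$ is odd, then $n'=p'-1-j$, so $n+n'\equiv 1+p'-1-j\equiv p'+j \pmod 2$, using $-j\equiv j$. On the other side, $np'+j\equiv p'+j$ because $n$ is odd. Again the two sides agree.

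Thus $\Sigma$ of the $N$-th element is $N\bmod 2$ for every $N$, which is the index parity property for $\mW\odot\mW'$.

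The only delicate point is the index bookkeeping in the reversed blocks: the reversal sends position $j$ to $p'-1-j$, and the extra $-1$ is exactly compensated by the oddness of $n$. This is also why no parity assumption on $p'$ is needed. One should also check the degenerate cases $p=0$ or $p'=0$, where the product is empty and the property holds vacuously.

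Alternatively, one could argue by induction on $\#(\mW)$ using Proposition \ref{DistribGrayProd}, in the spirit of the proof of Proposition \ref{GrayProdAssociatif}. The direct computation above is shorter, so that is the one we would write.
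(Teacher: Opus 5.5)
Your proposal is correct and follows essentially the same route as the paper: both compute the position of $w_n[w'_{n'}]_{\|\mW'\|}$ in $\mW\odot\mW'$ as $np'+n'$ or $np'+(p'-1-n')$ depending on the parity of $n$, and check that this is congruent to $n+n'$ modulo $2$. The paper indexes by $n'$ and writes the position as $p'n+T^n(n')$ with $T(x)=(p'-1)-x$, whereas you index by the position $j$ within each block; the parity check is the same.
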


\begin{proof} Write $\mW=(w_i)_{i\in\llbracket 0,p\llbracket}$ and $\mW'=(w'_i)_{i\in\llbracket 0,p'\llbracket}$, so that, as in the previous definition:
\[\mW\odot\mW'=\sum_{(i,i')\in\llbracket 0,p\llbracket\otimes\llbracket 0,p'\llbracket}w_i[w'_{i'}]_{\|\mW'\|}.\]

In this list, the index of the element $w_i[w'_{i'}]_{\|\mW'\|}$ given by the pair $(i,i')$ is equal to $p'i+T^i(i')$, where $T$ is the function defined by $T(x)=(p'-1)-x$. By hypothesis on $\mW$ and $\mW'$ we also have, the following equalities being understood modulo $2$:
\[\Sigma\left(w_i [w'_{i'}]_{\|\mW'\|}\right)=\Sigma(w_i)+\Sigma(w'_{i'})=i+i'\]
\noindent and
\[p'i+T^i(i')=i'\mathbf{1}_{2\N}(i)+(1+i')\mathbf{1}_{1+2\N}(i)=i+i',\]
\noindent so $\mW\odot\mW'$ satisfies the index parity property.\end{proof}



\begin{corollary}\label{GellFlid1} For any $b\geqslant 2$, $\mG^{(b)}$ satisfy the index parity property.
\end{corollary}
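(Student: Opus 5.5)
The plan is an induction on $\ell$, using Proposition \ref{IndexParityPreservationOdot} at each step. The statement concerns the limit list $\mG^{(b)}=\lim_\ell(\mG^{(b)}_\ell)$. Since the sequence $(\mG^{(b)}_\ell)_{\ell\in\N}$ is increasing, every $\mG^{(b)}_\ell$ is an initial segment of the limit, so an element keeps the same index in $\mG^{(b)}_\ell$ and in $\mG^{(b)}$. It therefore suffices to prove that each finite list $\mG^{(b)}_\ell$ satisfies the index parity property.

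\textbf{Base cases.} $\mG^{(b)}_0=\{\varepsilon\}$ has a single element at index $0$, and $\Sigma(\varepsilon)=0$, so the property holds. Next, the alphabet $\mA=\llbracket 0,b\llbracket$, viewed as a list of one-letter words, has its element $\alpha$ at index $\alpha$ with $\Sigma(\alpha)=\alpha\bmod 2$. So $\mA$ satisfies the index parity property as well.

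\textbf{Inductive step.} Assume $\mG^{(b)}_\ell$ satisfies the property. By definition $\mG^{(b)}_{\ell+1}=\mA\odot\mG^{(b)}_\ell$, which is the formula $\mG_{\ell+1}=\sum_{\alpha\in\mA}\alpha\,\mG_\ell^{\leftrightarrow\alpha}$ rewritten as a Gray product. Both factors are finite lists in $\N^*_\sim$ with the index parity property, so Proposition \ref{IndexParityPreservationOdot} applies directly and gives the property for $\mG^{(b)}_{\ell+1}$.

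\textbf{Point needing care.} In Proposition \ref{IndexParityPreservationOdot}, the element $w_i[w'_{i'}]_{\|\mW'\|}$ pads $w'_{i'}$ to a fixed length. One has to check that this padding does not change $\Sigma$, which is true because the added letters are leading zeros and $\Sigma$ is well defined on $\sim$-classes. Apart from this, the argument is a direct application of earlier results.
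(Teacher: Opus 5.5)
Your proposal is correct and follows the paper's own argument: an induction on $\ell$ applying Proposition \ref{IndexParityPreservationOdot} with $\mW=\llbracket 0,b\llbracket$ and $\mW'=\mG^{(b)}_\ell$. Your extra remarks are sound but only make explicit what the paper leaves implicit: the base cases, the passage from the finite lists $\mG^{(b)}_\ell$ to the limit $\mG^{(b)}$ via initial segments, and the fact that padding with leading zeros leaves $\Sigma$ unchanged.
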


\begin{proof} Simple induction on $\ell\in\N$ using Proposition \ref{IndexParityPreservationOdot} with $\mW=\llbracket 0,b\llbracket$ and $\mW'=\mG^{(b)}_\ell$.
\end{proof}

Corollary \ref{GellFlid1} shows that $(\Sigma(\mG^{(b)}))=(n\bmod 2)_{n\in\N}$ for any $b\geqslant 2$. We leave as a question to investigate the behaviour of the the sum-of-digits sequence defined by $\mG^{(b)}$.

The following result is a natural consequence of what precedes and will be of great importance.

\begin{theorem}\label{ExpbAryGray} For any integer $b\geqslant 2$, any $\ell\in\N$ and any $\ell'\in\llbracket 0,\ell\rrbracket$ we have
\[\mG_\ell=\sum_{g\in\mG_{\ell'}}g\mG_{\ell-\ell'}^{\leftrightarrow\Sigma(g)}.\]
\end{theorem}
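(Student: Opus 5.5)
We will combine Corollary \ref{Exponentiation}, which identifies $\mG_\ell$ with $\mA^{\odot\ell}$, with the index parity property of Corollary \ref{GellFlid1}. Put $\mA\coloneqq\llbracket 0,b\llbracket$. By definition $\mG_\ell=\mA^{\odot\ell}$ for every $\ell\in\N$, and Corollary \ref{Exponentiation} (that is, associativity of $\odot$ from Proposition \ref{GrayProdAssociatif}) gives
\[\mG_\ell=\mG_{\ell'}\odot\mG_{\ell-\ell'}\]
for any $\ell'\in\llbracket 0,\ell\rrbracket$.

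Next we expand the right-hand side with Definition \ref{DefGrayProduct}. Write $\mG_{\ell'}=(g_n)_{n\in\llbracket 0,p\llbracket}$ with $p=b^{\ell'}$. Then
\[\mG_{\ell'}\odot\mG_{\ell-\ell'}=\sum_{n\in\llbracket 0,p\llbracket}g_n\,\mG_{\ell-\ell'}^{\leftrightarrow n}.\]
Here $g_n\mG_{\ell-\ell'}$ is understood with the length normalization of Definition \ref{ProductOfLists}, which matches the notation $g\mG_{\ell-\ell'}$ of the statement.

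It remains to replace the exponent $n$ by $\Sigma(g_n)$. Since $\mW^{\leftrightarrow d}$ depends only on the parity of $d$, we need $n\equiv\Sigma(g_n)\pmod 2$. This is exactly the index parity property of $\mG_{\ell'}$. Corollary \ref{GellFlid1} states it for the limit list $\mG$, and $\mG_{\ell'}$ is an initial interval of $\mG$, so the property holds with the same indices. Alternatively, one can run the induction with Proposition \ref{IndexParityPreservationOdot} directly on $\mG_{\ell'}=\mA\odot\mG_{\ell'-1}$, starting from the fact that $\mA$ itself trivially has the index parity property ($\Sigma(\alpha)=\alpha\bmod 2$). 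Substituting gives the announced formula.

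The only point needing care is this last step: the parity statement must be applied to the finite list $\mG_{\ell'}$ with its own indexation, not merely to $\mG$. The increasing-sequence structure makes the indices agree, so this causes no real difficulty. The heavy lifting, namely associativity of the Gray product, has already been done.
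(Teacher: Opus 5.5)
Your proof is correct and follows the paper's argument exactly: the paper also writes $\mG_\ell=\mG_{\ell'}\odot\mG_{\ell-\ell'}$ via Corollary \ref{Exponentiation}, then replaces the reversal exponent $n$ by $\Sigma(g_n)$ using the index parity property of $\mG_{\ell'}$ from Corollary \ref{GellFlid1}. Your remark that this property must be read on $\mG_{\ell'}$ with its own indexing is a useful detail the paper leaves implicit; it holds because $\mG_{\ell'}$ is an initial segment of $\mG$, or directly by induction with Proposition \ref{IndexParityPreservationOdot}.
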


\begin{proof} By Corollary \ref{Exponentiation} we have $\mG_\ell=\mG_{\ell'}\odot\mG_{\ell-\ell'}$, hence the result by Corollary \ref{GellFlid1} applied to $\mG_{\ell'}$.
\end{proof}

As an application of Theorem \ref{ExpbAryGray}, consider the following definition, first introduced in \cite{VW}.

\begin{definition}\label{PrefPart} The list $\mathcal{W}$ in $\N^*_\sim$ is {\em prefix-partitioned} if, for any $\tilde{w}\in\N^*_\sim$ and any $i\in\N$, the sublist $\{w\in\mW\ :\ [w]^i=\tilde{w}\}$ is an interval of $\mW$.

\end{definition}

An first easy fact is that

\begin{theorem}\label{GrayOrderPrefPartGell} For any $b\geqslant 2$, $\mG$ is prefix-partitioned.
\end{theorem}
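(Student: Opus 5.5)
The plan is to reduce the statement to the finite lists $\mG_\ell$ and then read the prefix structure directly off Theorem \ref{ExpbAryGray}.

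First, the reduction. Since $(\mG_\ell)_{\ell\in\N}$ is increasing with limit $\mG$, a finite set of elements of $\mG$ that all lie in $\mG_\ell$ form an interval of $\mG$ as soon as they form an interval of $\mG_\ell$. Indeed, $\mG_\ell$ is itself an initial interval of $\mG$, and an interval of an initial interval is an interval.

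Fix $\tilde w\in\N^*_\sim$ and $i\in\N$, and set $\mathcal{S}\coloneqq\{w\in\mG : [w]^i=\tilde w\}$. If $\mathcal{S}$ is empty there is nothing to prove, so assume it is not. I would treat two cases according to whether $\tilde w$ has a nonzero letter.

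\emph{Case 1: $\tilde w\not\sim\varepsilon$.} Every element of $\mathcal{S}$ has magnitude $i+\|\tilde w\|$. Take $\ell\coloneqq i+\|\tilde w\|$, so that $\mathcal{S}\subset\mG_\ell$, since $\mG_\ell$ consists of all words of magnitude at most $\ell$. Apply Theorem \ref{ExpbAryGray} with $\ell'\coloneqq\ell-i$:
\[\mG_\ell=\sum_{g\in\mG_{\ell'}}g\,\mG_i^{\leftrightarrow\Sigma(g)}.\]
Representing all words with exactly $\ell$ letters, the block $g\,\mG_i^{\leftrightarrow\Sigma(g)}$ consists precisely of the words whose left factor $[w]^i$ equals $g$. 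The elements of $\mG_{\ell'}$ are pairwise distinct, so exactly one $g$ is equivalent to $\tilde w$, and $\mathcal{S}$ is exactly that single block. Each block is a consecutive piece of the sum, hence an interval of $\mG_\ell$, and by the reduction above an interval of $\mG$.

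\emph{Case 2: $\tilde w\sim\varepsilon$.} Here $\mathcal{S}=\{w : \|w\|\leqslant i\}$, which is exactly the underlying set of $\mG_i$, an initial interval of $\mG$. Alternatively, Theorem \ref{ExpbAryGray} again gives this as the block indexed by $g=0^{\ell'}$, which is the first block.

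I expect the main obstacle to be bookkeeping rather than substance. One has to check that identifying $[w]^i$ with $g$ is compatible with the equivalence $\sim$ and the length normalization used in Definition \ref{ProductOfLists}. The convenient choice of $\ell$ in Case 1, with all words written at full length $\ell$, is what makes this matching transparent.
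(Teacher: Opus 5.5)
Your proof is correct and is essentially the paper's argument: the paper also takes $\ell'=\|\tilde w\|$ and $\ell=\ell'+i$, and reads off from Theorem~\ref{ExpbAryGray} that $\{w\in\mG : [w]^i=\tilde w\}$ is the single consecutive block $\tilde w\,\mG_{i}^{\leftrightarrow\Sigma(\tilde w)}$ of $\mG_\ell$. Your explicit reduction from intervals of $\mG_\ell$ to intervals of $\mG$, and your separate treatment of $\tilde w\sim\varepsilon$, only make precise points that the paper leaves implicit.
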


\begin{proof} Let $\tilde{w}\in\llbracket 0,b\llbracket^*_\sim$ and $i\in\N$. Put $\ell'\coloneqq\|\tilde{w}\|$ and $\ell\coloneqq\ell'+i$. What we have to prove is that $\{w\in\mG\ :\ [w]^i=\tilde{w}\}$ is an interval of $\mG_\ell$. By Theorem \ref{ExpbAryGray} and the definition of the Gray order we have
\[\mG_\ell=\left(\sum_{g\in\llbracket\varepsilon,\tilde{w}\llbracket_{\mG_{\ell'}}}g\mG_{\ell-\ell'}^{\leftrightarrow\Sigma(g)}\right) + \tilde{w}\mG_{\ell-\ell'}^{\leftrightarrow\Sigma(\tilde{w})} + \left(\sum_{g\in\rrbracket\tilde{w},\max(\mG_{\ell'})\rrbracket_{\mG_{\ell'}}}g\mG_{\ell-\ell'}^{\leftrightarrow\Sigma(g)}\right),\]
\noindent so we are done.
\end{proof}

\begin{corollary}\label{GrayOrderPrefPart} Any list in $\N^*_\sim$ which is monotonic for the Gray order is prefix-partitioned.
\end{corollary}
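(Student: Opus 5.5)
Let $\mW$ be a list in $\N^*_\sim$ that is monotonic for the Gray order. Fix $\tilde{w}\in\N^*_\sim$ and $i\in\N$, and put $\mS\coloneqq\{w\in\mW\ :\ [w]^i=\tilde{w}\}$. The plan is to reduce the claim to Theorem \ref{GrayOrderPrefPartGell}. The key observation is that the condition ``$\mS$ is an interval of $\mW$'' can be checked on three elements at a time. Suppose $x,z\in\mS$ and $y\in\mW$ lies between $x$ and $z$ in $\mW$; we must show $y\in\mS$. By symmetry (reversing $\mW$ if necessary, which keeps it Gray-monotonic), we may assume $\mW$ is increasing for $\preccurlyeq$, so that $x\preccurlyeq y\preccurlyeq z$.

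Next we pass to a standard Gray code. Choose $b\geqslant 2$ larger than every letter occurring in $x$, $y$, $z$ and $\tilde{w}$. By Definition \ref{DefGrayOrder}, the relative order of $x,y,z$ under $\preccurlyeq$ is then their order of appearance in $\mG^{(b)}$. Proposition \ref{InclusionDesGb} makes this choice of $b$ legitimate.

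Now apply Theorem \ref{GrayOrderPrefPartGell} to $\mG^{(b)}$. The sublist $\{w\in\mG^{(b)}\ :\ [w]^i=\tilde{w}\}$ is an interval of $\mG^{(b)}$. It contains $x$ and $z$, and $y$ sits between them in $\mG^{(b)}$, so $y$ belongs to it, that is, $[y]^i=\tilde{w}$. Hence $y\in\mS$. Since $x,y,z$ were arbitrary, $\mS$ is an interval of $\mW$.

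The only delicate point is the conventions. ``Monotonic'' has to cover both increasing and decreasing lists; this is handled by reversal, since reversing a list preserves the property of being an interval. Also, $[w]^i$ is taken on representatives modulo $\sim$ exactly as in Theorem \ref{GrayOrderPrefPartGell}, and the choice of $b$ does not affect this. No genuine obstacle is expected beyond this bookkeeping.
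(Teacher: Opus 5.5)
Your proof is correct and takes the same route as the paper: pass to a single standard Gray code $\mG^{(b)}$ with $b$ large enough, then invoke Theorem \ref{GrayOrderPrefPartGell}. Where you differ is in how you localize. The paper splits into the cases of finite and infinite $\mW'$ and uses the fact that prefix-partitioning passes to sublists. Your three-element convexity check handles both cases at once, and it never needs a large segment of $\mW$, whose letters may be unbounded, to lie inside one $\mG^{(b)}$. Because the argument is symmetric in $x$ and $z$, you also never have to reverse a possibly infinite decreasing list.
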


\begin{proof} Without loss of generality, consider a list $\mW$ strictly increasing. Let $\tilde{w}\in\llbracket 0,b\llbracket^*$ for some $b\geqslant 2$, let $i\in\N$, and consider the sublist $\mW'\coloneqq\{w\in\mW\ :\ [w]^i=\tilde{w}\}$. If $\mW'$ is finite, then it is a sublist of $\mG^{(b')}$ for some big enough $b'\geqslant b$. Since a sublist of a prefix-partitioned list is also prefix-partitioned, the result is given by Theorem \ref{GrayOrderPrefPartGell}. Now, if $\mW'$ is infinite, let $y\in\mW$. Since $\llbracket\varepsilon,y\rrbracket_\mW$ is finite, the previous argument applies to $\mW'\cap\llbracket 0,y\rrbracket$, which is therefore an interval for any $y\in\mW$. Hence, $\mW'$ is itself an interval.
\end{proof}

To provide some practical characterizations of the Gray order, let us define first the following notation:

\begin{definition} Let $\leqslant$ be a (total or partial) order on some set (and $<$ the corresponding strict order). For $x$, $y$ in the set and $m\in\Z/2\Z$ (or $m\in\Z$), the notation $x\stackrel{m}{\scriptstyle{<>}}y$ means $x<y$ if $m=0$ (or $m\in2\Z$) and $x>y$ if $m=1$ (or $m\in 1+2\Z$).
\end{definition}

Here are now some ways to know the Gray order between elements of $\N^*_\sim$:

\begin{proposition}\label{CharacterizationsGrayOrder} Let $w=(\omega_i)_{i\in\N}$ and $w'=(\omega'_i)_{i\in\N}$ be two distincts elements of $\N^*_\sim$.

\begin{itemize}

\item Let $k$ be the biggest index for which the letters $\omega_k$ and $\omega'_k$ are not both $0$(i.e. $k\coloneqq\max(\|w\|,\|w'\|)-1$). We have
\[w\prec w'\Longleftrightarrow \Big(\omega_k<\omega'_k\Big)\text{ or }\left(\omega_k=\omega'_k\eqqcolon\omega\text{ and } [w]_{k}\stackrel{\omega}{\scriptstyle{\prec\succ}}[w']_{k}\right).\]

\item Let $j$ be the biggest integer such that $\omega_j\neq\omega'_j$. We have
\[w\prec w'\Longleftrightarrow [w]_{j+1} \stackrel{\Sigma([w]^{j+1})}{\scriptstyle{\prec\succ}}[w']_{j+1}\Longleftrightarrow \omega_{j}\stackrel{\Sigma([w]^{j+1})}{\scriptstyle{<>}}\omega'_{j}.\]

\item For any $i\in\N$ we have
\[w\prec w'\Longleftrightarrow\Big([w]^{i}\prec [w']^{i}\Big)\text{ or }\left([w]^{i}=[w']^{i}\text{ and }[w]_{i}\stackrel{\Sigma([w]^{i})}{\scriptstyle{\prec\succ}}[w']_{i}\right).\]

\end{itemize}
\end{proposition}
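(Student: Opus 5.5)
The plan is to derive all three characterizations from Theorem \ref{ExpbAryGray}, and to obtain the third one first, since the other two are specializations of it. Fix $b$ larger than every letter of $w$ and $w'$, and put $\ell\coloneqq\max(\|w\|,\|w'\|)$, so that $k=\ell-1$ and both words belong to $\mG^{(b)}_\ell$ (viewed with length $\ell$, using $\ell\geqslant i$ after padding with zeroes if necessary). By Definition \ref{DefGrayOrder}, $w\prec w'$ iff $w$ comes before $w'$ in $\mG^{(b)}_\ell$.

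\textbf{Third item.} Given $i\in\N$, we may enlarge $\ell$ so that $\ell\geqslant i$. Apply Theorem \ref{ExpbAryGray} with $\ell'\coloneqq\ell-i$; this gives
\[\mG_\ell=\sum_{g\in\mG_{\ell-i}}g\,\mG_i^{\leftrightarrow\Sigma(g)}.\]
Each element $x$ of $\mG_\ell$ sits in the block indexed by $g=[x]^i$, and within that block it is placed according to the position of $[x]_i$ in $\mG_i^{\leftrightarrow\Sigma(g)}$. If $[w]^i\neq[w']^i$, the order between $w$ and $w'$ is the order between their blocks in $\mG_{\ell-i}$, that is, $[w]^i\prec[w']^i$. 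Otherwise both words lie in the same block, and there the order is that of $\mG_i$ when $\Sigma([w]^i)$ is even and reversed when it is odd. This is exactly the stated equivalence. One point needs care here: the Gray order on the prefixes computed in $\mG_{\ell-i}$, and on the suffixes computed in $\mG_i$, must agree with $\preccurlyeq$. This holds by Definition \ref{DefGrayOrder}, because $\mG_m$ is an initial segment of $\mG$ for every $m$.

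\textbf{First item.} Take $i=k$ in the third item. Then $[w]^k=\omega_k$ and $[w']^k=\omega'_k$ are single letters, and $\mG_1=\llbracket 0,b\llbracket$ is in increasing order, so $\omega_k\prec\omega'_k$ iff $\omega_k<\omega'_k$. Moreover $\Sigma(\omega)=\omega\bmod 2$, which gives the displayed formula.

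\textbf{Second item.} Take $i=j+1$ in the third item. Since $j$ is the largest index where the words differ, $[w]^{j+1}=[w']^{j+1}$, and the third item yields the first equivalence. For the second equivalence, apply the first item to the pair $[w]_{j+1},[w']_{j+1}$. Their top nonzero-or-differing index is $j$, because $\omega_j\neq\omega'_j$ and any common leading zeroes do not affect the argument. Since $\omega_j\neq\omega'_j$, the first item says that $[w]_{j+1}\prec[w']_{j+1}$ iff $\omega_j<\omega'_j$. Combining this with the orientation given by $\Sigma([w]^{j+1})$ produces the final $\stackrel{\Sigma}{\scriptstyle{<>}}$ form.

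The main obstacle is bookkeeping rather than mathematics. The equivalence $w\sim0w$ has to be handled so that lengths match when Theorem \ref{ExpbAryGray} is applied, which is why $\ell$ should be taken large enough and all words padded to length $\ell$. One must also check the edge case in the second item where $j$ exceeds the magnitude of one of the two words; there a letter is $0$, and the first-item argument still applies after padding.
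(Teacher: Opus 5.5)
Your proof is correct and follows essentially the paper's argument. Both rest on Theorem \ref{ExpbAryGray}, with $\ell'=\ell-i$ for the third item and with $i=j+1$ (the paper's $\ell'=k-j$) for the second, and both get the last equivalence of the second item from the first item. The only difference is cosmetic: you obtain the first item as the case $i=k$ of the third, while the paper reads it directly off $\mG_\ell=\sum_{\alpha}\alpha\mG_k^{\leftrightarrow\alpha}$, which is the same decomposition.
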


\begin{proof} We choose once and for all an integer $b$ such that both $w$ and $w'$ belong to $\mG_\ell=\mG^{(b)}_\ell$, where $\ell\coloneqq k+1$.

By definition of the Gray order, $w\prec w'$ iff $w$ comes before $w'$ in $\mG_\ell=\sum_{\alpha\in\llbracket 0,b\llbracket}\alpha\mG_k^{\leftrightarrow \alpha}$. Hence, for any $m$, $\omega_k \stackrel{m}{\scriptstyle{<>}}\omega'_k\Longrightarrow w \stackrel{m}{\scriptstyle{\prec\succ}}w'$. If $\omega_k=\omega'_k\eqqcolon\omega$, then both $w$ and $w'$ belong to $\omega\mG_k^{\leftrightarrow\omega}$, so $w$ is before $w'$ in $\mG_\ell$ iff $[w]_k\stackrel{\omega}{\scriptstyle{\prec\succ}}[w']_k$, and the first point is proved.

For the second point, apply Theorem \ref{ExpbAryGray} with $\ell'=k-j$ to get $\displaystyle\mG_\ell=\sum_{g\in\mG_{k-j}}g\mG_{j+1}^{\leftrightarrow\Sigma(g)}$. By definition of $j$, both $w$ and $w'$ belong to $g\mG_{j+1}^{\leftrightarrow\Sigma(g)}$ with $g=[w]^{j+1}$, hence the first equivalence. The second equivalence is then a consequence of an application of the first point of the proposition, using the fact that $\omega_{j}\neq\omega'_{j}$.

The third point is an immediate application of Theorem \ref{ExpbAryGray} to $\ell'=\ell-i$.

\end{proof}

Eventually, we have the following

\begin{theorem}\label{OrdreGrayProdGray} Let $\mW$ and $\mW'$ be two lists increasing for the Gray order, where $\mW'$ contains at least two elements. The Gray product $\mW\odot\mW'$ is increasing for the Gray order iff $\mW$ satisfies the index parity property.
\end{theorem}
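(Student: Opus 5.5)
The natural tool is the third point of Proposition \ref{CharacterizationsGrayOrder}, applied with cut index $i\coloneqq\|\mW'\|$. Write $\mW=(w_n)_{n\in\llbracket 0,p\llbracket}$ and $\mW'=(w'_{n'})_{n'\in\llbracket 0,p'\llbracket}$, with $p'\geqslant 2$. The elements of $\mW\odot\mW'$ are the words $w_n[w'_{n'}]_{i}$. For every such word, the left factor $[\cdot]^{i}$ is $w_n$ and the right factor $[\cdot]_{i}$ is $[w'_{n'}]_{i}$. The list $\mW\odot\mW'$ is the concatenation of the blocks $w_n\mW'^{\leftrightarrow n}$, so I will compare consecutive elements of $\mW\odot\mW'$ in two cases.

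\emph{Elements in different blocks.} Take $w_n x$ and $w_m y$ with $n<m$. Since $\mW$ is strictly increasing for the Gray order, we have $w_n\prec w_m$, so in particular $w_n\neq w_m$. The third point of Proposition \ref{CharacterizationsGrayOrder} then gives $w_nx\prec w_my$, whatever $x$ and $y$ are. So order between blocks is automatic, and the only issue is the order inside each block.

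\emph{Elements in the same block $n$.} Here the same characterization says that $w_n[w'_{a}]_i\prec w_n[w'_{c}]_i$ iff $[w'_a]_i\stackrel{\Sigma(w_n)}{\scriptstyle{\prec\succ}}[w'_c]_i$. Note that $[w'_a]_i$ is just the class of $w'_a$, since $i=\|\mW'\|$. The list $\mW'$ is increasing, so $\mW'^{\leftrightarrow n}$ is increasing if $n$ is even and decreasing if $n$ is odd. It follows that the block $w_n\mW'^{\leftrightarrow n}$ is increasing for $\preccurlyeq$ iff $n\equiv\Sigma(w_n)\pmod 2$. Because $p'\geqslant 2$, every block contains two distinct elements, so a wrong parity really does produce a descent. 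This gives both directions at once: $\mW\odot\mW'$ is increasing iff $\Sigma(w_n)=n\bmod 2$ for every $n$, which is exactly the index parity property for $\mW$.

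The main obstacle is bookkeeping rather than a real difficulty. First, I must check that the cut index $i=\|\mW'\|$ is legitimate in Proposition \ref{CharacterizationsGrayOrder}, i.e.\ that the left factor $[w_n[w'_{n'}]_i]^i$ equals $w_n$ as an element of $\N^*_\sim$. Second, I must handle the degenerate cases: repeated elements do not occur, because the lists are strictly increasing, and the direction ``increasing $\Rightarrow$ parity'' is exactly where the hypothesis $p'\geqslant 2$ is needed. If $\mW$ is infinite, the same argument applies to every finite prefix.
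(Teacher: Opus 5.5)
Your proof is correct and is essentially the paper's own argument. You cut at $\|\mW'\|$ and use the third characterization of Proposition \ref{CharacterizationsGrayOrder} to show that pairs from different blocks $w_n\mW'^{\leftrightarrow n}$ are automatically in Gray order, while within block $n$ the order is correct iff $n\equiv\Sigma(w_n)\pmod 2$, with $\#(\mW')\geqslant 2$ guaranteeing that this within-block case actually arises for the converse.
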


\begin{proof} Write $\mW\eqqcolon(w_n)_{n\in\llbracket 0,p\llbracket}$ and $\ell'\coloneqq\|\mW'\|$, so we have $\mW\odot\mW'=\sum_{n\in\llbracket 0,p\llbracket}w_n\mW'^{\leftrightarrow n}$. Let $w$ and $w'$ be two elements of $\mW\odot\mW'$, where $w$ comes strictly before $w'$. If they belong to differents terms of the previous sum, say $w_n\mW'^{\leftrightarrow n}$ and $w_{n'}\mW'^{\leftrightarrow n'}$ with $n<n'$, then we have $[w]^{\ell'}=w_n$ and $[w']^{\ell'}=w_{n'}$. Since $\mW$ is increasing for the Gray order, we also have $w_n\prec w_{n'}$, so the third characterization given in Proposition \ref{CharacterizationsGrayOrder} implies $w\prec w'$.

Assume now (a possibility that cannot be avoided since $\mW'$ is not reduced to a single element) that both $w$ and $w'$ belong to the same $w_n\mW'^{\leftrightarrow n}$, that is: $[w]^{\ell'}=[w']^{\ell'}$. The list $\mW'$ is increasing for the Gray order, so $[w]_{\ell'}\prec[w']_{\ell'}$ iff $n$ is even, that is: $[w]_{\ell'}\stackrel{n}{\scriptstyle{\prec\succ}}[w']_{\ell'}$. Since $w_n=[w]^{\ell'}$, this is equivalent to $[w]_{\ell'}\stackrel{\Sigma([w]^{\ell'})}{\scriptstyle{\prec\succ}}[w']_{\ell'}$ (from which, again by the third characterization of Proposition \ref{CharacterizationsGrayOrder}, $w\prec w'$ follows) iff $\mW$ satisfies the index parity property, so we are done.
\end{proof}

\subsection{The $\mathcal{Z}$-Gray product}\label{ZGrayProduct}

To adapt Definition \ref{ZProduct} to the context of the Gray product, it is convenient to modify the notation a little bit: here we write $\Godot\mathcal{W}$ for what was written $\underline{\boldsymbol{\cdot}}\mW$ there.

\begin{definition}\label{QuotientFacteur} With the previous notation and the other notations of Definition \ref{ZProduct}, the {\em $\mathcal{Z}$-Gray product} of two finite lists $\mathcal{W}$ and $\mathcal{W}'$ in $\mA^*_\sim$ is defined as:
\[\mathcal{W}\underline{\odot}\mathcal{W}'\coloneqq \Godot\left(\mathcal{W}\odot\mathcal{W}'\right).\]
Whenever necessary, $\mZ$ will be regarded as a list, increasing for the Gray order.
\end{definition}

An important example of sequence defined by a $\mZ$-Gray product is the $k$-bonacci Gray code defined in Section \ref{IntroOperations}:

\begin{proposition}\label{ZFibo} Let $\mF_0\coloneqq\mM_0\coloneqq\{\varepsilon\}$ and, for any $\ell\in\N$,
\[\mM_{\ell+1}\coloneqq\sum_{i\in\llbracket 0,\min(k,\ell+1)\llbracket}10^i\overleftarrow{\mM_{\ell-i}}\qquad\text{and}\qquad\mF_{\ell+1}\coloneqq 0\mF_\ell+\mM_{\ell+1}.\]
For $\mZ=\{10^k\}$ we have $\mF_{\ell+1}=\{0,1\}\Godot\mF_{\ell}$ for any $\ell\in\N$.
\end{proposition}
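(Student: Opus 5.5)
We argue by strong induction on $\ell$, after unfolding $\{0,1\}\Godot\mF_\ell$ with Definition \ref{DefGrayProduct}:
\[\{0,1\}\Godot\mF_\ell=\Godot\big(0\mF_\ell+1\overleftarrow{\mF_\ell}\big).\]
Here $\Godot$ removes every word whose right factor of length $\ell+1$ contains $10^k$. It therefore suffices to check two things. First, $\Godot(0\mF_\ell)=0\mF_\ell$, i.e. no element of $\mF_\ell$ (padded to length $\ell$) contains $10^k$, and prefixing a $0$ creates no new occurrence. Second, $\Godot(1\overleftarrow{\mF_\ell})=\mM_{\ell+1}$. Since reversal commutes with the deletion of words, the second claim is equivalent to
\[\Godot(1\mF_\ell)=\overleftarrow{\mM_{\ell+1}}.\]

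Alongside the proposition, we prove by induction the auxiliary statement that $\mF_\ell$ is exactly the list of all words of length $\ell$ avoiding $10^k$ (with leading zeros allowed), each appearing once. The base cases are immediate: $\mF_0=\{\varepsilon\}$, and $\mF_1=0\mF_0+\mM_1=\{0,1\}$, since $\mM_1=1\overleftarrow{\mM_0}$. The first claim then follows from the auxiliary statement.

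For the second claim, write $\mF_\ell=0\mF_{\ell-1}+\mM_\ell$, unfold repeatedly, and obtain
\[\mF_\ell=0^{j}\mF_{\ell-j}+\sum_{i\in\rrbracket j,0\rrbracket}0^{i}\mM_{\ell-i}\]
for each $j\leqslant\ell$ (the sum running over $i=j-1,\dots,0$). Take $j=\min(k,\ell)$ and prefix $1$.
\begin{itemize}
\item If $\ell\geqslant k$, every word of $10^{k}\mF_{\ell-k}$ starts with $10^k$ and is deleted.
\item In the terms $10^{i}\mM_{\ell-i}$ with $i<\min(k,\ell)$, each word of $\mM_{\ell-i}$ begins with a $1$ (it has magnitude exactly $\ell-i$). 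Hence the leading block $10^i$ is followed by a $1$, no new occurrence of $10^k$ is created, and nothing is deleted (the tail already avoids $10^k$ by the auxiliary statement).
\item If $\ell<k$, the surviving term $10^\ell\mF_0=10^\ell\mM_0$ fits the same pattern, with $i=\ell=\min(k,\ell+1)-1$.
\end{itemize}
Hence
\[\Godot(1\mF_\ell)=\sum_{i\in\rrbracket \min(k,\ell+1),0\rrbracket}10^i\mM_{\ell-i}.\]
Reversing the order of the summands gives
\[\overleftarrow{\Godot(1\mF_\ell)}=\sum_{i\in\llbracket 0,\min(k,\ell+1)\llbracket}10^i\overleftarrow{\mM_{\ell-i}}=\mM_{\ell+1}.\]
This is exactly the defining formula of $\mM_{\ell+1}$.

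It remains to check the auxiliary statement at rank $\ell+1$. We have $\mF_{\ell+1}=0\mF_\ell+\Godot(1\overleftarrow{\mF_\ell})$. The words avoiding $10^k$ that start with $0$ are precisely the words of $0\mF_\ell$. Those that start with $1$ are the words of $1\mF_\ell$ that survive $\Godot$, and these appear in $\Godot(1\overleftarrow{\mF_\ell})$ in reversed order.

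The main obstacle is bookkeeping rather than ideas. One must be careful with the length normalization in $w\mW$: the factor $[\cdot]_{\|\mW\|}$ forces padding to length $\ell$, so $\mM_{\ell-i}$ is padded inside $\mF_\ell$ by exactly $i$ zeros. One must also handle the small cases $\ell<k$, where the index range $\min(k,\ell+1)$ matters, and make sure the reversed sum's indices match $\llbracket 0,\min(k,\ell+1)\llbracket$.
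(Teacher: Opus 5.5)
Your proposal is correct and follows essentially the paper's argument. Both split $\mF_\ell$ as $0^k\mF_{\ell-k}$ plus the blocks $0^i\mM_{\ell-i}$ (the paper writes $\mM_{\ell-i}=1\mathcal{Q}^{(i)}_\ell$), prefix $1$ to the reversed list, delete only the $10^k$ block, and reassemble $\mM_{\ell+1}$. The differences are minor: you handle $\ell<k$ in the same computation and state the $10^k$-avoidance invariant explicitly, whereas the paper treats $\ell\leqslant k$ as a base case via the unproved identity $\mF_\ell=\mG^{(2)}_\ell$ and leaves the avoidance implicit.
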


\begin{proof} Define $(\mW_\ell)_{\ell\in\N}$ as $\mW_0=\{\varepsilon\}$ and $\mW_{\ell+1}\coloneqq\{0,1\}\Godot\mW_{\ell}$ for any $\ell\in\N$, and let us prove that $\mF_\ell=\mW_\ell$ for all $\ell\in\N$. First, we have $\mF_\ell=\mG^{(2)}_\ell=\mW_\ell$ for any $\ell\in\llbracket 0,k\rrbracket$. Now assume $\mF_\ell=\mW_\ell$ for some $\ell\geqslant k$. For $i\in\llbracket 0,k\llbracket$, let $\mathcal{Q}_\ell^{(i)}$ be the list such that $\mathcal{M}_{\ell-i}=1\mathcal{Q}_\ell^{(i)}$. Therefore, we can write
\[\mathcal{F}_\ell=0^k\mathcal{F}_{\ell-k}+\sum_{i\in\rrbracket k,0\rrbracket}0^i1\mathcal{Q}_\ell^{(i)}.\]
Hence, we have
\[\{0,1\}\odot\mathcal{F}_\ell=0\mathcal{F}_\ell+1\overleftarrow{\mF_{\ell}}=0\mathcal{F}_\ell+1\sum_{i\in\llbracket 0,k\llbracket}0^i1\overleftarrow{\mathcal{Q}_\ell^{(i)}}+10^k\overleftarrow{\mathcal{F}_{\ell-k}},\]
\noindent so
\begin{eqnarray*}
\{0,1\}\underline{\odot}\mathcal{F}_\ell&=&0\mathcal{F}_\ell+\sum_{i\in\llbracket 0,k\llbracket}10^i1\overleftarrow{\mathcal{Q}_\ell^{(i)}}\\
&=&0\mathcal{F}_\ell+\sum_{i\in\llbracket 0,k\llbracket}10^i\overleftarrow{\mathcal{M}_{\ell-i}}\\
&=&0\mathcal{F}_\ell+\mathcal{M}_{\ell+1}=\mathcal{F}_{\ell+1}.
\end{eqnarray*}
\end{proof}

The definition of {\em $\mZ$-geometric sequence} in Section \ref{ZGeometricSequences} will generalize the previous result.

Since $\Godot(\mathcal{W}+\mathcal{W}')=\Godot\mathcal{W}+\Godot\mathcal{W}'$ it is routinely proved that:

\begin{proposition}\label{FGrayAsso} For a given $\mathcal{Z}$-Gray product and any finite lists $\mathcal{W}$, $\mathcal{W}'$ and $\mathcal{W}''$ we have
\[\mathcal{W}\underline{\odot}(\mathcal{W}'\odot\mathcal{W}'')=\mathcal{W}\underline{\odot}(\mathcal{W}'\Godot\mathcal{W}''),\]
\[(\mathcal{W}\odot\mathcal{W}')\underline{\odot}\mathcal{W}''= \mathcal{W}\underline{\odot}(\mathcal{W}'\odot\mathcal{W}''),\]
\[(\mathcal{W}+\mathcal{W}')\underline{\odot} \mathcal{W}''=\mathcal{W}\underline{\odot}\mathcal{W}'' + \mathcal{W}'\underline{\odot}\mathcal{W}''^{\leftrightarrow\#(\mathcal{W})},\]
\[\overleftarrow{\mW}\Godot\mW'=\overleftarrow{\mW\Godot\mW'^{\leftrightarrow\#(\mW)}}.\]
In particular, if $\mW'$ is {\em palindromic} (i.e. $\mW'=\overleftarrow{\mW'}$), then $\overleftarrow{\mW}\Godot\mW'=\overleftarrow{\mW\Godot\mW'}$.

\end{proposition}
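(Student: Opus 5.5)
The plan is to reduce all four identities to two elementary facts about the filtering operator $\Godot$, and then to invoke the unfiltered results already proved (Proposition \ref{DistribGrayProd} and Proposition \ref{GrayProdAssociatif}).

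\textbf{Fact (a).} $\Godot$ acts element by element. A word is kept or removed according to whether its normalized representative $[w]_{\|\mW\|}$ lies in $\Fact_\mA(\mZ)$, and this depends only on the word, not on its position in the list. Consequently $\Godot$ commutes with sums, $\Godot(\mW+\mW')=\Godot\mW+\Godot\mW'$ (as noted just before the statement), and with reversal, $\Godot\overleftarrow{\mW}=\overleftarrow{\Godot\mW}$.

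\textbf{Fact (b).} Forbidden factors persist under left-concatenation. If $x$ contains a factor in $\mZ$, then so does $wx$ for any $w$. Equivalently, if $wx$ avoids $\Fact_\mA(\mZ)$, then so does $x$.

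For the first identity, I would write $\mW\odot(\mW'\odot\mW'')=\sum_n w_n(\mW'\odot\mW'')^{\leftrightarrow n}$. Passing from $\mW'\odot\mW''$ to $\mW'\Godot\mW''$ deletes exactly the words $x$ with a forbidden factor. By fact (b), each word $w_nx$ built from such an $x$ is deleted anyway by the outer $\Godot$. Deleting an element commutes with reversing the block that contains it. So both sides, read block by block in $n$, consist of the same surviving words in the same order.

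The point needing care is the normalization of lengths. $\|\mW'\Godot\mW''\|$ could be smaller than $\|\mW'\odot\mW''\|$, which shifts the length at which $w_n$ is concatenated. I would handle this with the paper's convention (the remark after Definition \ref{ProductOfLists}): inner lists are regarded as lists of words of a fixed length padded with leading zeroes, namely the length of $\mW'\odot\mW''$. I expect this bookkeeping to be the main obstacle. It matters especially when $\mZ$ contains words beginning with $0$, since then padding zeroes can create forbidden factors, so I would state the convention explicitly before running the argument.

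For the second identity, Proposition \ref{GrayProdAssociatif} gives $(\mW\odot\mW')\odot\mW''=\mW\odot(\mW'\odot\mW'')$. Applying $\Godot$ to both sides yields the claim directly.

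For the third identity, I would apply $\Godot$ to the first formula of Proposition \ref{DistribGrayProd} and distribute it over the sum using fact (a).

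For the fourth identity, take $p=\#(\mW)$. The second formula of Proposition \ref{DistribGrayProd}, applied to $\mW$ and $\mW'^{\leftrightarrow p}$, gives
\[
\overleftarrow{\mW\odot\mW'^{\leftrightarrow p}}=\overleftarrow{\mW}\odot\mW'^{\leftrightarrow 2p}=\overleftarrow{\mW}\odot\mW'.
\]
Applying $\Godot$ to both sides and commuting it with reversal by fact (a) gives the identity.

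In the palindromic case, $\mW'^{\leftrightarrow p}=\mW'$ for every $p$, so the final statement follows at once from the fourth identity.
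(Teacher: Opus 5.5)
Your proposal is correct and is essentially the paper's own argument. The paper only says that the identities follow routinely from $\Godot(\mW+\mW')=\Godot\mW+\Godot\mW'$ applied to Propositions \ref{DistribGrayProd} and \ref{GrayProdAssociatif}; you spell out the two further ingredients this needs, namely commutation of $\Godot$ with reversal and, for the first identity, persistence of forbidden factors under left-concatenation. The length-normalization convention you insist on is necessary and is left implicit in the paper: if filtering lowers the magnitude of $\mW'\odot\mW''$, or if $\mZ$ contains words with a leading $0$, the first and third identities can fail under a literal reading of the definitions, so fixing a common padded length as you propose is the right way to make both the statement and your argument sound.
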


It is also easy to check that $(\mathcal{W}\underline{\odot}\mathcal{W}')\underline{\odot} \mathcal{W}''$ and $\mathcal{W}\underline{\odot}(\mathcal{W}'\underline{\odot} \mathcal{W}'')$ are equal as sets. Nevertheless, the $\mathcal{Z}$-Gray product is not associative in general. Illustrating this fact offers the occasion to present a sometimes handy way to write calculations, that consists in writing $(w_0+\cdots +w_p)$ for the list of words $(w_n)_{n\in\llbracket 0,p\rrbracket}$. The shortest example of non-associativity is provided by $\mathcal{Z}=\{11\}$, $\mathcal{W}=(1)$ and $\mathcal{W}'=\mathcal{W}''=(1+0)$. Indeed, we have
\[(\mathcal{W}\underline{\odot}\mathcal{W}')\underline{\odot}\mathcal{W}''=\Big((1)\underline{\odot}(1+0)\Big)\underline{\odot}(1+0)=(10)\underline{\odot}(1+0)=(101+100),\]
\noindent whereas:
\[\mathcal{W}\underline{\odot}(\mathcal{W}'\underline{\odot}\mathcal{W}'')=(1)\underline{\odot}\Big((1+0)\underline{\odot}(1+0)\Big)=(1)\underline{\odot}(10+00+01)=(100+101).\]

The following definition leads to Proposition \ref{ZGrayProdUnPeuAssociatif} which provides a way to overcome the non-associativity of the $\mZ$-Gray product.

\begin{definition}\label{ZDec} The {\em $\mZ$-decomposition} of the finite list $\mW$ is the sequence $(\mV_j)_{j\in\llbracket 0,2m\rrbracket}$ with minimal $m\in\N$ such that
\[\mathcal{W}=\sum_{j\in\llbracket 0,2m\rrbracket}\mV_j,\]
\noindent where, for any $j\in\llbracket 0,m\rrbracket$, $\mV_{2j}\subset\Fact_\mA(\mZ)$ and $\mV_{2j+1}\cap \Fact_\mA(\mZ)=\varnothing$.
\end{definition}

In particular we have, with the previous notation:
\[\Godot\mathcal{W}=\sum_{j\in\llbracket 0,m\llbracket}\mV_{2j+1}.\]



\begin{proposition}\label{ZGrayProdUnPeuAssociatif} The set $\mZ$ being given, let $\mW$, $\mW'$ and $\mW''$ be three finite lists. A sufficient condition for the equality
\[(\mathcal{W}\underline{\odot}\mathcal{W}')\underline{\odot}\mathcal{W}''=\mathcal{W}\underline{\odot}(\mathcal{W}'\underline{\odot}\mathcal{W}'')\]
\noindent is that, $(\mV_j)_{j\in\llbracket 0,2m\rrbracket}$ being the $\mZ$-decomposition of $\mW\odot\mW'$, for any $j\in\llbracket 0,m\llbracket$ we have $\#(\mV_{2j})\in 2\Z$ (but not necessarily for $j=m$).
\end{proposition}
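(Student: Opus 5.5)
Both sides will be reduced to the same explicit sum over the $\mZ$-decomposition of $\mW\odot\mW'$. By the second identity of Proposition \ref{FGrayAsso} and associativity of $\odot$ (Proposition \ref{GrayProdAssociatif}), the right-hand side can be rewritten as
\[\mW\Godot(\mW'\Godot\mW'')=\mW\Godot(\mW'\odot\mW'')=(\mW\odot\mW')\Godot\mW''.\]
Let $(\mV_j)_{j\in\llbracket 0,2m\rrbracket}$ be the $\mZ$-decomposition of $\mW\odot\mW'$. Iterating the third identity of Proposition \ref{FGrayAsso} (distributivity of $\Godot$ over sums, with the reversal exponent given by the accumulated cardinality) gives
\[(\mW\odot\mW')\Godot\mW''=\sum_{j\in\llbracket 0,2m\rrbracket}\mV_j\Godot\mW''^{\leftrightarrow N_j},\qquad N_j\coloneqq\sum_{i<j}\#(\mV_i).\]

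The next step is to discard the terms with even index. Every element of $\mV_{2j}$ contains a factor from $\mZ$, and that factor persists after appending the letters coming from $\mW''$. So every element of $\mV_{2j}\odot\mW''^{\leftrightarrow N_{2j}}$ lies in $\Fact_\mA(\mZ)$ and is removed by $\Godot$. Some care is needed with the length normalisation of Definition \ref{ZProduct}: the check is on $[w]_{\|\cdot\|}$, and the product only left-extends by prefixes, so this should be fine. Hence
\[(\mW\odot\mW')\Godot\mW''=\sum_{j\in\llbracket 0,m\llbracket}\mV_{2j+1}\Godot\mW''^{\leftrightarrow N_{2j+1}}.\]

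On the left-hand side, $\mW\Godot\mW'=\sum_{j\in\llbracket 0,m\llbracket}\mV_{2j+1}$, the formula noted just after Definition \ref{ZDec}. Applying the same distributivity gives
\[(\mW\Godot\mW')\Godot\mW''=\sum_{j\in\llbracket 0,m\llbracket}\mV_{2j+1}\Godot\mW''^{\leftrightarrow M_j},\qquad M_j\coloneqq\sum_{i<j}\#(\mV_{2i+1}).\]
It therefore suffices to show that $N_{2j+1}\equiv M_j \pmod 2$. This holds because
\[N_{2j+1}-M_j=\sum_{i\leqslant j}\#(\mV_{2i}),\]
and for $j<m$ each of these summands is even by hypothesis. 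Only indices $2i\leqslant 2j<2m$ occur, which explains why $\#(\mV_{2m})$ is unconstrained.

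The main obstacle is the discarding step: one must make sure the $\Godot$ in the $\mZ$-Gray product really kills every word of $\mV_{2j}\odot\mW''$. The representative length must be handled correctly, since a forbidden factor starting with $0$ could in principle involve padding zeros. I will first treat this via the convention $[w]_{\|\mW\|}$, noting that the magnitude used for the product is the full length $\|\mW\odot\mW'\|+\|\mW''\|$. The remaining steps are bookkeeping of reversal parities.
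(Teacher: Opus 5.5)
Your proof is correct and follows the paper's argument essentially line for line. You rewrite $\mW\Godot(\mW'\Godot\mW'')$ as $(\mW\odot\mW')\Godot\mW''$ via Proposition \ref{FGrayAsso}; strictly, the first equality uses its first identity rather than the second. You then distribute over the $\mZ$-decomposition, discard the even-indexed blocks, and compare the reversal parities $N_{2j+1}$ and $M_j$ (the paper's $r_{2j+1}$ and $r'_{2j+1}$), whose difference $\sum_{i\leqslant j}\#(\mV_{2i})$ is even by hypothesis.
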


\begin{remark} Another sufficient condition of the associativity of the $\mZ$-Gray product is $\mW''$ palindromic. 
\end{remark}

\begin{proof} For any $j\in\llbracket 0,2m\llbracket$, put $r_j\coloneqq\sum_{i\in\llbracket 0,j\llbracket}\#(\mV_i)$. Thus, by induction and Proposition \ref{FGrayAsso}:
\begin{eqnarray*}
\mX\coloneqq\mathcal{W}\underline{\odot}(\mathcal{W}'{\Godot}\mathcal{W}'')&=&(\mathcal{W}\odot\mathcal{W}')\underline{\odot}\mathcal{W}''\\
&=&\left(\sum_{j\in\llbracket 0,2m\rrbracket}\mV_j\right)\underline{\odot}\mathcal{W}''\\
&=&\sum_{j\in\llbracket 0,2m\rrbracket}\mV_j\Godot\mW''^{\leftrightarrow r_j}\\
&=&\sum_{j\in\llbracket 0,m\llbracket}\mV_{2j+1}\Godot\mW''^{\leftrightarrow r_{2j+1}},
\end{eqnarray*}
\noindent the last equality coming from the fact that the elements of $\mV_{2j}$ have an element of $\mZ$ as a factor for any $j\in\llbracket 0,m\rrbracket$.

Moreover, we have:
\begin{eqnarray*}
\mX'\coloneqq(\mW\Godot\mW')\Godot\mW''&=&\left(\sum_{j\in\llbracket 0,m\llbracket}\mV_{2j+1}\right)\Godot\mW''\\
&=&\sum_{j\in\llbracket 0,m\llbracket}\mV_{2j+1}\Godot\mW''^{\leftrightarrow r'_{2j+1}},
\end{eqnarray*}
\noindent where $r'_{2j+1}\coloneqq\sum_{i\in\llbracket 0,j\llbracket}\#(\mV_{2i+1})$.

If $\#(\mV_{2j})\in 2\Z$ for any $j\in\llbracket 0,m\llbracket$, then $r_{2j+1}$ and $r'_{2j+1}$ have the same parity, so $\mW''^{\leftrightarrow r_{2j+1}}=\mW''^{\leftrightarrow r'_{2j+1}}$, therefore $\mX=\mX'$.\end{proof}

Observe that, in the previous proof, if $\mW''$ is palindromic, then both $\mW''^{\leftrightarrow r_{2j+1}}$ and $\mW''^{\leftrightarrow r'_{2j+1}}$ are equal to $\mW''$ for any $j\in\llbracket 0,m\llbracket$, so we also have $\mX=\mX'$, regardless of the $\mZ$-decomposition of $\mW\odot\mW'$.

It is tempting to ask for a necessary and sufficient condition for the associativity of the $\mZ$-Gray product. Even if we will not use it in the following, here is an answer, that shows that Proposition \ref{ZGrayProdUnPeuAssociatif} is quite close.

\begin{proposition}\label{ZGrayProdUnPeuAssociatifBis} With the notation of Proposition \ref{ZGrayProdUnPeuAssociatif} and its proof, the equality 
\[(\mathcal{W}\underline{\odot}\mathcal{W}')\underline{\odot}\mathcal{W}''=\mathcal{W}\underline{\odot}(\mathcal{W}'\underline{\odot}\mathcal{W}'')\]
\noindent holds iff, $(\mV_j)_{j\in\llbracket 0,2m\rrbracket}$ being the $\mZ$-decomposition of $\mW\odot\mW'$, for any $j\in\llbracket 0,m\llbracket$ (but not necessarily for $j=m$) we have $\mV_{2j+1}\Godot\mW''=\mV_{2j+1}\Godot\overleftarrow{\mW''}$ for any $j$ for which $r_{2j+1}$ and $r'_{2j+1}$ are not of the same parity.
\end{proposition}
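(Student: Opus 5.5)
The plan is to reuse the two expressions obtained in the proof of Proposition \ref{ZGrayProdUnPeuAssociatif}:
\[\mX=\mW\Godot(\mW'\Godot\mW'')=\sum_{j\in\llbracket 0,m\llbracket}\mV_{2j+1}\Godot\mW''^{\leftrightarrow r_{2j+1}},\qquad \mX'=(\mW\Godot\mW')\Godot\mW''=\sum_{j\in\llbracket 0,m\llbracket}\mV_{2j+1}\Godot\mW''^{\leftrightarrow r'_{2j+1}}.\]
Both are sums of $m$ blocks, indexed the same way.

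\textbf{Sufficiency.} For each $j$ with $r_{2j+1}\equiv r'_{2j+1}\pmod 2$, the $j$-th blocks coincide trivially. For each $j$ with different parities, one block is $\mV_{2j+1}\Godot\mW''$ and the other is $\mV_{2j+1}\Godot\overleftarrow{\mW''}$, in some order. These are equal by hypothesis. Summing over $j$ gives $\mX=\mX'$.

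\textbf{Necessity.} I would show that the $j$-th blocks of $\mX$ and $\mX'$ sit at the same positions and have the same content as sets, so that equality of the whole lists forces blockwise equality.

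First, the lengths match. The list $\mV_{2j+1}\Godot\mW''^{\leftrightarrow r}$ consists of the words $v\,w''$ with $v\in\mV_{2j+1}$, $w''\in\mW''$, and $v\,w''\notin\Fact_\mA(\mZ)$. This set does not depend on the orientation $r$, so its cardinality does not either. Hence for each $j$, the $j$-th block of $\mX$ and the $j$-th block of $\mX'$ have equal length.

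Second, the starting positions match. By induction on $j$, the $j$-th blocks occupy the same interval of positions in $\mX$ and in $\mX'$. So if $\mX=\mX'$, the corresponding blocks are equal as lists.

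Finally, when $r_{2j+1}\not\equiv r'_{2j+1}$, blockwise equality reads $\mV_{2j+1}\Godot\mW''=\mV_{2j+1}\Godot\overleftarrow{\mW''}$, possibly with the two sides swapped. This is exactly the stated condition.

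The only step needing care is the one-sided induction. The expression for $\mX$ was derived using Proposition \ref{FGrayAsso} applied to $\mW\odot\mW'$ split along the full $\mZ$-decomposition, including the blocks $\mV_{2j}$ that are killed. I must check that those blocks contribute nothing at all. This holds because every element of $\mV_{2j}$ already has a factor in $\mZ$, so after left-concatenation the resulting words, taken with their right factor of length $\|\cdot\|$, still contain that factor and are removed. The case $j=m$ imposes no constraint, since $\mV_{2m}$ is a killed block and contributes no term.
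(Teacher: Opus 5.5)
Your proposal is correct and follows essentially the same route as the paper. Both write $\mX$ and $\mX'$ as sums of the blocks $\mV_{2j+1}\Godot\mW''^{\leftrightarrow r_{2j+1}}$ and $\mV_{2j+1}\Godot\mW''^{\leftrightarrow r'_{2j+1}}$, note that corresponding blocks are equal as sets (hence have equal length and sit at the same positions), and conclude that $\mX=\mX'$ iff the blocks agree; the paper phrases necessity through the smallest index $q$ with differing parities and an induction on $q$, whereas you argue directly on block positions, which is a slightly cleaner version of the same idea.
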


\begin{proof} By the proof of Proposition \ref{ZGrayProdUnPeuAssociatif} we only have to investigate the case for which there exists a $q\in\llbracket 0,m\llbracket$ such that $r_{2q+1}$ and $r'_{2q+1}$ are not of the same parity. Consider the smallest such $q$, so we have
\[\sum_{j\in\llbracket 0,q\llbracket}\mV_{2j+1}\Godot\mW''^{\leftrightarrow r_{2j+1}}=\sum_{j\in\llbracket 0,q\llbracket}\mV_{2j+1}\Godot\mW''^{\leftrightarrow r'_{2j+1}}\eqqcolon\mY.\]
By construction, $\mY+\mV_{2q+1}\Godot\mW''^{\leftrightarrow r_{2q+1}}$ is the beginning of the list $\mX$, and $\mY+\mV_{2q+1}\Godot\mW''^{\leftrightarrow r'_{2q+1}}$ is the beginning of the list $\mX'$. Without the assumption $\mV_{2q+1}\Godot\mW''=\mV_{2q+1}\Godot\overleftarrow{\mW''}$, these two parts coming right after $\mY$ in $\mX$ and $\mX'$ would be equal as sets but different as lists, forcing $\mX$ and $\mX'$ to be different. The end of the proof is a simple induction on $q$.
\end{proof}

\begin{proposition}\label{IndexParityPreservation} Le $\mW$ and $\mW'$ be two finite lists in $\N^*_\sim$ that both satisfy the index parity property. For a given $\mZ$, let $(\mV_i)_{i\in\llbracket 0,2m\rrbracket}$ be the $\mZ$-decomposition of $\mW\Godot\mW'$. Then, $\mW\Godot\mW'$ satisfies the index parity property iff $m=0$ or $\#(\mV_i)\in 2\N$ for any $i\in\llbracket 0,2(m-1)\rrbracket$.
\end{proposition}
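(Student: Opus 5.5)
The plan is to read the $\mZ$-decomposition in the statement as that of $\mW\odot\mW'$, as in Proposition \ref{ZGrayProdUnPeuAssociatif}; that of $\mW\Godot\mW'$ itself would be trivial, having $\mV_0=\varnothing$. The whole argument then tracks how indices shift when the blocks $\mV_{2j}$ are deleted.

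\textbf{Step 1: the unfiltered list.} By Proposition \ref{IndexParityPreservationOdot}, $\mW\odot\mW'=\sum_{j\in\llbracket 0,2m\rrbracket}\mV_j$ satisfies the index parity property. Set $r_j\coloneqq\sum_{i\in\llbracket 0,j\llbracket}\#(\mV_i)$, as in the proof of Proposition \ref{ZGrayProdUnPeuAssociatif}. Then the $t$-th element $v$ of $\mV_{2j+1}$ sits at index $r_{2j+1}+t$ in $\mW\odot\mW'$, so $\Sigma(v)=r_{2j+1}+t \bmod 2$.

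\textbf{Step 2: the filtered list.} We have $\mW\Godot\mW'=\sum_{j\in\llbracket 0,m\llbracket}\mV_{2j+1}$. The same element $v$ therefore sits at index $r'_{2j+1}+t$, where $r'_{2j+1}\coloneqq\sum_{i\in\llbracket 0,j\llbracket}\#(\mV_{2i+1})$. Its parity is correct there iff $r_{2j+1}-r'_{2j+1}=\sum_{i\in\llbracket 0,j\rrbracket}\#(\mV_{2i})$ is even. This condition depends only on $j$, not on $t$. Hence $\mW\Godot\mW'$ satisfies the index parity property iff this sum is even for every $j\in\llbracket 0,m\llbracket$ with $\mV_{2j+1}\neq\varnothing$.

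\textbf{Step 3: reduce to the stated condition.} By minimality of $m$, every $\mV_{2j+1}$ with $j<m$ is nonempty; otherwise two consecutive forbidden blocks could be merged. Only $\mV_0$ may be empty, and it contributes $0$, which is even. So the condition becomes: all partial sums $\sum_{i\le j}\#(\mV_{2i})$ are even for $j\in\llbracket 0,m\llbracket$. By taking successive differences, this is equivalent to $\#(\mV_{2i})$ being even for each $i\in\llbracket 0,m\llbracket$, i.e.\ $\#(\mV_i)\in2\N$ for the even indices $i\in\llbracket 0,2(m-1)\rrbracket$. The last block $\mV_{2m}$ plays no role, since no kept element follows it. If $m=0$, the filtered list is empty and the property holds vacuously.

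\textbf{Converse.} For the converse, take the smallest $i<m$ with $\#(\mV_{2i})$ odd. The first element of $\mV_{2i+1}$, which exists, then has the wrong parity.

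\textbf{Main obstacle.} The only delicate point is bookkeeping. One must fix which decomposition is meant and note that the statement's condition on odd-indexed blocks $\mV_{2i+1}$ is not actually needed. Read literally, the statement's range $\llbracket 0,2(m-1)\rrbracket$ covers all indices, odd ones included. I would make explicit that only the even-indexed blocks matter, and check this against the example preceding Proposition \ref{ZGrayProdUnPeuAssociatif}.
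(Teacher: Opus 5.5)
Your proof is correct and is essentially the paper's argument: $\mW\odot\mW'$ has the index parity property by Proposition \ref{IndexParityPreservationOdot}, and deleting the forbidden blocks preserves that property exactly when every deleted block other than the trailing one has even length, which is what your computation of $r_{2j+1}-r'_{2j+1}$ makes explicit. Your two corrections to the statement are also right: the decomposition must be that of $\mW\odot\mW'$, and only the blocks $\mV_{2i}$ with $i<m$ matter. For instance, with $\mW=(0,1)$, $\mW'=(00,01,11,10)$ and $\mZ=\{10\}$ one gets $\#(\mV_1)=3$, yet the filtered list $(000,001,011,111)$ has the property.
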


\begin{proof} Proposition \ref{IndexParityPreservationOdot} gives that $\mW\odot\mW'$ satisfies the index parity property. For a given list $\mX$ satisfying the index parity property, removing elements from it provides a list that still satisfies the same property iff the removed elements go by pair of consecutive elements and/or are at the end of $\mX$, hence the result.
\end{proof}

\subsection{$\mZ$-geometric sequences and power-associativity}\label{ZGeometricSequences}

In the following, an integer $b\geqslant 2$ is given, and $\mA=\llbracket 0,b\llbracket$.

\begin{definition}\label{Geom} Let $\mathcal{Z}$ be a subset of $\mA^*$ and $\mathcal{W}$ a finite list in $\mA^*$. The {\em $\mZ$-geometric sequence with ratio $\mW$} is defined by $\mW_0=\{\varepsilon\}$ and $\mW_{\ell+1}=\mW\Godot\mW_{\ell}$ for any $\ell\in\N$. 
\end{definition}

In all the present section we stick to the notations of Definition \ref{Geom}.

\begin{proposition}\label{ZGeomPrefPart} If $\mW$ is prefix-partitioned and does not contain twice the same element, then $\mW_\ell$ is prefix-partitioned for any $\ell\in\N$.
\end{proposition}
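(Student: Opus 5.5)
Induction on $\ell$. Since $\mW_0=\{\varepsilon\}$ is trivially prefix-partitioned, assume $\mW_\ell$ is prefix-partitioned and consider
\[\mW_{\ell+1}=\Godot\Big(\sum_{n}w_n\mW_\ell^{\leftrightarrow n}\Big),\]
where $\mW=(w_n)_n$. Two easy observations reduce the work. First, a sublist of a prefix-partitioned list need not be prefix-partitioned in general. Here, however, $\Godot$ removes elements, and if $\{w\in\mX : [w]^i=\tilde w\}$ is an interval of $\mX$, then its trace on any sublist $\mX'\subset\mX$ is still an interval of $\mX'$. Indeed, an element of $\mX'$ lying between two elements of that block lies between them in $\mX$ too. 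So it suffices to prove that $\mW\odot\mW_\ell$ is prefix-partitioned. Second, reversing a list preserves prefix-partition, so each $\mW_\ell^{\leftrightarrow n}$ is prefix-partitioned, and so is each block $w_n\mW_\ell^{\leftrightarrow n}$, since prepending a fixed word just shifts prefixes.

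Fix $\tilde w$ and $i$, and put $L\coloneqq\|\mW_\ell\|$, the length at which the $w_n$ are glued (using the length-normalized representatives of the remark after Definition \ref{ProductOfLists}). Let $S$ be the set of elements $w_n u$, $u\in\mW_\ell^{\leftrightarrow n}$, with $[w_nu]^i=\tilde w$. I split into two cases according to $i$ versus $L$.

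If $i\geqslant L$, the condition reads $[w_n]^{i-L}=\tilde w$, a condition on $n$ only. Since $\mW$ is prefix-partitioned, the set of such $n$ is an interval $\llbracket n_0,n_1\rrbracket$. Then $S$ is the union of the consecutive full blocks $w_n\mW_\ell^{\leftrightarrow n}$ for $n\in\llbracket n_0,n_1\rrbracket$, hence an interval.

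If $i<L$, the condition becomes $w_n=[\tilde w]^{L-i}$ together with $[u]^i=[\tilde w]_{L-i}$. This is where the hypothesis that $\mW$ has no repeated element enters: at most one $n$ satisfies $w_n=[\tilde w]^{L-i}$. Therefore $S$ lies inside the single block $w_n\mW_\ell^{\leftrightarrow n}$, and within it $S$ is an interval because $\mW_\ell^{\leftrightarrow n}$ is prefix-partitioned by the induction hypothesis.

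The main obstacle is the bookkeeping around the length normalization and the identification $w\sim 0w$. Prefixes must be read on the padded representatives of length $L+\|\mW\|$, and a prefix $\tilde w$ of a given magnitude can match after padding with zeros. I would handle this by stating once that $[w]^i$ is computed on representatives of fixed length $\|\mW_{\ell+1}\|$, and checking that the two cases above cover all $i$ under that convention.
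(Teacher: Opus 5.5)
Your proof is correct and follows essentially the paper's own route. You reduce to $\mW\odot\mW_\ell$ by stability under deletion, then split on $i\geqslant\|\mW_\ell\|$ versus $i<\|\mW_\ell\|$. In the first case prefix-partition of $\mW$ gives a run of consecutive full blocks. In the second, absence of repeated elements confines everything to a single block $w_n\mW_\ell^{\leftrightarrow n}$, where the induction hypothesis applies.

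Drop your opening claim that a sublist of a prefix-partitioned list need not be prefix-partitioned: it is false, and your own trace argument shows the property always passes to sublists (this is the paper's ``stable under deletion''). The padding worry is likewise moot, since $[w]^i$ is well defined on $\sim$-classes by computing it on any sufficiently padded representative.
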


\begin{proof} Assume the property is satisfied for some $\mW_\ell$ with $\ell\in\N^*$. Since the property is stable under deletion, it is enough to show that $\mW\odot\mW_{\ell}$ is prefix-partitioned. Let $\tilde{w}\in\mA^*_\sim$ and $i\in\N$. For any list $\mV$ we write $[\mV]^i$ for the list $\{[v]^i\ :\ v\in\mV\}$. Assume $\tilde{w}\in[\mW\odot\mW_\ell]^i$. We therefore either have $\tilde{w}\in[\mW]^j$ for some $j\leqslant\|\mW\|$ (in the case $i\geqslant\|\mW_\ell\|$) or that there exists $w\in\mW$ and $w'\in\mA^{*}$ such that $\tilde{w}=ww'$ (in the case $i<\|\mW_\ell\|$).

In the first case, since, by hypothesis, $\mW$ is prefix-partitioned, the sublist $\mI\coloneqq\{v\in\mW\ :\ [v]^i=\tilde{w}\}$ is an interval of $\mW$, so we have $\mW=\mW'+\mI+\mW''$ for some lists $\mW'$ and $\mW''$ that do not contain any $v\in\mA^*$ such that $[v]^i=\tilde{w}$. Hence, we have
\begin{eqnarray*}
\mW\odot\mW_\ell&=&(\mW'+\mI+\mW'')\odot\mW_\ell\\
&=&\mW'\odot\mW_\ell+\mI\odot\mW_\ell^{\leftrightarrow\#(\mW')}+\mW''\odot\mW_\ell^{\leftrightarrow\#(\mW'+\mI)},
\end{eqnarray*}
\noindent so $\{v\in\mW\odot\mW_\ell\ :\ [v]^i=\tilde{w}\}=\mI\odot\mW_\ell^{\leftrightarrow\#(\mW')}$, which is an interval of $\mW\odot\mW_\ell$.

Now consider the second case. Since $\mW$ is prefix-partitioned without containing twice the same element we can write $\mW\eqqcolon\mX+\{w\}+\mY$ for some lists $\mX$ and $\mY$ not containing $w$. We can then write
\[\mW\odot\mW_\ell=(\mX+\{w\}+\mY)\odot\mW_\ell=\mX\odot\mW_\ell+w\mW_\ell^{\leftrightarrow\#(\mX)} +\mY\odot\mW_\ell^{\leftrightarrow\#(\mX)+1}.\]

By the properties of $\mX$ and $\mY$, the sublist of $\mW\odot\mW_\ell$ made of words $v$ such that $[v]^i=\tilde{w}$ is therefore contained in the interval $w\mW_\ell^{\leftrightarrow\#(\mX)}$. By the induction hypothesis, we also have $\mW_\ell=\mX'+w'\mW'+\mY'$ for some lists $\mX'$, $\mW'$ and $\mY'$, where neither $\mX'$ nor $\mY'$ contain any $v$ such that $[v]^{\|\mW_\ell\|-|w'|}=w'$. Therefore, we have
\[w\mW_\ell^{\leftrightarrow\#(\mX)}=w(\mX'+w'\mW'+\mY')^{\leftrightarrow\#(\mX)}=(w\mX'+\tilde{w}\mW'+w\mY')^{\leftrightarrow\#(\mX)},\]
\noindent so $\{v\in w\mW_\ell\ :\ [v]^i=\tilde{w}\}=\tilde{w}\mW'^{\leftrightarrow\#(\mX)}$, so it is an interval.
\end{proof}

\begin{definition}\label{GeomPower} The $\mZ$-geometric sequence with ratio $\mW$ is {\em $\mZ$-power-associative} if, for any $\ell$, $\ell'\in\N$, we have $\mW_{\ell+\ell'}=\mW_\ell\Godot\mW_{\ell'}$. In this case, we can write $\mW_\ell\eqqcolon\mW^{\Godot\ell}$ for $\ell\in\N$.
\end{definition}

For $\mZ$ empty we are back to the case of the standard $b$-ary Gray code of Corollary \ref{Exponentiation}, but since the $\mZ$-Gray product is not associative, not all $\mZ$-geometric sequence are $\mZ$-power-associative. A minimal counterexample is given by $\mZ=\{11\}$ and $\mW=\{0,1\}$, for which we have $\mW_3=\{000,001,010,101,100\}$ but $\mW_2\Godot\mW=\{000,001,010,100,101\}$. This can happen even  under what could appear to be a quite natural assumption on $\mZ$ bearing in mind Proposition \ref{ZGrayProdUnPeuAssociatif}. A minimal example of this is the $\mZ$-geometric sequence with ratio $\mW=\{0,1\}$, for $\mZ=\{111,101\}$: even if $\mW_1$, $\mW_2$ and $\mW_3$ do satisfy the assumption of Proposition \ref{ZGrayProdUnPeuAssociatif}, the sequence is not $\mZ$-power-associative since the four last elements of $\mW_5$ are $\{10010,10011,10001,10000\}$ whereas those of $\mW_4\Godot\mW$ are $\{10011,10010,10000,10001\}$.

Here is a quite general result.




\begin{theorem}\label{ClassExp} Let $b\geqslant 2$ be an integer, let $\mZ\subset\mG_k$ for some $k\in\N^*$, satisfying either that $\mZ=\rrbracket z,\max(\mG_k)\rrbracket_{\mG_k}$ for some $z\in\mG_k$ or that, for some list $\mY\subset\mG_k\cap\Sigma^{\text{-}1}(0)$, we have $\mZ=\mY\ \!\cup\ \!\suc_{\mG_k}(\mY)$, with the complementary assumption in this case that, for any $w\in\mA^k\backslash\mY$ and any $\alpha\in\llbracket 0,b/2\llbracket$, we have $[\alpha w]^1\in\mZ\Longleftrightarrow [\alpha\ \!\suc_{\mG_k}(w)]^1\in\mZ$. The $\mZ$-geometric sequence with ratio $\mA$ is $\mZ$-power-associative.
\end{theorem}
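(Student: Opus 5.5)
We aim to show $\mW_{\ell+\ell'}=\mW_\ell\Godot\mW_{\ell'}$ for all $\ell,\ell'$, where $\mW_\ell$ is the $\mZ$-geometric sequence with ratio $\mA$. We proceed by induction on $\ell$, the cases $\ell\in\{0,1\}$ being trivial or true by definition. Suppose $\mW_{\ell+\ell'}=\mW_\ell\Godot\mW_{\ell'}$ holds for all $\ell'$. Then
\[\mW_{\ell+1+\ell'}=\mA\Godot\mW_{\ell+\ell'}=\mA\Godot(\mW_\ell\Godot\mW_{\ell'}),\]
and we want this to equal $(\mA\Godot\mW_\ell)\Godot\mW_{\ell'}=\mW_{\ell+1}\Godot\mW_{\ell'}$. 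This is an associativity instance with $\mW=\mA$, $\mW'=\mW_\ell$ and $\mW''=\mW_{\ell'}$. By Proposition \ref{ZGrayProdUnPeuAssociatif}, it suffices to check that in the $\mZ$-decomposition $(\mV_j)_{j\in\llbracket 0,2m\rrbracket}$ of $\mA\odot\mW_\ell$, every removed block $\mV_{2j}$ with $j<m$ has even cardinality.

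The core of the proof is therefore a structural lemma. Because $\mZ\subset\mG_k$ consists of words of length exactly $k$, a word of $\mA\odot\mW_\ell$ (with $\mW_\ell$ already $\mZ$-free) is removed iff its leftmost $k$ letters lie in $\mZ$. Assume inductively that $\mW_\ell$ is an increasing sublist of $\mG_\ell$ for the Gray order. By Theorem \ref{OrdreGrayProdGray} and Proposition \ref{IndexParityPreservation}, it then suffices to prove jointly that $\mW_\ell$ has the index parity property and that removed words occur in consecutive pairs, except possibly at the very end. Prefix-partitioning (Proposition \ref{ZGeomPrefPart}) shows that the words of $\alpha\mW_\ell^{\leftrightarrow\alpha}$ sharing a given length-$k$ prefix form an interval. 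So the removed blocks are unions of such prefix intervals, and it is enough to control the parity of each such interval's size.

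We then treat the two shapes of $\mZ$ separately.

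\textbf{First shape: $\mZ=\rrbracket z,\max(\mG_k)\rrbracket_{\mG_k}$.} Here $\mZ$ is a terminal Gray interval. Since the prefixes encountered in $\alpha\mW_\ell^{\leftrightarrow\alpha}$ follow the Gray order, or its reverse when the index parity flips, the forbidden prefixes appear as a single block. This block sits either at the end of a copy, immediately followed by a mirrored copy that begins with the same forbidden block, or at the end of the whole list. In the first case the two halves merge into one block of size $2s$, which is even; in the second case it is the final block $\mV_{2m}$, which is exempt.

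\textbf{Second shape: $\mZ=\mY\cup\suc_{\mG_k}(\mY)$ with $\Sigma(\mY)=0$.} Each $y\in\mY$ is followed in Gray order by $\suc(y)$, so forbidden prefixes come in adjacent pairs $(y,\suc(y))$. The complementary hypothesis on $[\alpha w]^1$ is designed so that the interval of continuations of $y$ and that of $\suc(y)$, namely their extensions in $\mW_\ell$, have equal size. Hence each removed block has size $2s$. The condition $\alpha<b/2$ together with the mirror symmetry handles the reversed copies.

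The main obstacle is this equal-size claim in the second shape. Mirroring at $\mG_{k}$ level makes $y$ and $\suc(y)$ differ in one digit, and we must show that the sets of their allowed extensions (words $v$ with $yv$, resp. $\suc(y)v$, being $\mZ$-free) correspond bijectively, even when the forbidden factor straddles the boundary. We would first establish by induction on length, using the stated equivalence $[\alpha w]^1\in\mZ\Leftrightarrow[\alpha\,\suc(w)]^1\in\mZ$, that the map swapping $w\leftrightarrow\suc(w)$ on the relevant letter preserves $\mZ$-freeness of all factors. The index parity hypothesis $\Sigma(y)=0$ ensures the pair is aligned, with $y$ first, so that the two blocks are adjacent rather than split.
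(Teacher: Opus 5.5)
Your reduction is sound and differs from the paper's. You apply Proposition \ref{ZGrayProdUnPeuAssociatif} inductively to $(\mA,\mW_\ell,\mW_{\ell'})$. The paper instead observes that $\mW_{\ell+\ell'}$ and $\mW_\ell\Godot\mW_{\ell'}$ coincide as sets and are both Gray-increasing (Theorem \ref{OrdreGrayProdGray}) once every $\mW_\ell$ has the index parity property. By Proposition \ref{IndexParityPreservation}, both routes need in substance the same input: every maximal run of removed words in $\mA\odot\mW_\ell$, except one at the very end, has even length.

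Your justification of that input fails in the second shape. You claim the words of $\mA\odot\mW_\ell$ with leftmost $k$-factor $y$ and those with leftmost $k$-factor $\suc_{\mG_k}(y)$ are equally numerous; this is false. Take $b=2$, $k=3$, $\mY=\{110\}$, so $\mZ=\{110,111\}$. This is admissible: $\Sigma(110)=0$, $\suc_{\mG_3}(110)=111$, and the complementary condition only involves $\alpha=0$, where neither side can lie in $\mZ$. Then
$\mW_3=(000,001,011,010,101,100)$ and $\mW_4=(0000,0001,0011,0010,0101,0100,1010,1011,1001,1000)$.
In $\mA\odot\mW_4$, exactly four words ($11000,11001,11011,11010$) have leftmost factor $110$, and none has leftmost factor $111$, since no word of $\mW_4$ begins with $11$. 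The removed run does have even length $4$, but not for your reason. Your proposed bijection is also ill-posed: since $y\in\mZ$, no word $yv$ is $\mZ$-free. What would have to be compared are extensions of the last $k-1$ letters of $y$ and of $\suc(y)$ (here $10$ and $11$), and these genuinely differ.

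The missing idea is the invariant the paper carries: $\mW_\ell$ is a concatenation of pairs $\{h,\suc_{\mG_\ell}(h)\}$ with $\Sigma(h)=0$.
\begin{itemize}
\item Each copy $\alpha\mW_\ell^{\leftrightarrow\alpha}$ is then a concatenation of blocks $\alpha\{h,\suc(h)\}$ or $\alpha\overleftarrow{\{h,\suc(h)\}}$.
\item By Proposition \ref{CharacterizationsGrayOrder}, each such block is again of the form $\{h',\suc_{\mG_{\ell+1}}(h')\}$ with $\Sigma(h')=0$.
\item The complementary hypothesis is used precisely to guarantee that the two members of such a pair are both removed or both kept.
\end{itemize}
Hence removal deletes whole pairs, every removed run has even length, and the invariant passes to $\mW_{\ell+1}$. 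The hypothesis concerns the two members of one pair, not the continuation counts of $y$ and $\suc(y)$.

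Your first-shape picture has a related defect. Forbiddenness of $\alpha w$ depends on $\alpha$, so the next mirrored copy does not ``begin with the same forbidden block''. The correct reason is this: $\mA\odot\mW_\ell$ is Gray-increasing, and the leftmost $k$-factor is Gray-nondecreasing along it (third point of Proposition \ref{CharacterizationsGrayOrder}). So the words whose leftmost factor is $\succ z$ form a single terminal segment, which is the exempt final block. Your conclusion survives, but only for this reason.
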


Before proving this theorem, let us make a few observations.

It is not possible to unify directly the two cases in a single statement, as the case $b=2$ and $\mZ=\{1100,1101,1000\}$ shows: in this case, $\mW_6$ and $\mW_5\Godot\mW$ are different from their terms indexed by $21$ onwards, so the sequence $(\mW_\ell)_{\ell\in\N}$ is not $\mZ$-power-associative.

The case $\mZ=\rrbracket z,\max(\mG_k)\rrbracket_{\mG_k}$ contains the $(k+1)$-bonacci case (with $b=2$ and $\mZ=\{10^{k-1}\}$), together with many others given by taking $\#(\mZ)>1$ and/or $b>2$. For example, the case $b=3$ and $z=20$ shows that the $\mZ$-geometric sequence of ternary words not in $\Fact_{\{0,1,2\}}(\{22,21\})$ (whose combinatorics corresponds to the linear recurring sequence $(u_\ell)_{\ell\in\N}$ defined by $u_0=1$, $u_1=2$ and $u_{\ell+2}=2u_{\ell+1}+u_\ell$ for $\ell\in\N$) is power-associative.

When $b$ is even, to recover power-associativity when $\mZ$ is not of one of the two forms considered in Therem \ref{ClassExp}, we may replace $\mZ$ by $\mZ'\coloneqq\mZ\odot\mA$, which leads to a $\mZ'$-geometric sequence whose properties remain quite close to the ones of the $\mZ$-geometric sequence.

Eventually, the assumption that $\mW=\mA$ is not that restrictive since, for a finite list $\mW$ containing $\varepsilon$, strictly increasing for the Gray order and satisfying the index parity property, we can consider its increasing and one-to-one recodage to $\llbracket 0,\#(\mW)\llbracket$ and define a new form of $\mZ$ accordingly. Only the case $\varepsilon\notin\mW$ deserve specific consideration, briefly done after the following proof.

\begin{proof}[Proof of Theorem \ref{ClassExp}.] Theorem \ref{OrdreGrayProdGray} gives by induction that, for any $\ell\in\N$, $\mA_\ell$ is increasing for the Gray order. Hence, since $\mA_{\ell+\ell'}$ and $\mA_\ell\Godot\mA_{\ell'}$ are equal as sets for any $\ell$, $\ell'\in\N$, we only have to show that $\mA_\ell\Godot\mA_{\ell'}$ too is increasing for the Gray order to obtain the equality of the two lists. Again by Theorem \ref{OrdreGrayProdGray} it is therefore sufficient to prove that $\mA_\ell$ satisfies the index parity property for any $\ell\in\N$. For any $\ell< k$ we have $\mA_\ell=\mG_\ell$, so this is true in this case. The hypotheses on $\mZ$ implies that this is true as well for $\ell=k$.

In the case $\mZ=\rrbracket z,\max(\mG_k)\rrbracket_{\mG_k}$, then, by induction, $\mA_\ell$ is an interval of $\mG_\ell$ containing $\varepsilon$, hence satisfies the index parity property, so we are done in this case.

Consider the second form for $\mZ$. Considering $\mA_k$ as the complement of $\mZ$ in $\mG_k$ gives that, for $\ell=k$, there exists $\mH_\ell\subset\mG_\ell$ such that 
\[\mA_\ell=\sum_{w\in\mH_\ell}\left\{w,\suc_{\mG_k}(w)\right\}\]
\noindent where $\Sigma(\mH_\ell)=0$. By induction, let assume that such a decomposition exists for some $\ell\geqslant k$, and let us prove that it exists also for $\mA_{\ell+1}$, so that the theorem is proved. Put $\mP_w\coloneqq\{w,\suc_{\mG_k}(w)\})$ for all $w\in\mH_\ell$ and $\beta\coloneqq\lfloor b/2\rfloor$. Assuming $(2\beta)\mX$ is the empty list when $b$ is even (whatever the list $\mX$ is), we can write:

\begin{eqnarray*}
\lefteqn{\mA\odot\mA_\ell}\\
&=&\sum_{\alpha\in\mA}\alpha\mA_\ell^{\leftrightarrow \alpha}\\
&=&\sum_{\alpha'\in\llbracket 0,\beta\llbracket}\left((2\alpha')\mA_\ell+(2\alpha'+1)\overleftarrow{\mA_\ell}\right)+(2\beta)\mA_\ell\\
&=&\sum_{\alpha'\in\llbracket 0,\beta\llbracket}\left((2\alpha')\left(\sum_{w\in\mH_\ell}\mP_w\right)+(2\alpha'+1)\overleftarrow{\sum_{w\in\mH_\ell}\mP_w}\right)+(2\beta)\mA_\ell \\
&=&\sum_{\alpha'\in\llbracket 0,\beta\llbracket}\left(\sum_{w\in\mH_\ell}(2\alpha')\mP_w+\sum_{w\in\overleftarrow{\mH_\ell}}(2\alpha'+1)\overleftarrow{\mP_w}\right)+\sum_{w\in\mH_\ell}(2\beta)\mP_w.
\end{eqnarray*}

Proposition \ref{CharacterizationsGrayOrder} gives that, in this expression, the terms $(2\alpha')\mP_w$, $(2\alpha'+1)\overleftarrow{\mP_w}$ and $(2\beta)\mP_w$ are either empty or of the form $\left\{h,\suc_{\mG_{k+1}}(h)\right\}$ with $\Sigma(h)=0$. Now, to get $\mA_{\ell+1}$, we have to compute $\mA\Godot\mA_\ell$, i.e. to remove, in the previous expression, all the words in $\Fact_\mA(\mZ)$. To ensure that this does not break the properties of our decomposition, it is sufficient to show that, when proceeding to the removal, each term $(2\alpha')\mP_w$, $(2\alpha'+1)\overleftarrow{\mP_w}$ and $(2\beta)\mP_w$ either remains in full or is fully cancelled (i.e. without one of its elements remaining and the other one being removed).

Consider for example the case of $(2\alpha')\mP_w\eqqcolon\{(2\alpha')w,(2\alpha')w'\}$ (the others being similar), and write $\widetilde{w}\coloneqq[w]^{\ell-(k-1)}$ and $\widetilde{w'}\coloneqq[w']^{\ell-(k-1)}$. Since $\mP_w\subset\mA_\ell$, the word $(2\alpha')w$ is to be removed iff $(2\alpha')\widetilde{w}\in\mZ$. Similarly, $(2\alpha')w'$ is to be removed iff $(2\alpha')\widetilde{w'}\in\mZ$. Of course, if $\widetilde{w}=\widetilde{w'}$, then either both elements of $(2\alpha')\mP_w$ thus belong to $\mA_{\ell+1}$ or none of them, so we are done in this case. If $\widetilde{w}\neq\widetilde{w'}$, then, by Proposition \ref{CharacterizationsGrayOrder}, $\widetilde{w'}=\suc_{\mG_k}(\widetilde{w})$, so the conclusion is given by the complementary assumption made on $\mZ$.
\end{proof}

When $\mZ$ is an interval of $\mG_k$ containing $\varepsilon$, the corresponding $\mZ$-geometric sequence with ratio $\mA$ is not increasing in the sense of inclusion, so it is natural in this case to sum the terms to get a meaningful Gray code. We do not try to provide a general result here, only state a result in the $k$-bonacci case that can be regarded as a synthesis of the two different definitions of the Fibonacci (or $k$-bonacci) Gray code as defined in Section \ref{IntroOperations} as well as Proposition \ref{ZFibo}:


\begin{theorem}\label{kBonacciAvec0k} Put $b=2$, so $\mA=\{0,1\}$. Let $\mZ\coloneqq\{\varepsilon\}\subset\mG_k$ and $\mZ'\coloneqq\{10^k\}\subset\mG_{k+1}$, defining the $\mZ$-geometric sequence $(\mA_\ell)_{\ell\in\N}$ and the $\mZ'$-geometric sequence $(\mA'_\ell)_{\ell\in\N}$, both with ratio $\mA$. For any $\ell\geqslant k$ we have
\[\mA'_\ell=\mA'_{\ell-k}+ \mA_{\ell}.\]
\end{theorem}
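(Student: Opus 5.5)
The plan is to reformulate the statement in terms of the magnitude layers $\mM_j$ of Proposition \ref{ZFibo}, taken with $k+1$ in place of $k$ since $\mZ'=\{10^k\}$. I read $\mZ=\{\varepsilon\}\subset\mG_k$ as the single word $0^k$ of length $k$. By Proposition \ref{ZFibo}, $\mA'_\ell=\mF_\ell=0\mF_{\ell-1}+\mM_\ell$. Unrolling this gives $\mA'_\ell=\sum_{j\in\llbracket 0,\ell\rrbracket}\mM_j$, with $\mM_j$ the sublist of elements of magnitude exactly $j$. In particular $\mA'_{\ell-k}$, padded with leading zeroes, is exactly $\sum_{j\in\llbracket 0,\ell-k\rrbracket}\mM_j$. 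The theorem is therefore equivalent to the identity
\[\mA_\ell=\sum_{j\in\rrbracket \ell-k,\ell\rrbracket}\mM_j\qquad(\ell\geqslant k),\]
which says that $\mA_\ell$ consists of the layers of magnitude $\ell-k+1,\dots,\ell$, in increasing order of magnitude. I would prove this identity by induction on $\ell$.

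For the base case $\ell=k$: for $\ell<k$ nothing is removed, so $\mA_\ell=\mG_\ell$, and the only removal from $\mA\odot\mG_{k-1}=\mG_k$ is the word $0^k$, i.e. its first element $\varepsilon$. On the other side, $\mA'_k=\mG_k$ and $\mM_0=\{\varepsilon\}$, so $\mG_k$ minus its first element is $\sum_{j\in\rrbracket 0,k\rrbracket}\mM_j$, as required.

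For the inductive step, assume the identity for some $\ell\geqslant k$ and compute $\mA_{\ell+1}=\Godot(0\mA_\ell+1\overleftarrow{\mA_\ell})$, writing every word with length exactly $\ell+1$.

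In the block $0\mA_\ell$, a word $0w$ contains the factor $0^k$ iff $w$ begins with $0^{k-1}$. By the induction hypothesis, $w$ contains no $0^k$, so these words are exactly those of magnitude $\ell-k+1$, namely the initial block $\mM_{\ell-k+1}$. Hence this half contributes $\sum_{j\in\rrbracket \ell-k+1,\ell\rrbracket}\mM_j$.

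In the block $1\overleftarrow{\mA_\ell}$, nothing is removed: a factor $0^k$ in $1w$ would have to lie inside $w$, and $w$ avoids $0^k$. By the induction hypothesis this block equals
\[1\sum_{i\in\llbracket 0,k\llbracket}\overleftarrow{\mM_{\ell-i}},\]
where the reversal turns the increasing sum of layers into the decreasing one. Written with leading zeroes, $1\overleftarrow{\mM_{\ell-i}}$ is $10^i\overleftarrow{\mM_{\ell-i}}$. So this block is exactly the defining formula for $\mM_{\ell+1}$ in Proposition \ref{ZFibo} (with $k$ there replaced by $k+1$, i.e. $\min(k+1,\ell+1)=k+1$ terms once $\ell\geqslant k$; here the sum runs over $i\in\llbracket 0,k\llbracket$). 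Adding the two halves gives $\mA_{\ell+1}=\sum_{j\in\rrbracket \ell+1-k,\ell+1\rrbracket}\mM_j$, which closes the induction.

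The step I expect to need the most care is the padding and length bookkeeping. First, the filter $\Godot$ tests $[w]_{\|\mW\|}$, i.e. fixed-length representatives, so leading zeroes count toward the factor $0^k$. Second, the index ranges in Proposition \ref{ZFibo} must be matched after the shift $k\mapsto k+1$, and the elements $1\overleftarrow{\mM_{\ell-i}}$ must be read as $10^i\overleftarrow{\mM_{\ell-i}}$. The combinatorial content itself is light: a word avoiding $0^k$ has a magnitude within $k-1$ of its length, and the removed words are exactly the lowest such layer.
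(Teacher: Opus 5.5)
Your proof is correct apart from one index slip, and it takes a different route from the paper's.

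\textbf{The slip.} Proposition \ref{ZFibo} must be applied with the same $k$, not with $k+1$. Its forbidden word is $10^k$, which is exactly $\mZ'$, so it gives $\mF_\ell=\mA'_\ell$ directly. For $\ell\geqslant k$ its $\mM_{\ell+1}$ has $\min(k,\ell+1)=k$ terms, indexed by $i\in\llbracket 0,k\llbracket$, and this is precisely your block $1\overleftarrow{\mA_\ell}$. With the shift you announce, $\mF_\ell$ would be the $\{10^{k+1}\}$-avoiding list rather than $\mA'_\ell$. Also $\mM_{\ell+1}$ would gain the extra term $10^k\overleftarrow{\mM_{\ell-k}}$, so your identification in the inductive step would fail. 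This is why your parenthetical, which claims $k+1$ terms yet sums over $i\in\llbracket 0,k\llbracket$, contradicts itself. Every formula you actually write is the unshifted one, so simply delete the shift.

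\textbf{Comparison with the paper.} The paper never computes lists explicitly. Its argument runs as follows.
\begin{itemize}
\item By Theorem \ref{OrdreGrayProdGray}, $\mA_\ell$, $\mA'_{\ell-k}$ and $\mA'_\ell$ are all increasing for the Gray order.
\item Words of $\mA'_{\ell-k}$ have magnitude at most $\ell-k$, while words of $\mA_\ell$ have magnitude greater than $\ell-k$. Hence $\mA'_{\ell-k}+\mA_\ell$ is also Gray-increasing.
\item It then only remains to show equality as sets. A word of $\mA'_\ell$ of magnitude $>\ell-k$ has the form $0^{k'}1w'$ with $k'<k$ and $w'$ free of $0^k$.
\end{itemize}

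Your proof is instead a direct induction on the list structure. It goes through Proposition \ref{ZFibo} and tracks the two halves $0\mA_\ell$ and $1\overleftarrow{\mA_\ell}$ of the Gray product. What your route buys:
\begin{itemize}
\item It needs none of the Gray-order machinery.
\item It proves the finer identity $\mA_\ell=\sum_{j\in\rrbracket \ell-k,\ell\rrbracket}\mM_j$. This shows explicitly that the $\{0^k\}$-geometric sequence consists of the top $k$ magnitude layers of the $k$-bonacci Gray code, which is the ``synthesis'' the paper mentions.
\item It handles the leading-zero convention of $\Godot$ explicitly, whereas the paper's set-level argument sidesteps that bookkeeping.
\end{itemize}

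What the paper's route buys: it is shorter, it does not depend on knowing the layer recursion, and it reuses the ``monotone plus equal as sets'' template from Theorem \ref{ClassExp}.
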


\begin{proof} By Theorem \ref{OrdreGrayProdGray}, both $\mA_\ell$, $\mA'_{\ell-k}$ and $\mA'_\ell$ are increasing for the Gray order. Moreover, for $w\in\mA_\ell$ and $w'\in\mA'_{\ell-k}$ we have $\|w\|>\ell-k$ (by definition of $\mZ$) and $\|w'\|\leqslant\ell-k$, so $w\prec w'$. Therefore, it only remains to show that $\mA'_\ell$ and $\mA'_{\ell-k}+\mA_\ell$ are equal as sets. Of course we have that, as sets, $\mA'_{\ell-k}$ and $\mA_\ell$ are parts of $\mA'_\ell$. Now, let $w\in\mA'_\ell$. If $\|w\|\leqslant \ell-k$ then $w\in\mA'_{\ell-k}$,  so assume $\|w\|>\ell-k$. Hence, there exists $k'\in\llbracket 0,k\llbracket$ and $w'$ such that $w=0^{k'}1w'$. Moreover, we must have $w'\notin\Fact_{\{0,1\}}(\{0^k\})$, otherwise $10^k$ would be a factor of $w$. Hence, $w'\in\mA_\ell$, and we are done.
\end{proof}


\subsection{The flipping digit property}\label{TheFlippingDigitsProperty}

We still use the notations of the previous section. By Proposition \ref{ZGeomPrefPart}, whatever $\mZ$ is, for any word $\tilde{w}$ and any $\ell\in\N$, the sublist $\tilde{w}\mA_\ell(\tilde{w})^{\leftrightarrow\Sigma(\tilde{w})}\coloneqq\{w\in\mA_\ell\ :\ [w]^{\ell-|\tilde{w}|}=\tilde{w}\}$ is an interval of $\mA_\ell$, so there exists a list $\mH_\ell\subset\mG_k$ such that $h\mA_\ell(h)^{\leftrightarrow\Sigma(h)}$ is nonempty for any $h\in\mH_\ell$ and such that $\mA_\ell=\sum_{h\in\mH_\ell}h\mA_\ell(h)^{\leftrightarrow\Sigma(h)}$. Note that, for $h\in\mG_{k+1}$, $\mA_\ell(h)=\varnothing$ does not implies $h\in\mZ$, as the example $\mA=\llbracket 0,1\llbracket$, $\mZ=\{110,111\}$ and $\mA_4(011)=\varnothing$ easily shows.

With these notations we can state the following

\begin{theorem}\label{FlidGen} Let $\mK_\ell$ be the sublist of $\mA\otimes\mH_\ell$ containing the pairs $(\alpha,h)$ such that $\alpha h\notin\Fact_\mA(\mZ)$. For a given $\ell\in\N$, let $\mI\subset\N^*$ be a set containing $H(\mA_\ell)$ and such that, for any consecutive elements $(\alpha,h)$ and $(\alpha',h')$ of $\mK_\ell$ we have
\begin{equation}\label{HammingStable}\tag{\textasteriskcentered}
H\Big(\Last\big(\alpha h\mA_\ell(h)^{\leftrightarrow \Sigma(\alpha h)}\big),\Init\big(\alpha'h'\mA_\ell(h')^{\leftrightarrow\Sigma(\alpha' h')}\big)\Big)\in\mI.
\end{equation}
We then have $H(\mA_{\ell+1})\subset\mI$.

 





\end{theorem}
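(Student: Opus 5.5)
The plan is to write $\mA_{\ell+1}=\mA\Godot\mA_\ell$ explicitly as a concatenation of blocks indexed by $\mK_\ell$, and then to split the consecutive pairs of $\mA_{\ell+1}$ into two kinds: pairs inside a block and pairs straddling two blocks.

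First I will establish the decomposition. Start from $\mA_\ell=\sum_{h\in\mH_\ell}h\mA_\ell(h)^{\leftrightarrow\Sigma(h)}$, given by Proposition \ref{ZGeomPrefPart}. Distributing the Gray product (Proposition \ref{DistribGrayProd}) gives
\[\mA\odot\mA_\ell=\sum_{\alpha\in\mA}\alpha\,\mA_\ell^{\leftrightarrow\alpha}=\sum_{(\alpha,h)\in\mA\otimes\mH_\ell}\alpha h\,\mA_\ell(h)^{\leftrightarrow\Sigma(h)+\alpha},\]
where $\Sigma(h)+\alpha=\Sigma(\alpha h)$ mod $2$. Reversing a sum reverses both the order of the terms and each term, and that reversal of order is exactly what the Gray cartesian product $\otimes$ records. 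Next I apply $\Godot$, which commutes with sums. The key claim here is that, inside a block $\alpha h\mA_\ell(h)^{\leftrightarrow\Sigma(\alpha h)}$, either every word survives or none does. The words of $\mA_\ell$ already avoid $\Fact_\mA(\mZ)$, and $\mZ\subset\mG_k$. So any new occurrence of a forbidden factor in a word $\alpha v$ must use the leading letter $\alpha$. It is therefore determined by the first $k$ letters of $\alpha v$, namely by $\alpha h$, provided $h$ is taken as the length-$(k-1)$ prefix. This is the point where I must check that $\mH_\ell$ is read as the list of prefixes of length $k-1$, so that the choice of $\mG_k$ versus $\mG_{k-1}$ in the setup is consistent. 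Granting this, I obtain
\[\mA_{\ell+1}=\sum_{(\alpha,h)\in\mK_\ell}\alpha h\,\mA_\ell(h)^{\leftrightarrow\Sigma(\alpha h)}.\]

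Then I split the consecutive pairs of $\mA_{\ell+1}$. \emph{Pairs inside one block.} Both words share the prefix $\alpha h$, and their suffixes are consecutive in $\mA_\ell(h)$ or in its reverse. The same pair, with prefix $h$ instead of $\alpha h$, appears consecutively in $\mA_\ell$, because the block $h\mA_\ell(h)^{\leftrightarrow\Sigma(h)}$ is an interval of $\mA_\ell$. Hence its Hamming distance lies in $H(\mA_\ell)\subset\mI$, since reversal does not change which distances occur. \emph{Pairs straddling two blocks.} These are exactly the $\Last$ of one block followed by the $\Init$ of the next, for pairs $(\alpha,h)$ and $(\alpha',h')$ that are consecutive in $\mK_\ell$. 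By hypothesis (\ref{HammingStable}) their distance lies in $\mI$. Combining the two cases gives $H(\mA_{\ell+1})\subset\mI$.

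The main obstacle I expect is the all-or-nothing claim for the removal step. If it failed, the surviving part of a block would not be a full reversed copy of $\mA_\ell(h)$, and the inner consecutive pairs could be new ones. I would therefore first pin down precisely the prefix length defining $\mH_\ell$ relative to $k$, and then verify that a factor in $\Fact_\mA(\mZ)$ created by prepending $\alpha$ lies within the first $k$ letters.
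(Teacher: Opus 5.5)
Your proposal is correct and is essentially the paper's proof. The paper also expands $\mA\odot\mA_\ell$ over $\mA\otimes\mH_\ell$, uses that $\Godot$ removes each block $\alpha h\mA_\ell(h)^{\leftrightarrow\Sigma(\alpha h)}$ either entirely or not at all to obtain $\mA_{\ell+1}=\sum_{(\alpha,h)\in\mK_\ell}\alpha h\mA_\ell(h)^{\leftrightarrow\Sigma(\alpha h)}$, and then handles pairs inside a block by $H(\mA_\ell)\subset\mI$ and pairs across blocks by $(*)$.

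Your prefix-length worry is harmless: the all-or-nothing argument holds as soon as $|h|$ is at least the length of the words of $\mZ$ minus one. This covers the paper's convention of taking $h$ of the same length as those words (e.g. $\mH\subset\mG_{k+1}$ for $\mZ=\{10^k\}$).
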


Before proving this result, it should be remarked that numerous cases, it is quite easy to check that the property (\ref{HammingStable}) is satisfied, with the help of the following observations:

\begin{itemize}

\item $(\mH_\ell)_{\ell\in\N}$ is decreasing for inclusion, so $(\mK_\ell)_{\ell\in\N}$ is ultimately constant;

\item when $\alpha=\alpha'$ and $h$ and $h'$ are successive elements of $\mH_\ell$ (their order depending on the parity of $\alpha$), the relation (\ref{HammingStable}) is essentially a reformulation of the relation $\mA_\ell=\sum_{h\in\mH_\ell}h\mA_\ell(h)^{\leftrightarrow\Sigma(h)}$ and $H(\mA_\ell)\subset\mI$;

\item when $h=h'$, the relation (\ref{HammingStable}) is essentially equivalent to the assumption that $\alpha$ and $\alpha'$ are not of the same parity;

\item in several cases it is possible to know explicitely what the extremal elements of $\alpha h\mA_\ell(h)^{\leftrightarrow \Sigma(\alpha h)}$ are, to prove that (\ref{HammingStable}). It is the case, for example, when the words of $\mZ$ does not contain neither the letter $0$ nor the letter $b-1$.
\end{itemize}

As an exercise, let us show how to use Theorem \ref{FlidGen} to prove that the the $k$-bonacci Gray code satisfies the $\{1\}$-flipping digit property. We have $\mZ=\{10^k\}$, and it is immediate to prove that $\mH_\ell=\mH\coloneqq\mG_{k+1}\backslash\{10^k\}$ for any $\ell\geqslant k$ (since for any $h\in\mG_{k+1}\backslash\{10^k\}$ and any $\ell\geqslant k$ we have, for example, $h1^{\ell-(k+1)}\in\mA_\ell$). We also easily have, for any $\ell\geqslant k$:
\[\mK_{\ell}=\mK\coloneqq\sum_{h\in\mG_{k+1}\backslash\{10^k\}}(0,h)+\sum_{h\in\overleftarrow{\mG_{k+1}\backslash\{\varepsilon,10^k\}}}(1,h).\]

We also have $H(\mA_{k+1})=\{1\}$, so assume by induction that $H(\mA_\ell)=\{1\}$ for some $\ell>k$. Two consecutive elements $(\alpha,h)$ and $(\alpha',h')$ of $\mK$ satisfy one of the following conditions, each leading the the relation (\ref{HammingStable}):

\begin{itemize}

\item $\alpha=\alpha'=0$ and $h'=\suc_{\mG_{k+1}}(h)$. In this case, we have to prove that $H\left(\Last(h\mA_\ell(h)^{\leftrightarrow\Sigma(h)}),\Init(h'\mA_\ell(h')^{\leftrightarrow\Sigma(h')})\right)=1$, which is true by the induction hypothesis on $\mA_\ell=\sum_{g\in\mH}g\mA_\ell(g)^{\leftrightarrow\Sigma(g)}$.


\item $\alpha=0$, $\alpha'=1$ and $h=h'=10^{k-1}1$. In this case, we can simply use the third previous observation about the fact that $\alpha$ and $\alpha'$ are of different parity.

\item $\alpha=\alpha'=1$ and $h'=\pre_{\mG_{k+1}}(h)$, which is similar to the first case.

\end{itemize}

\begin{remark} A simple example of $\mZ$-geometric sequences $(\mA_\ell)_{\ell\in\N}$ for which $(H(\mA_\ell))_{\ell\in\N}$ is unbounded is given by $\mA=\{0,1\}$ and $\mZ=\{110\}$. It is an easy exercise to show that, in this case, for any $\ell\in\N^*$, $H(\mA_\ell)=\llbracket 1,\ell\llbracket$.
\end{remark}

\begin{proof}[Proof of Theorem \ref{FlidGen}.] By Proposition \ref{IndexParityPreservation} and Theorem \ref{OrdreGrayProdGray}, for any $\ell\in\N$ the list $\mA_\ell$ is increasing for the Gray order.  Therefore, we have
\begin{eqnarray*}
\mA_{\ell+1}&=&\Godot\left(\sum_{\alpha\in\mA}\alpha\mA_\ell^{\leftrightarrow \Sigma(\alpha)}\right)\\
&=&\Godot\left(\sum_{\alpha\in\mA}\alpha\left(\sum_{h\in\mH_\ell}h\mA_\ell(h)^{\leftrightarrow\Sigma(h)}\right)^{\leftrightarrow \Sigma(\alpha)}\right)\\
&=&\sum_{\alpha\in\mA}\sum_{h\in\mH_\ell^{\leftrightarrow \Sigma(\alpha)}}\Godot\left(\alpha h\mA_\ell(h)^{\leftrightarrow \Sigma(\alpha h)}\right)\\
&=&\sum_{(\alpha,h)\in\mA\otimes\mH_\ell}\Godot\left(\alpha h\mA_\ell(h)^{\leftrightarrow \Sigma(\alpha h)}\right).
\end{eqnarray*}

By definition of $\mA_\ell(h)$ we have
\[\Godot\left(\alpha h\mA_\ell(h)^{\leftrightarrow \Sigma(\alpha h)}\right)=\begin{cases}\varnothing & \text{if $\alpha h\in\Fact_\mA(\mZ)$}\\ \alpha h\mA_\ell(h)^{\leftrightarrow \Sigma(\alpha h)}&\text{otherwise}\end{cases}.\]
Hence, by definition of $\mK_\ell$ we have
\[\mA_{\ell+1}=\sum_{(\alpha,h)\in\mK_\ell}\alpha h\mA_\ell(h)^{\leftrightarrow \Sigma(\alpha h)}.\]

By induction hypothesis, for any $(\alpha,h)\in\mA\otimes\mH_\ell$ we have $H(\alpha h\mA_\ell(h)^{\leftrightarrow \Sigma(\alpha h)})\subset \mI$. From (\ref{HammingStable}) we then get the result.\end{proof}

%


\end{document}